\title{The Complexity of Geodesic Spanners}
\author{Sarita de Berg}{Department of Information and Computing Sciences, Utrecht University, The Netherlands}{S.deBerg@uu.nl}{}{}
\author{Marc van Kreveld}{Department of Information and Computing Sciences, Utrecht University, The Netherlands}{M.J.vanKreveld@uu.nl}{}{}
\author{Frank Staals}{Department of Information and Computing Sciences, Utrecht University, The Netherlands}{F.Staals@uu.nl}{}{}
\authorrunning{S. de Berg, M. van Kreveld, F. Staals}
\keywords{spanner, simple polygon, polygonal domain, geodesic distance, complexity} 
\newcommand {\mathset} [1] {\ensuremath {\mathbb {#1}}\xspace}
\newcommand {\R} {\mathset {R}}
\newcommand{\mkmcal}[1]{\ensuremath{\mathcal{#1}}\xspace}
\newcommand{\G}{\mkmcal{G}}
\newcommand{\T}{\mkmcal{T}}
\renewcommand{\P}{P}
\newcommand{\VD}{\ensuremath{\mkmcal{VD}}\xspace}
\newcommand{\HD}{\ensuremath{\mkmcal{HD}}\xspace}
\newcommand{\Int}{\text{Int}}
\newcommand{\ilambda}{\ensuremath{^{(i)}_\lambda}}
\newcommand{\thmheadfont}{\textcolor{lipicsGray}{$\blacktriangleright$}\nobreakspace\bfseries\sffamily}
\newenvironment{repeatenv}[2]%
  {\smallskip\noindent {\thmheadfont #1~\ref{#2}.}\ \slshape}
  {\normalfont}
\newcommand{\eps}{\ensuremath{\varepsilon}\xspace}
\newcommand{\etal}{et al.\xspace}
\def\polylog{\operatorname{polylog}}
\begin{document}

\maketitle

\begin{abstract}
A \emph{geometric $t$-spanner} for a set $S$ of $n$ point sites is an edge-weighted graph for which the (weighted) distance between any two sites $p,q \in S$ is at most $t$ times the original distance between $p$ and~$q$. We study geometric $t$-spanners for point sets in a constrained two-dimensional environment $P$. In such cases, the edges of the spanner may have non-constant complexity. Hence, we introduce a novel spanner property: the spanner \emph{complexity}, that is, the total complexity of all edges in the spanner. Let $S$ be a set of $n$ point sites in a simple polygon $P$ with $m$ vertices. We present an algorithm to construct, for any fixed integer $k \geq 1$, a $2\sqrt{2}k$-spanner with complexity $O(mn^{1/k} + n\log^2 n)$ in $O(n\log^2n + m\log n + K)$ time, where $K$ denotes the output complexity. When we relax the restriction that the edges in the spanner are shortest paths, such that an edge in the spanner can be any path between two sites, we obtain for any constant $\varepsilon \in (0,2k)$ a \emph{relaxed} geodesic $(2k + \eps)$-spanner of the same complexity, where the constant is dependent on $\varepsilon$. When we consider sites in a polygonal domain $P$ with holes, we can construct a relaxed geodesic $6k$-spanner of complexity $O(mn^{1/k} + n\log^2 n)$ in $O((n+m)\log^2n\log m+ K)$ time. Additionally, for any constant $\varepsilon \in (0,1)$ and integer constant $t \geq 2$, we show a lower bound for the complexity of any $(t-\varepsilon)$-spanner of $\Omega(mn^{1/(t-1)} + n)$.
\end{abstract}

\section{Introduction}
In the design of networks on a set of nodes, we often consider two criteria: few connections between the nodes, and small distances. Spanners are geometric networks on point sites that replace the small distance criterion by a small detour criterion.
Formally, a \emph{geometric $t$-spanner} for a set $S$ of $n$ point sites is an edge-weighted graph $\G = (S,E)$ for which the (weighted) distance $d_\G(p,q)$ between any two sites $p,q \in S$ is at most $t \cdot d(p,q)$, where $d(p,q)$ denotes the distance between $p$ and $q$ in the distance metric we consider~\cite{proximity_algorithms_book}. 
The smallest value $t$ for which a graph $\G$ is a $t$-spanner is called the \emph{spanning ratio} of $\G$. The number of edges in the spanner is called the \emph{size} of the spanner.

In the real world, spanners are often constructed in some sort of environment. For example, we might want to connect cities by a railway network, where the tracks should avoid obstacles such as mountains or lakes. One way to model such an environment is by a polygonal domain.
In this paper, we study the case where the sites lie in a polygonal domain~$P$ with $m$ vertices and $h$ holes,
and we measure the distance between two points $p,q$ by their \emph{geodesic
distance}: the length of the shortest path between $p$ and $q$ fully
contained within $P$. An example of such a spanner is provided in Figure~\ref{fig:geodesic_spanner}.

\begin{figure}
    \centering
    \includegraphics{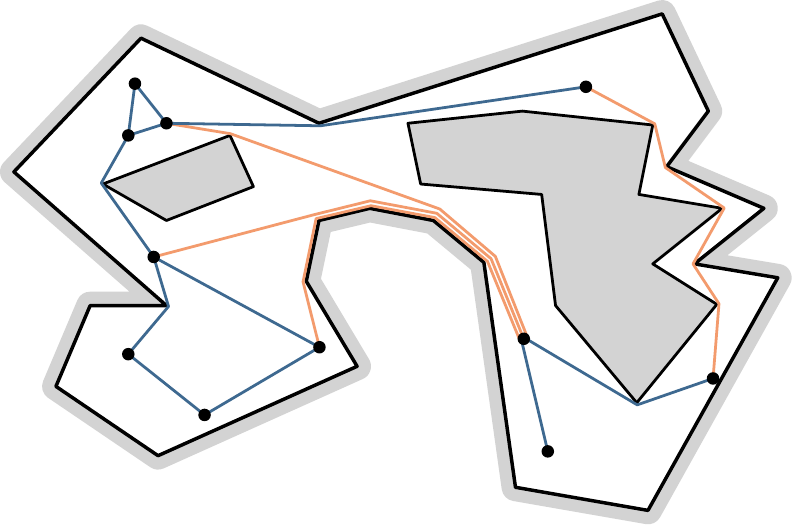}
    \caption{A spanner on a set of point sites in a polygonal domain. Because of the orange edges, the spanner has a relatively high complexity.}
    \label{fig:geodesic_spanner}
\end{figure}

The spanning ratio and the size of spanners are not the only properties of
interest. Many different properties have been
studied, such as total weight (or lightness), maximum degree, (hop)
diameter, and
fault-tolerance~\cite{Arya95short_thin_lanky,Bose05Bounded_degree_low_weight, survey_geometric_spanners, Chan15Doubling_spanners,Elkin15optimal_short_thin_lanky,Le19truly_opt_Euclidean,Levcopoulos98_fault_tolerant,book_spanners}. When
we consider distance metrics for which the edges in the spanner no
longer have constant complexity, another interesting property of
spanners arises: the spanner \emph{complexity}, i.e.\ the total
complexity of all edges in the spanner. In our railway example, this corresponds to the total number of bends in the tracks. A spanner with a low number of bends may be desired, as trains can drive faster on straight tracks, and it makes construction cheaper.
In this paper, we study this
novel property for point sites in a polygonal domain, where the complexity
of an edge is simply the number of line segments in the path. In this
setting, a single edge may have complexity $\Theta(m)$. Naively, a
spanner of size $E$ could thus have complexity $\Theta(mE)$. Our goal
is to compute an $O(1)$-spanner of size $O(n \polylog n)$ with small
complexity, preferably near linear in both $n$ and $m$.

When studying spanning trees of points, two variants exist: with or
without Steiner points. The same is true for spanners, where Steiner points can be used to obtain lighter and sparser spanners~\cite{Bhore22Steiner_spanners_light_sparse,Le19truly_opt_Euclidean}. 
In this paper we focus on the variant where Steiner points
are \emph{not} allowed, leaving the other variant to future research.

\subparagraph{Related work.} For the Euclidean distance in $\R^d$, and
any fixed $\varepsilon > 0$, there is a $(1 + \varepsilon)$-spanner of
size $O(n/\eps^{d-1})$~\cite{book_spanners}. For the more general case
of metric spaces of bounded doubling dimension we can also construct a
$(1 + \varepsilon)$-spanner of size
$O(n/\eps^{O(d)})$~\cite{RoutingInDoublingMetrics,bounded_doubling_optimal_dynamic,FastConstructionDoublingMetrics}.
These results do not apply when the sites lie in a polygon,
and we measure their distances using the geodesic distance.  Abam
\etal~\cite{SpannerPolygonalDomain} show there is a set of $n$ sites
in a simple polygon $P$ for which any geodesic
$(2-\varepsilon)$-spanner has $\Omega(n^2)$ edges. They also construct
a geodesic $(\sqrt{10} +\eps)$-spanner of size $O(n \log^2 n)$ for
sites in a simple polygon, and a geodesic $(5 + \eps)$-spanner of size
$O(n \sqrt{h} \log^2n)$ for sites in a polygonal domain.  Recently,
Abam, de Berg, and Seraji~\cite{SpannerPolyhedralTerrain} showed that a geodesic
$(2+\varepsilon)$-spanner with $O(n\log n)$ edges exists for points on
a polyhedral terrain, thereby almost closing the gap between the upper
and lower bound on the spanning ratio. However, they show only the
existence of such a spanner, and leave constructing one open. Moreover, all of these spanners can have high, $\Omega(nm)$,
complexity.

Abam, de Berg, and Seraji~\cite{SpannerPolyhedralTerrain} make
use of spanners on an \emph{additively weighted} point set in~$\R^d$. In this
setting, the distance between two sites $p,q$ is $w(p) + |pq| + w(q)$
for $p \neq q$, where $w(p)$ is the non-negative weight of a site
$p \in S$ and $|pq|$ denotes the Euclidean distance, and $0$ for
$p=q$. Such additively weighted spanners were studied before by Abam
\etal~\cite{SpannerAdditivelyWeighted}, who obtain a
$(5 + \eps)$-spanner of linear size, and a $(2+\eps)$-spanner of
size $O(n\log n)$. They also provide a lower bound of
$\Omega(n^2)$ on the size of any $(2-\eps)$-spanner. Abam
\etal~\cite{SpannerPolyhedralTerrain} improve these results and obtain
a nearly optimal additively weighted $(2 + \eps)$-spanner of size~$O(n)$.

The other key ingredient for the geodesic $(2+\eps)$-spanner of Abam
\etal~\cite{SpannerPolyhedralTerrain} is a \emph{balanced
  shortest-path separator}. Such a separator consists of either a
single shortest path between two points on the boundary of the
terrain, or three shortest paths that form a \emph{shortest-path
  triangle}. This separator partitions the terrain into two
subterrains, and we call it balanced when each of these terrains
contains roughly half of the sites in $S$. 
In their constructive proof
for the existence of such a balanced separator, they assume that the
three shortest paths in a shortest-path triangle are disjoint, except
for their mutual endpoints. However, during their construction it can
actually happen that these paths are \emph{not} disjoint. When this
happens, it is unclear exactly how to proceed. In Appendix~\ref{ap:separator}, we correct this technical issue. Just like for the
$(2+\eps)$-spanner, the computation of a balanced separator is left
for future research. 
We show how to compute such a separator efficiently in a polygonal domain,
using techniques proposed by Thorup~\cite{Thorup_separator_polygon}, and Lipton and Tarjan~\cite{separatorTheorem}.

Next to spanners on the complete Euclidean geometric graph, spanners under line segment constraints were studied~\cite{Bose19_vis_constraint,Bose_vis_constraint,Bose19_vis_yao,Clarkson87_approx_sp, Das97_vis_constraint}. In this setting, a set $C$ of line segment constraints is provided, where each line segment is between two sites in $S$ and no two line segments properly intersect. The goal is to construct a spanner on the \emph{visibility graph} of $S$ with respect to $C$. Clarkson~\cite{Clarkson87_approx_sp} showed how to construct a linear sized $(1+\eps)$-spanner for this graph. Later, (constrained) Yao- and $\Theta$-graphs were also considered in this setting~\cite{Bose19_vis_constraint,Bose19_vis_yao}. If the segments in $C$ form a polygonal domain $P$, this setting is similar to ours, except that \emph{all} vertices of $P$ are included as sites in $S$. Thus the complexity of each edge is constant, and additionally it is required that there are short paths between the vertices of $P$.

Low complexity paths are studied in the \emph{minimum-link path} problem. In this problem, the goal is to find a path the uses a minimal number of links (edges) between two sites in a domain, for example a simple polygon~\cite{Ghosh_min_link,HomotopicPaths,Mitchell_min_link_paths,suri_linear_min_link}. Generally, this problem focuses only on the complexity of the path, with no restriction on the length of the path. Mitchell~\etal~\cite{Mitchell_shortest_k_link} consider the related problem of finding the \emph{shortest} path with at most $k$ edges between two points $p,q$ in a simple polygon. They give an algorithm to compute a $k$-link path with length at most $(1+\eps)$ times the length of the shortest $k$-link path, for any $\eps >0$. This result can not be applied to our setting, as the length of our paths should be bounded in terms of $d(p,q)$, i.e. the shortest $m$-link path, instead of the shortest $k$-link path.

\subparagraph{Our results.} 
We first consider the simple setting where the sites lie in a simple polygon, i.e.\ a polygonal domain without holes. 
We show that in this setting any $(3-\varepsilon)$-spanner may have complexity $\Omega(nm)$, thus implying that the $(2 + \varepsilon)$-spanner of Abam, de Berg, and Seraji~\cite{SpannerPolyhedralTerrain} may also have complexity $\Omega(nm)$, despite having $O(n\log n)$ edges.

To improve this complexity, we first introduce a simple 2-spanner with
$O(n\log n)$ edges for an additively weighted point set in a
1-dimensional Euclidean space; see Section~\ref{sec:1D-spanner}.  In
Section~\ref{sec:simple_polygon}, we use this result to obtain a
geodesic $2\sqrt{2}$-spanner with $O(n\log^2n)$ edges for a point set
in a simple polygon. We recursively split the polygon by a chord $\lambda$ such that each subpolygon contains roughly half of the sites, and build a 1-dimensional spanner on the sites projected to $\lambda$. We then extend this spanner into one
that also has bounded complexity. For
any fixed integer $k \geq 1$, we obtain a
$2\sqrt{2}k$-spanner with complexity
$O(mn^{1/k} + n\log^2 n)$.
Furthermore, we provide an algorithm to compute
such a spanner that runs in $O(n\log^2 n + m \log n + K)$ time, where
$K$ denotes the output complexity. When we output each edge
explicitly, $K$ is equal to the spanner complexity. However, as each
edge is a shortest path, we can also output an edge implicitly by
only stating the two sites it connects. In this case $K$ is equal to
the size of the spanner.

In Sections~\ref{sub:sp-separator} and~\ref{sec:polygonal_domain}, we extend our results for a
simple polygon to a polygonal domain. There are two significant difficulties
in this transition: \textit{(i)} we can no longer partition the polygon
by a line segment such that each subpolygon contains roughly half of
the sites, and \textit{(ii)} the shortest path between two sites $p,q$
may not be homotopic to the path from $p$ to $q$ via
another site $c$.

We solve problem \textit{(i)} by using a shortest-path
separator similar to Abam, de Berg, and Seraji~\cite{SpannerPolyhedralTerrain}.
To apply the shortest-path separator in a polygonal domain, we need new additional ideas, which we discuss in Section~\ref{sub:sp-separator}.
In particular, we allow one additional type
of separator in our version of a shortest-path separator: two shortest paths from a point in $P$ to the boundary
of a single hole. We show that this way there indeed always exists such a
separator in a polygonal domain, and provide an
$O(m \log m + n \log n)$ time algorithm to compute one, using ideas of Thorup~\cite{Thorup_separator_polygon}. 

To overcome problem \textit{(ii)}, we allow an edge $(p,q)$ to be any path from $p$ to
$q$. In networks, the connections between two nodes are often not
necessarily optimal paths, the only requirement being that the distance between two
hubs does not become too large. Thus allowing other paths between two
sites seems a reasonable relaxation. We call such a spanner a \emph{relaxed geodesic spanner}. This way, we obtain a
relaxed geodesic $6k$-spanner of size~$O(n\log^2n)$ and
complexity $O(m n^{1/k} + n\log^2 n)$ that can be computed in
$O((n+m)\log^2n \log m + K)$ time. Because our edges always consist
of at most three shortest paths, we can again output the edges
implicitly in $O(n \log^2 n)$ time. 

In Section~\ref{sec:simple_polygon_eps}, we use a relaxed geodesic spanner in a simple polygon to improve the spanning ratio of the low complexity spanner from $2\sqrt{2}k$ to $2k+\eps$, for any constant $\eps \in (0,2k)$. To achieve this, we adapt the refinement suggested by Abam, de Berg, and Seraji~\cite{SpannerPolyhedralTerrain} in such a way that the spanning ratio is reduced while the complexity increases only by a constant factor (dependent on $\varepsilon$). 
In the preliminary version of this paper~\cite{complexity_spanners} we claimed that application of this refinement was straightforward and that it could also be applied in a polygonal domain. However, this no longer guarantees bounded complexity. Our new approach does provide
the desired complexity bound, but only in case the domain is a
simple polygon. We leave obtaining a $(2k+\eps)$-spanner of the same complexity in a
polygonal domain as an open problem.

Finally, in Section~\ref{sec:lower_bounds}, we provide lower bounds on the complexity of (relaxed) geodesic spanners. For any constant $\varepsilon \in (0,1)$ and integer constant $t \geq 2$, we show a lower bound for the complexity of a $(t-\varepsilon)$-spanner in a simple polygon of $\Omega(mn^{1/(t-1)} + n)$.  
Therefore, the $2k+ \eps$ spanning ratio of our $O(mn^{1/k} + n\log^2n)$ complexity spanner is about a factor two off optimal.
For the case of a $(3-\eps)$-spanner, we prove an even stronger lower bound of $\Omega(nm)$.

Throughout the paper, we make the general position assumption that all
vertices of $P$ and sites in $S$ have distinct $x$- and
$y$-coordinates. Symbolic perturbation, in particular a shear transformation, can be used to remove this assumption~\cite{Comp_geom_book}.

\section{A 1-dimensional additively weighted 2-spanner}\label{sec:1D-spanner}
We consider how to compute an additively weighted spanner $\G$ in 1-dimensional Euclidean space, where each site $p \in S$ has a non-negative weight $w(p)$. The distance $d_w(p,q)$ between two sites $p,q \in S$ is given by $d_w(p,q) = w(p) + |pq| + w(q)$, where $|pq|$ denotes the Euclidean distance. Without loss of generality, we can map $\mathbb{R}^1$ to the $x$-axis, and the weights to the $y$-axis, see Figure~\ref{fig:1D_spanner}. This allows us to speak of the sites left (or right) of some site $p$. 

To construct a spanner $\G$, we first partition the sites into two sets $S_\ell$ and $S_r$ of roughly equal size by a point $O$ with $w(O) = 0$. The set $S_\ell$ contains all sites left of $O$, and $S_r$ all sites right of $O$. Sites that lie on the vertical line through $O$ are not included in either of the sets. We then find a site $c \in S$ for which $d_w(c, O)$ is minimal. For all $p\in S$, $p \neq c$, we add the edge $(p,c)$ to $\G$. Finally, we handle the sets $S_\ell$ and $S_r$, excluding the site $c$, recursively. 

\begin{figure}
    \centering
    \includegraphics{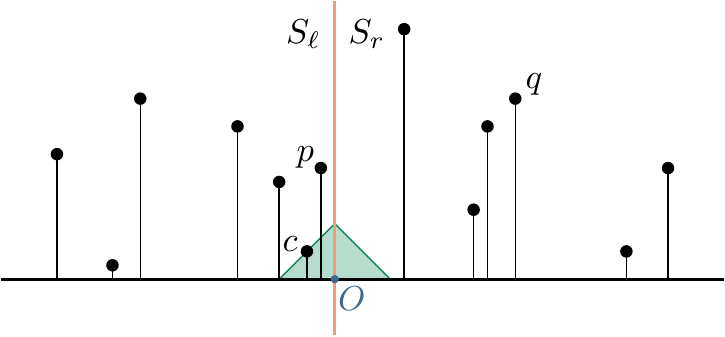}
    \caption{Construction of the additively weighted 1-dimensional spanner. The green triangle represents all points that are at distance at most $d_w(c,O)$ from $O$.}
    \label{fig:1D_spanner}
\end{figure}

\begin{lemma}\label{lem:1-dim-2-spanner}
The graph $\G$ is a 2-spanner of size $O(n\log n)$ and can be constructed in $O(n \log n)$ time.
\end{lemma}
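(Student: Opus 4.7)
The plan has three ingredients (size, construction time, spanning ratio), all exploiting the recursive structure of the construction.

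For the size bound, a single call on a subset $S'$ contributes $|S'|-1$ edges. Since $O$ is chosen to split $S'$ by median $x$-coordinate, the recursion tree is balanced and has depth $O(\log n)$, and the sum of $|S'|$ across all calls at any fixed level is $O(n)$. Summing over the $O(\log n)$ levels yields $O(n \log n)$ edges. The running time is bounded analogously: at a call on $S'$ I can pick $O$ by linear-time median selection, partition $S'$ around $O$, and find the center $c$ minimizing $d_w(c,O)$ by a single scan, then insert the edges $(p,c)$ for $p \in S' \setminus \{c\}$. Each call thus spends $O(|S'|)$ work, and the same balanced-recursion argument gives $O(n \log n)$ time in total.

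For the spanning ratio, fix two sites $p, q \in S$ and consider the deepest recursive call whose active set $S'$ contains both of them; let $O$ and $c$ be the split point and center chosen at that call. Either (a) $c \in \{p,q\}$, in which case $(p,q)$ is directly inserted and $d_\G(p,q) = d_w(p,q)$; or (b) $p$ and $q$ are separated by this call, so they lie on opposite sides of $O$ on the $x$-axis. In case (b) both $(p,c)$ and $(c,q)$ are edges of $\G$, so $d_\G(p,q) \leq d_w(p,c) + d_w(c,q)$.

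To finish case (b) I would chain together two elementary facts. Since $w(O) = 0$ and $O$ lies between $p$ and $q$, we have $d_w(p,O) + d_w(O,q) = w(p) + |pO| + |Oq| + w(q) = d_w(p,q)$. Applying the triangle inequality to the Euclidean parts (and again using $w(O) = 0$) gives $d_w(p,c) \leq d_w(p,O) + d_w(O,c)$ and $d_w(c,q) \leq d_w(c,O) + d_w(O,q)$. Since $c$ minimizes $d_w(\cdot, O)$ over $S'$, both $d_w(c,O) \leq d_w(p,O)$ and $d_w(c,O) \leq d_w(q,O)$, hence $2\,d_w(c,O) \leq d_w(p,O) + d_w(q,O)$. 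Combining these inequalities yields $d_\G(p,q) \leq 2(d_w(p,O)+d_w(O,q)) = 2\,d_w(p,q)$, as desired. The only mildly delicate point is to ensure that a site lying on the vertical line through $O$ does not derail the recursion — under the general position assumption at most one such site exists per call, and since its projected distance to $O$ is zero, it is naturally absorbed by the choice of $c$ (or handled by one extra direct edge), so the analysis above applies unchanged.
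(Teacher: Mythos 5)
Your proof is correct and follows essentially the same route as the paper's: the identity $d_w(p,q)=d_w(p,O)+d_w(O,q)$ for sites separated by $O$, the triangle inequality through $O$ (valid since $w(O)=0$), and the minimality of $d_w(c,O)$ give the factor $2$, while the balanced recursion gives $O(n\log n)$ edges and time. The only imprecision is your remark that a site on the vertical line through $O$ is ``absorbed by the choice of $c$''; the correct (and simpler) observation, which the paper makes, is that such a site still receives its edge to $c$ at that level and the additivity $d_w(p,q)=d_w(p,O)+d_w(O,q)$ still holds, so the same chain of inequalities applies verbatim.
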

\begin{proof}
As we add $O(n)$ edges in each level of the recursion, the total number of edges in $\G$ is $O(n\log n)$. Consider two sites $p,q \in S$. Let $c$ be the chosen center at the level of the recursion where $p$ and $q$ are assigned to different subsets $S_\ell$ and $S_r$. Assume without loss of generality that $p \in S_\ell$ and $q \in S_r$. Note that, because $p \in S_\ell$ and $q \in S_r$ we have $d_w(p,q) = d_w(p,O) + d_w(q,O)$. Furthermore, $d_w(c,O) \leq d_w(p,O)$ and $d_w(c,O) \leq d_w(q,O)$, by the choice of $c$. Because both edges $(p,c)$ and $(q,c)$ are in $\G$, we get for $d_\G(p,q)$:
\begin{equation}\label{eq:dw}
    d_\G(p,q) \leq d_w(p,O) + 2d_w(c,O) + d_w(q,O) \leq 2d_w(p,O) + 2d_w(q,O) = 2d_w(p,q).
\end{equation}
When there is no such center point, so $p$ or $q$ lies on the vertical line through $O$ at some level, then it still holds for this level that $d_w(p,q) = d_w(p,O) + d_w(q,O)$. Equation~(\ref{eq:dw}) again gives that $d_\G(p,q) \leq 2d_w(p,q)$.

We can find a point $O$ that separates the points into two sets $S_\ell$ and $S_r$ of equal size at each level of the recursion in linear time. Additionally, a linear number of edges is added to the spanner at each level. The running time is thus $O(n\log n)$. 
\end{proof}

\section{Spanners in a simple polygon}\label{sec:simple_polygon}
\subsection{A simple geodesic spanner}\label{sec:simple_geodesic_spanner}
Just like Abam, de Berg, and Seraji~\cite{SpannerPolyhedralTerrain},
we use our 1-dimensional spanner to
construct a geodesic spanner. We are more interested in the simplicity of the spanner than its spanning ratio, as we base our low complexity spanners, to be discussed in Section~\ref{sec:complexity_spanners}, on this simple geodesic spanner. 
 Let $P$ be a simple polygon, and let $\partial P$ denote the polygon boundary. We denote by $d(p,q)$ the geodesic
distance between $p,q \in P$, and by $\pi(p,q)$ the shortest
(geodesic) path from $p$ to~$q$. We analyze the simple construction using any 1-dimensional additively weighted $t$-spanner of size
$O(n\log n)$. We show that restricting the domain to a simple polygon improves the
spanning ratio from $3t$ to $\sqrt{2}t$. The construction can be refined to achieve a spanning ratio of $t + \varepsilon$, see Section~\ref{sec:refinement}.

As in \cite{SpannerPolygonalDomain} and \cite{SpannerPolyhedralTerrain}, we first partition $P$ into two subpolygons $P_\ell$ and $P_r$ by a line segment $\lambda$, such that each subpolygon contains at most two thirds of the sites in $S$~\cite{PolygonCutting}. We assume, without loss of generality, that $\lambda$ is a vertical line segment and $P_\ell$ is left of $\lambda$. Let $S_\ell$ be the sites in the closed region $P_\ell$, and $S_r := S \setminus S_\ell$. For each site $p \in S$, we then find the point $p_\lambda$ on $\lambda$ closest to $p$. We call $p_\lambda$ the projection of $p$. Note that this point is unique, because the shortest path to a line segment is unique in a simple polygon. We denote by $S_\lambda$ the set of all projected sites. As $\lambda$ is a line segment, we can define a weighted 1-dimensional Euclidean space on $\lambda$, where $w(p_\lambda):=d(p,p_\lambda)$ for each $p_\lambda \in S_\lambda$. We compute a $t$-spanner $\G_\lambda = (S_\lambda, E_\lambda)$ for this set. For each pair $(p_\lambda,q_\lambda) \in E_\lambda$, we add the edge $(p,q)$, which is $\pi(p,q)$, to our spanner $\G$. Finally, we recursively compute spanners for $S_\ell$ and $S_r$, and add their edges to $\G$ as well.

\begin{lemma}\label{lem:1D_to_geodesic_spanner}
The graph $\G$ is a geodesic $\sqrt{2}t$-spanner of size $O(n\log^2 n)$.
\end{lemma}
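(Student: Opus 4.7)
My plan is to bound the size and the spanning ratio separately.

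For the size bound, note that the separating chord at each recursion level splits the current set of sites into groups of size at most $\frac{2}{3}$ of the current total, so the recursion depth is $O(\log n)$. At each level we build 1D $t$-spanners on disjoint subsets of total size at most $n$, contributing $O(n \log n)$ edges, and hence $|E(\G)| = O(n \log^2 n)$ overall.

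For the spanning ratio, fix $p, q \in S$ and consider the first recursion level at which they lie in different subpolygons, separated by the chord $\lambda$. Write $a = d(p, p_\lambda)$, $b = d(q, q_\lambda)$, $c = |p_\lambda q_\lambda|$, so $d_w(p_\lambda, q_\lambda) = a + c + b$. Using the 1D $t$-spanner property of $\G_\lambda$, pick a $p_\lambda$-$q_\lambda$ path in $\G_\lambda$ of weighted length at most $t(a + c + b)$, and map it to a $p$-$q$ walk in $\G$ by replacing each edge $(u_\lambda, v_\lambda) \in E_\lambda$ with the spanner edge $(u, v) = \pi(u, v)$. Since the concatenation $\pi(u,u_\lambda) \cup \overline{u_\lambda v_\lambda} \cup \pi(v_\lambda,v)$ is a valid $u$-$v$ path inside $P$, we have
\[
 d(u, v) \leq d(u, u_\lambda) + |u_\lambda v_\lambda| + d(v_\lambda, v) = d_w(u_\lambda, v_\lambda),
\]
and summing along the path yields $d_\G(p, q) \leq t(a + c + b)$.

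It remains to prove that $a + b + c \leq \sqrt{2}\, d(p, q)$. Let $x \in \lambda$ be a point where $\pi(p, q)$ crosses $\lambda$, so $d(p, q) = d(p, x) + d(q, x)$, and set $c_p = |p_\lambda x|$ and $c_q = |x q_\lambda|$ so that $c \leq c_p + c_q$. The key step is a geodesic Pythagorean inequality: because $p_\lambda$ is the geodesically closest point on $\lambda$ to $p$, we have $d(p, x)^2 \geq a^2 + c_p^2$, and symmetrically $d(q, x)^2 \geq b^2 + c_q^2$. Applying $(u + v)^2 \leq 2(u^2 + v^2)$ gives $a + c_p \leq \sqrt{2}\, d(p, x)$ and $b + c_q \leq \sqrt{2}\, d(q, x)$, which add to the required inequality.

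The main obstacle will be justifying the geodesic Pythagorean inequality inside a non-convex simple polygon. I plan to reduce it to local analysis at the endpoint of $\pi(p, p_\lambda)$: either $p_\lambda$ lies in the interior of $\lambda$, in which case the final segment of $\pi(p, p_\lambda)$ meets $\lambda$ perpendicularly and ordinary Euclidean Pythagoras on the last funnel from $p$'s side to $\lambda$ gives $d(p,x)^2 \geq a^2 + c_p^2$; or $p_\lambda$ is an endpoint of $\lambda$, in which case the minimality of $a$ combined with the funnel structure of shortest paths from $p$ to $\lambda$ still yields the desired inequality.
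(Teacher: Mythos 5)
Your size bound and the reduction $d_\G(p,q)\le t\,d_w(p_\lambda,q_\lambda)$ coincide with the paper's Equation~(2). Where you diverge is the geometric step. The paper takes the branch point $z$ where $\pi(p,p_\lambda)$ and $\pi(p,r)$ separate, shows both suffixes are convex chains contained in the right triangle with legs $zz'$ and $z'r$ (with $z'$ the horizontal projection of $z$ onto the line through $\lambda$), and concludes $d(p,p_\lambda)+|p_\lambda r|\le d(p,z)+|zz'|+|z'r|\le\sqrt{2}\,d(p,r)$ directly. You instead route through the strictly stronger intermediate claim $d(p,x)^2\ge d(p,p_\lambda)^2+|p_\lambda x|^2$ and then apply $(u+v)^2\le 2(u^2+v^2)$; the $\sqrt{2}$ comes from the same elementary inequality in both proofs, applied to $|zz'|,|z'r|$ in the paper and to $a,c_p$ in yours. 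Your ``geodesic Pythagorean inequality'' is a clean and genuinely reusable statement, but being stronger than what the paper proves, it needs its own proof.

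That proof is the gap. The inequality is true in a simple polygon, but it does not follow from ``Euclidean Pythagoras on the last funnel'': $a=d(p,p_\lambda)$ is the length of the \emph{entire}, possibly heavily bent, path, not of its final perpendicular segment, and $\pi(p,x)$ may bend at entirely different vertices, so no local computation at $p_\lambda$ compares the two lengths. It also does not follow from the paper's intermediate bounds: knowing only $d(z,p_\lambda)\le|zz'|+|z'p_\lambda|$ and $d(z,r)\ge|zr|$ is consistent with $d(z,r)^2<d(z,p_\lambda)^2+|p_\lambda r|^2$. A correct argument: parametrize $\lambda$ by $y$, set $g(y)=d(p,(x_0,y))$ with minimizer $y^*$, and let $h(y)=g(y)^2-(y-y^*)^2$. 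On each cell of the shortest path map, $g(y)=C+\rho(y)$ where $C$ is the length of the fixed prefix ending at the last bend vertex $w$ and $\rho(y)$ is the length of the final segment; a direct computation gives $h''(y)=2Cs^2/\rho(y)^3\ge 0$ with $s$ the horizontal distance from $w$ to $\lambda$. Since $g$ is $C^1$ across cell boundaries in a simple polygon, $h$ is convex, and $h'(y^*)=2g(y^*)g'(y^*)$ vanishes at an interior minimizer (and has the correct sign when $p_\lambda$ is an endpoint of $\lambda$), so $h(y)\ge h(y^*)=a^2$, which is exactly your inequality. Note this uses simplicity of the polygon (uniqueness and $C^1$-smoothness of $g$); it fails in a polygonal domain. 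Alternatively, you can drop the Pythagorean detour and bound $a+c_p\le\sqrt{2}\,d(p,x)$ directly via the paper's branch-point argument, which avoids needing the stronger claim.
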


\begin{proof}
As $\G_\lambda$ has $O(n \log n)$ edges (Lemma~\ref{lem:1-dim-2-spanner}) that directly correspond to edges in $\G$, and the recursion has $O(\log n)$ levels, we have $O(n \log^2n)$ edges in total.
Let $p,q$ be two sites in $S$. If both are in $S_\ell$ (or $S_r$), then there is a path of length $\sqrt{2}t d(p,q)$ by induction. So, we assume w.l.o.g. that $p \in S_\ell$ and $q \in S_r$. Let $r$ be the intersection point of $\pi(p,q)$ and $\lambda$. Observe that $p_\lambda$ and
$q_\lambda$ must be on opposite sides of $r$, otherwise $r$ cannot be on the
shortest path. We assume, without loss of generality, that $p_\lambda$ is above $r$ and $q_\lambda$ below $r$. 
Because $\G_\lambda$ is a $t$-spanner, we know that there is a weighted path from $p_\lambda$ to $q_\lambda$ of length at most $td_w(p_\lambda,q_\lambda)$. As $w(p_\lambda) = d(p,p_\lambda)$, this directly corresponds to a path in the polygon. So, 
\begin{equation}\label{eq:spanner_distance}
    d_\G(p,q) \leq d_{\G_\lambda}(p_\lambda,q_\lambda) \leq td_w(p_\lambda,q_\lambda) = t(d(p,p_\lambda) + |p_\lambda r| + |rq_\lambda| + d(q_\lambda,q)).
\end{equation}

\begin{figure}
    \centering
    \includegraphics{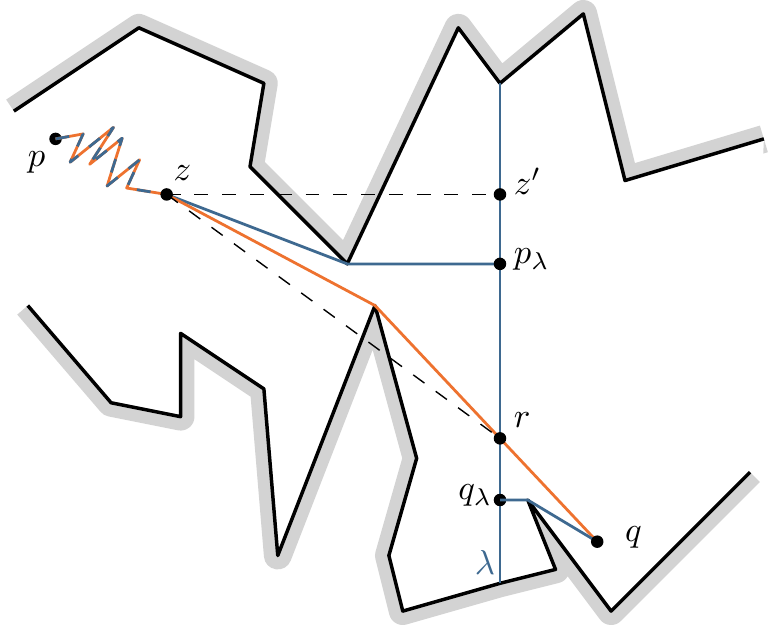}
    \caption{The shortest path $\pi(p,q)$ crosses $\lambda$ at $r$. The difference in length between the direct path from $z$ to $r$ and the path through $p_\lambda$ can be bounded by considering the triangle $\T=(z,z',r)$.}
    \label{fig:2sqrt2_spanner}
\end{figure}
Let $z$ be the point where the shortest paths from $p$ to $p_\lambda$
and $r$ separate. See Figure~\ref{fig:2sqrt2_spanner} for an
illustration. Consider the right triangle $\T = (z, z', r)$, where $z'$ is the intersection point of the line perpendicular to $\lambda$ through $z$ and the line containing $\lambda$. Note that $z'$ does not necessarily lie within $P$. For this triangle we have that 
\begin{equation}\label{eq:triangle_bounds}
    |zr| \geq \frac{\sqrt{2}}{2} (|zz'| + |z'r|).
\end{equation}

Next, we show that the path from $z$ to $p_\lambda$ is a $y$-monotone convex polygonal chain ending at or below $z'$. Consider the vertical ray through $z$ upwards to the polygon boundary. We call the part of $\partial P$ between where the ray hits $\partial P$ and $\lambda$ the \emph{top} part of~$\partial P$. Similarly, for a downwards ray, we define the \emph{bottom} part of $\partial P$. There are no vertices on $\pi(z,p_\lambda)$ from the bottom part of $\partial P$, because such a vertex would then also occur on the shortest path to $r$. This is in contradiction with the definition of $z$. If $z$ sees $z'$, then $p_\lambda = z'$, otherwise the chain must bend at one or more vertices of the top part of $\partial P$, and thus lie below $z'$. It follows that $\pi(z,p_\lambda)$ is contained within \T. 
Similarly, we conclude that $\pi(z,r)$ is contained within~$\T$. Additionally, this gives us that $d(z,p_\lambda) \leq |zz'| + |z'p_\lambda|$, and $d(z,r) \geq |zr|$. Together with Equation~(\ref{eq:triangle_bounds}) this yields $d(z,p_\lambda) + |p_\lambda r| \leq |zz'| + |z'r| \leq \sqrt{2}|zr| \leq \sqrt{2} d(z,r)$. And thus 
\begin{equation*}
    d(p,p_\lambda) + |p_\lambda r| = d(p,z) + d(z,p_\lambda) + |p_\lambda r| \leq d(p,z) + \sqrt{2} d(z,r) \leq \sqrt{2} d(p,r).
\end{equation*}
Symmetrically, we find for $q$ that $d(q,q_\lambda) + |q_\lambda r| \leq \sqrt{2} d(q,r)$. From this, together with Equation~(\ref{eq:spanner_distance}), we conclude that
$d_\G(p,q) \leq t\left(\sqrt{2} d(p,r) + \sqrt{2} d(r,q)\right) = \sqrt{2}td(p,q)$.
\end{proof}

Applying Lemma~\ref{lem:1D_to_geodesic_spanner} to the spanner of Section~\ref{sec:1D-spanner} yields a  $2\sqrt{2}$-spanner of size $O(n\log^2n)$.

\subsubsection{A refinement to obtain a $(t +\eps)$-spanner}\label{sec:refinement}
Abam~\etal~\cite{SpannerPolyhedralTerrain} refine their spanner construction to obtain a $(2+\eps)$-spanner for any constant $\eps > 0$. In the following lemma, we apply their refinement to the construction of the spanner proposed in Section~\ref{sec:simple_geodesic_spanner} and obtain a $(t + \eps)$-spanner.

\begin{lemma}\label{lem:refinement}
    Using any 1-dimensional additively weighted $t$-spanner of size $O(n\log n)$, we can construct a $(t+\varepsilon)$-spanner for a set $S$ of $n$ point sites that lie in a simple polygon $P$ of size $O(c_{\varepsilon,t} n \log^2 n)$, where $c_{\varepsilon,t}$
is a constant depending on~$\varepsilon$ and~$t$. The number of sites used to construct the 1-dimensional spanner is $O(c_{\varepsilon,t} n)$.
\end{lemma}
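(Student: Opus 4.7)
The plan is to adapt the refinement of Abam, de Berg, and Seraji~\cite{SpannerPolyhedralTerrain} to the construction of Section~\ref{sec:simple_geodesic_spanner}. The $\sqrt{2}$ factor in Lemma~\ref{lem:1D_to_geodesic_spanner} arises because each site has only one projection $p_\lambda$, which may lie far from the actual crossing point $r = \pi(p,q) \cap \lambda$. The idea is to replace the single projection by $k = \Theta(1/\eps')$ copies of $p$ placed along $\lambda$, where $\eps' := \eps/t$, so that for every candidate crossing point $r$ some copy $p^{(i)}$ satisfies
\[
    d(p, p^{(i)}) + |p^{(i)} r| \leq (1 + \eps') \, d(p, r).
\]
Combined with the $t$-spanner property of $\G_\lambda$, this directly yields the desired $(t + \eps)$ spanning ratio.

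Concretely, for each site $p$ let $v_p$ be the last vertex of $\pi(p, p_\lambda)$ before $\lambda$ (or $v_p = p$ if $p$ sees $\lambda$), and let $\lambda_p \subseteq \lambda$ be the sub-segment visible from $v_p$. I would place $k$ copies $p^{(1)}, \dots, p^{(k)}$ of $p$ on $\lambda_p$ so that the rays from $v_p$ to the $p^{(i)}$ subdivide the visibility cone at $v_p$ into subwedges of equal angular width $\alpha = \Theta(\eps')$; each copy receives the additive weight $w(p^{(i)}) := d(p, v_p) + |v_p p^{(i)}|$, which equals the geodesic distance $d(p, p^{(i)})$ since $v_p$ sees $p^{(i)}$. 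The augmented set of projections has $O(kn) = O(c_{\eps,t} n)$ elements per recursive level. Building the 1-dimensional additively weighted $t$-spanner on this set yields $O(c_{\eps,t} n \log n)$ edges per level, and hence $O(c_{\eps,t} n \log^2 n)$ edges in the final geodesic spanner $\G$ over the $O(\log n)$ levels of recursion; edges of $\G_\lambda$ between two copies of the same site may be discarded.

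The spanning-ratio argument mirrors the proof of Lemma~\ref{lem:1D_to_geodesic_spanner}. Fix a pair $p, q$ first separated by $\lambda$ at some level, and let $r$ be the intersection of $\pi(p,q)$ with $\lambda$. Provided $r \in \lambda_p$, the point $r$ lies in some subwedge, and some copy $p^{(i)}$ sits on a bounding ray of that subwedge, at angular distance at most $\alpha$ from the ray $v_p \to r$. A routine trigonometric estimate inside the triangle $v_p p^{(i)} r$ then gives $|v_p p^{(i)}| + |p^{(i)} r| \leq (1+\eps') |v_p r|$, and adding $d(p, v_p)$ on both sides yields the displayed inequality. Applying the same bound on $q$'s side and invoking the $t$-spanner property of $\G_\lambda$ gives $d_\G(p,q) \leq t(1+\eps')\, d(p,q) \leq (t+\eps)\, d(p,q)$.

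The main obstacle is the case $r \notin \lambda_p$: the crossing point may fall in a funnel pocket of $p$ different from the one containing $p_\lambda$, and naively placing copies in every pocket would require $\Omega(m)$ copies per site and blow up the projection count to $\Omega(mn)$. Following Abam \etal~\cite{SpannerPolyhedralTerrain}, the way around this is to observe that $\pi(p, r)$ must then leave $v_p$'s visibility cone through a reflex vertex $u$ on the boundary of $\lambda_p$, so the detour via the copy of $p$ nearest to the ray $v_p \to u$ can be charged against the portion of $\pi(p, r)$ from $u$ to $r$; this again yields a $(1+\eps')$ approximation without any additional copies. Making this reduction rigorous, and fixing the hidden constant in $k = \Theta(1/\eps')$ so that the trigonometric approximation is tight, constitutes the technical crux of the argument.
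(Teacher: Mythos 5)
Your construction departs from the paper's: the paper places, for each site $p$, $O(1/\delta^2)$ extra points on $\lambda$ spaced by \emph{arc length} $\delta\cdot d(p,p_\lambda)$ inside a window of radius $(1+2/\delta)\,d(p,p_\lambda)$ around $p_\lambda$, taking in each piece the point closest to $p$. That choice makes the key inequality immediate: if $r$ lies in piece $i$ then $d(p,p_\lambda^{(i)})\le d(p,r)$ and $|p_\lambda^{(i)}r|\le\delta\, d(p,p_\lambda)\le\delta\, d(p,r)$, while if $r$ falls outside the window then $p_\lambda$ itself already gives a $(1+\delta)$-detour because $d(p,r)\ge (2/\delta)\,d(p,p_\lambda)$. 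You instead place $O(1/\eps')$ copies by subdividing the visibility cone of the single apex $v_p$ into equal \emph{angles}. That is a genuinely different scheme, and if it worked it would even use fewer copies, but as written it has a gap.

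The gap is in the step ``$|v_p p^{(i)}|+|p^{(i)}r|\le(1+\eps')|v_p r|$, and adding $d(p,v_p)$ on both sides yields the displayed inequality.'' Adding $d(p,v_p)$ gives $d(p,p^{(i)})+|p^{(i)}r|\le d(p,v_p)+(1+\eps')|v_p r|\le(1+\eps')\bigl(d(p,v_p)+|v_p r|\bigr)$, and this equals $(1+\eps')\,d(p,r)$ only when the geodesic $\pi(p,r)$ actually passes through $v_p$ and then goes straight to $r$. That holds only for $r$ in the subinterval of $\lambda$ \emph{served} by $v_p$ in the funnel from $p$ to $\lambda$, which is in general a strict subset of the portion $\lambda_p$ \emph{visible} from $v_p$: for other $r\in\lambda_p$ the taut path branches off $\pi(p,p_\lambda)$ at an earlier vertex or wraps a later funnel vertex, and then $d(p,v_p)+|v_p r|$ exceeds $d(p,r)$ by up to $2\,d(p,p_\lambda)$, i.e.\ by a constant factor rather than a $(1+\eps')$ factor. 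Your reflex-vertex charging only addresses $r\notin\lambda_p$, not these $r\in\lambda_p$ with a different service vertex; fixing it by placing copies per funnel vertex reintroduces the $\Omega(m)$ blow-up you are trying to avoid. (You also flag the charging argument itself as an unproven ``technical crux.'') The arc-length placement sidesteps all of this because its estimate never routes through $v_p$: it compares $d(p,p^{(i)})$ directly to $d(p,r)$ via the choice of $p^{(i)}$ as the closest point of its piece, which is why the paper can simply invoke the approximation lemma of Abam, de Berg, and Seraji and conclude with $\delta=\eps/t$ and $c_{\eps,t}=O(t^2/\eps^2)$.
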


\begin{proof}
In the refined construction, instead of adding only a single point $p_\lambda$ to $S_\lambda$ for each site $p$, we additionally add a collection of $O(1/\delta^2)$ points on $\lambda$ ``close'' to $p_\lambda$ to $S_\lambda$, where $\delta$ is a constant depending on $\varepsilon$. These additional points all lie within distance $(1+2/\delta)\cdot d  (p, p_\lambda )$ of $p_\lambda$. 
The points are roughly equally spaced on the line segment within this distance. To be precise, the segment is partitioned into $O(1/\delta^2)$ pieces of length $\delta \cdot d(p,p_\lambda)$, and for each piece $i$ the point $p_\lambda^{(i)}$ closest to $p$ is added to $S_\lambda$. The weight of each point is again chosen as the geodesic distance to $p$. The 1-dimensional spanner(s) $\G_\lambda$ is then computed on this larger set $S_\lambda$. For each edge $(p_\lambda^{(i)}, q_\lambda^{(j)})$ in $\G_\lambda$, we again add the edge $(p,q)$ to the final spanner $\G$.

Abam, de Berg, and Seraji prove that for each $p,q \in S$, there are
points $p_\lambda^{(i)}, q_\lambda^{(j)} \in S_\lambda$ such that
$d_w(p_\lambda^{(i)}, q_\lambda^{(j)}) \leq  (1+\delta) \cdot
d(p,q)$. As $\G_\lambda$ is a $t$-spanner for $S_\lambda$, choosing
$\delta = \varepsilon/t$ implies that $d_\G(p,q) \leq t \cdot d_w(p_\lambda^{(i)}, t_\lambda^{(j)}) \leq t(1+\delta)d(p,q) = (t+\varepsilon)d(p,q)$. In other words, $\G$ is a $(t + \varepsilon)$-spanner of $S$. Note that the number of edges in the
spanner has increased because we use $O(n/\delta^2)$ instead of
$O(n)$ points to compute the 1-dimensional spanner. This results in a
spanner of size $O(c_{\varepsilon,t} n \log^2 n)$, where $c_{\varepsilon,t} = O(t^2/\varepsilon^2)$
is a constant depending on~$\varepsilon$.
\end{proof}

\subsection{Low complexity geodesic spanners}\label{sec:complexity_spanners}

In general, a geodesic spanner $\G = (S,E)$ in a simple polygon with $m$ vertices may have complexity $O(m|E|)$. It is easy to see that the $2\sqrt{2}$-spanner of Section~\ref{sec:simple_geodesic_spanner} can have complexity $\Omega(nm)$, just like the spanners in~\cite{SpannerPolyhedralTerrain}. As one of the sites, $c$, is connected to all other sites, the polygon in Figure~\ref{fig:lowerbound_3spanner} provides this lower bound. The construction in Figure~\ref{fig:lowerbound_3spanner} even shows that the same lower bound holds for the complexity of any $(3-\epsilon)$-spanner. Additionally, the following theorem implies a trade-off between the spanning ratio and the spanner complexity.

\begin{restatable}{theorem}{lowerBound}\label{thm:general_lower_bound}
For any constant $\varepsilon \in (0,1)$ and integer constant $t \geq 2$, there exists a set of $n$ point sites in a simple polygon $P$ with $m = \Omega(n)$ vertices for which any (relaxed) geodesic $(t-\varepsilon)$-spanner has complexity $\Omega(mn^{1/(t-1)})$. 
\end{restatable}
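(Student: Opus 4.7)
I would prove the lower bound by an explicit construction. The additive $\Omega(n)$ term is the trivial lower bound from the fact that the spanner must be connected, so the main task is to establish $\Omega(mn^{1/(t-1)})$.

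The polygon $P$ would be built around a narrow zigzag corridor of complexity $\Theta(m)$ that topologically separates $P$ into two regions. By the corridor's design, any path between the two regions, whether a geodesic or (in the relaxed setting) an arbitrary polygonal path in $P$, must traverse the whole corridor and hence has complexity $\Omega(m)$. On one side of the corridor we place a designated site $c$; on the other side we place the remaining $n-1$ sites in a hierarchical ``spike'' configuration of depth $t-1$ and branching factor $\lceil n^{1/(t-1)}\rceil$, one site at the tip of each leaf spike. The spike lengths and intra-hierarchy corridor lengths are tuned so that the induced geodesic metric on $S$ resembles a tree metric in which sites at deeper common ancestors are pairwise closer, while the distance to $c$ from any site is roughly the same. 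This generalises the $(3-\varepsilon)$ construction behind Figure~\ref{fig:lowerbound_3spanner}, in which $c$ is forced to be directly connected to all $n-1$ other sites.

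The key combinatorial lemma I would aim for is: any $(t-\varepsilon)$-spanner of $S$ contains at least $\Omega(n^{1/(t-1)})$ edges that cross the corridor. Each corridor-crossing edge $(h,c)$ serves as a last-hop ``hub'' reachable by some site $s$ only if the within-hierarchy path from $s$ to $h$, concatenated with the direct hop to $c$, gives multiplicative blow-up at most $t-\varepsilon$. The hierarchical tuning ensures that such efficient coverage is limited to the top-level branch containing $h$; hence each of the $\lceil n^{1/(t-1)}\rceil$ top-level branches must contain an endpoint of a corridor-crossing edge. Multiplying by the $\Omega(m)$ per-edge complexity then yields the theorem.

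The main obstacle is the key lemma, and specifically the geometric tuning that makes it go through for every $t \geq 2$. The delicate point is that as $t$ grows, more hops are tolerated within the $(t-\varepsilon)$ ratio budget, so short cross-branch detours can more easily be absorbed. In particular, a single one-hop detour gives ratio at most $3$ whenever all sites are equidistant from $c$ (by the triangle inequality), so forcing many hubs cannot rely on one-hop blow-up alone once $t \geq 4$. The argument must instead carefully account for all multi-hop paths of length at most $t-1$ in the spanner: the depths and lengths of the hierarchy must be staged so that \emph{every} valid in-spanner path from $s$ to $c$ that uses a hub outside $s$'s top-level branch accumulates ratio $\geq t-\varepsilon$. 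Balancing these staged distances against the branching factor $n^{1/(t-1)}$ and depth $t-1$ of the hierarchy, while simultaneously keeping in-branch routes within the ratio budget, is the central technical work.
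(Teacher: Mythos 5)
There is a genuine gap: your key lemma---that any $(t-\varepsilon)$-spanner must contain $\Omega(n^{1/(t-1)})$ corridor-crossing edges---is false for every $t\geq 4$, and no ``staging'' of distances in the hierarchy can repair it. Concretely, in any configuration where a single site $c$ sits behind the corridor, let $h_0\in S\setminus\{c\}$ be the site minimizing $d(h_0,c)$, and take the spanner consisting of the single crossing edge $(h_0,c)$, the star of geodesic edges $(s,h_0)$ for all $s\in S\setminus\{c\}$, and any geodesic $2\sqrt{2}$-spanner of $S\setminus\{c\}$ (which exists and uses no crossing edges, since geodesics among sites on one side of the corridor stay on that side). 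For every $s$ we get $d_\G(s,c)\leq d(s,h_0)+d(h_0,c)\leq \bigl(d(s,c)+d(c,h_0)\bigr)+d(h_0,c)\leq 3\,d(s,c)$ by the triangle inequality and the minimality of $h_0$, and all other pairs have ratio $2\sqrt{2}<3$. This is a geodesic $3$-spanner with exactly \emph{one} corridor-crossing edge, valid in \emph{every} metric and hence for every tuning of your spike hierarchy; since $t-\varepsilon>3$ for all integers $t\geq 4$ and $\varepsilon\in(0,1)$, your lemma cannot hold there. (The star edges may themselves be complex, but your accounting attributes the entire $\Omega(mn^{1/(t-1)})$ bound to crossing edges, so the argument collapses; charging the star edges instead would require controlling the total complexity of \emph{all} edges, which is a different argument.) The same obstruction persists if you put many sites on each side: one hub per side, one crossing edge, and local stars again yield ratio $3$. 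This is why hub-counting across a single heavy corridor tops out at $t=3$, which is exactly the regime of the $(3-\varepsilon)$ bound of Section~\ref{sub:lower_bound_3-eps}.

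The paper's proof avoids this by distributing the complexity rather than concentrating it: the $n$ sites hang as spikes off a linear arrangement in which \emph{every} path between $p_i$ and $p_j$ has complexity $\Theta\bigl((m/n)\,|i-j|\bigr)$, and all pairwise geodesic distances are close to $2\ell$, so a $(t-\varepsilon)$-spanner has hop-diameter at most $t-1$. Spanner complexity then equals $\Theta(m/n)$ times the total \emph{weight} of a hop-bounded spanning subgraph of the uniform line metric $\vartheta_n$, and the extremal bound of Dinitz, Elkin, and Solomon (Lemma~\ref{lem:lower_bound_vartheta}), $\Omega(h\cdot n^{1+1/h})$ for hop-radius $h$, gives $\Omega(mn^{1/(t-1)})$. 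If you want to salvage your hierarchical picture, you would have to place sub-corridors of complexity $\Theta(m\,n^{-1/(t-1)})$ between branches at every level and then lower-bound the \emph{sum} of complexities over all spanner edges---at which point you are re-proving a weight lower bound for hop-bounded spanners, i.e., the cited result.
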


The proofs of these lower bounds are in Section~\ref{sec:lower_bounds}. Next, we present a spanner that almost matches this bound. We first present a $4\sqrt{2}$-spanner of bounded complexity, and then generalize the approach to obtain a $2\sqrt{2}k$-spanner of complexity $O(mn^{1/k} + n\log^2 n)$, for any integer $k \geq 2$. In Section~\ref{sec:simple_polygon_eps}, we show how to improve the spanning ratio of this spanner to $2k + \eps$, for any $\eps \in (0,2k)$.

\begin{figure}
    \centering
    \includegraphics{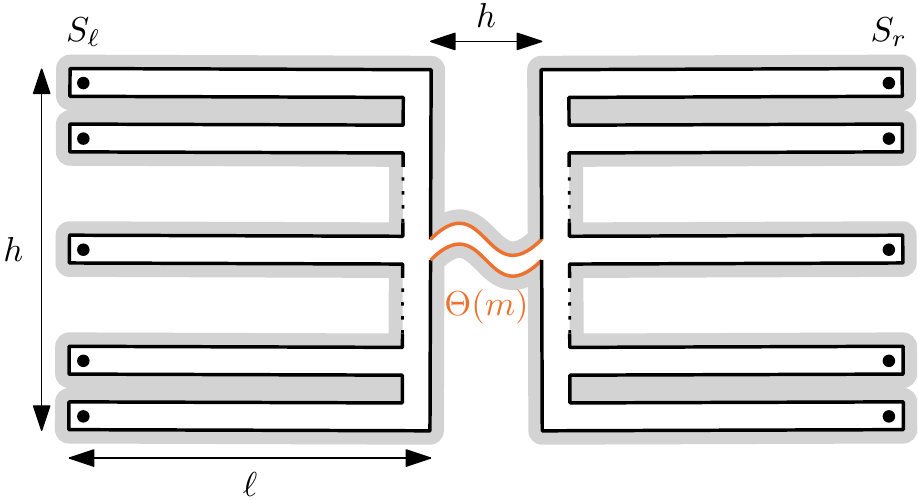}
    \caption{A simple polygon with a set of sites for which any $(3-\varepsilon)$-spanner has complexity~$\Omega(nm)$.}
    \label{fig:lowerbound_3spanner}
\end{figure}

\subsubsection{A $4\sqrt{2}$-spanner of complexity $O(m\sqrt{n} + n\log^2n)$}\label{sec:4sqrt2spanner}

To improve the complexity of the geodesic spanner, we adapt our construction for the additively weighted spanner $\G_\lambda$ as follows. After finding the site $c_\lambda\in S_\lambda$ for which $d_w(c_\lambda,O)$ is minimal, we do not add all edges $(p_\lambda,c_\lambda)$, $p_\lambda\in S_\lambda$, to $\G_\lambda$. Instead, we form groups of sites whose original sites (before projection) are ``close'' to each other in $P$. For each group $S_i$, we add all edges $(p_\lambda,c_{i,\lambda})$, $p_\lambda \in S_i$, to $\G_\lambda$, where $c_{i,\lambda}$ is the site in $S_i$ for which $d_w(c_{i,\lambda}, O)$ is minimal. Finally, we add all edges $(c_{i,\lambda}, c_\lambda)$ to~$\G_\lambda$.

To make sure the complexity of our spanner does not become too large, we must choose the groups in such a way that the edges in our spanner do not cross ``bad'' parts of the polygon too often. We achieve this by making groups of roughly equal size, where shortest paths within each group $S_i$ are contained within a region $R_i$ that is (almost) disjoint from the regions $R_j$ of other groups. We first show how to solve a recursion that is later used to bound the complexity of the spanner. The subsequent lemma formally states the properties that we require of our groups, and bounds the complexity of such a spanner.

\begin{lemma}\label{lem:1_dim_recursion}
    The recursion $T(n,m) = T(n/2, m_1) + T(n/2,m_2) + O(mn^{1/k} + n \log^{c_1}n)$, where $m_1 + m_2 = m + c_2$ and $c_1,c_2$ integer constants, solves to $T(n) = O(mn^{1/k} + n \log^{c_1 + 1} n)$.
\end{lemma}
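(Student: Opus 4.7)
The plan is to prove the bound $T(n,m) \leq A m n^{1/k} + B n \log^{c_1+1} n$ by induction on $n$, choosing the constants $A$ and $B$ large enough to absorb the additive overhead $O(mn^{1/k} + n\log^{c_1}n)$ at each level. I would first set up the base case for a small constant $n_0$, where $T(n_0,m) = O(m)$ is dominated by $A m n_0^{1/k}$ trivially.

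For the inductive step, I would substitute the hypothesis into the recurrence. The two recursive calls contribute
\[
A(m_1+m_2)(n/2)^{1/k} + Bn\log^{c_1+1}(n/2) \;=\; \frac{A(m+c_2)}{2^{1/k}}\, n^{1/k} + Bn\log^{c_1+1}(n/2),
\]
using $m_1+m_2 = m+c_2$. Then I would add the non-recursive term $O(mn^{1/k} + n\log^{c_1} n)$ and compare against the target $Amn^{1/k} + Bn\log^{c_1+1}n$. The inequality splits naturally into an $m n^{1/k}$-part and an $n\log^{c_1+1}n$-part, so I would verify each part separately.

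For the polygon-size part, the geometric saving $1/2^{1/k} < 1$ gives a slack of $A(1-2^{-1/k})mn^{1/k}$, which can absorb the additive $O(mn^{1/k})$ term provided $A$ is chosen large compared to the hidden constant, and can also absorb the residual $A c_2\, 2^{-1/k} n^{1/k}$ term (either by assuming $m \geq 1$, so $n^{1/k} \leq m n^{1/k}$, or by moving this additive $O(n^{1/k})$ term into the other half of the bound). For the polylogarithmic part, I would expand $\log^{c_1+1}(n/2) = (\log n - 1)^{c_1+1}$ and note that the leading correction is $-(c_1+1)\log^{c_1}n$, yielding a slack of $B(c_1+1)n\log^{c_1}n$ that dominates the additive $O(n\log^{c_1}n)$ term once $B$ is chosen large enough relative to $c_1$ and the hidden constant.

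The only subtlety I anticipate is book-keeping the constant $c_2$ so that the ``$m$ does not shrink'' aspect of the recurrence does not break the induction; the clean way to handle this is to observe that the total $m$-growth over all $O(\log n)$ levels is only $O(c_2 \log n) = O(\log n)$ extra vertices, which is dwarfed by the $n\log^{c_1+1} n$ term already in the bound. With $A$ and $B$ chosen in this way, the inductive step closes and the claimed bound $T(n,m) = O(mn^{1/k} + n\log^{c_1+1}n)$ follows.
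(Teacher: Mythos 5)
Your proof is correct in substance, but it takes a genuinely different route from the paper. The paper unrolls the recursion explicitly: it first proves by induction that the total $m$-weight at level $i$ is $M_i = m + c_2(2^i-1)$, and then sums the non-recursive cost over all levels, splitting the sum into a convergent geometric series $mn^{1/k}\sum_i 2^{-i/k}$, a term $c_2 n^{1/k}\sum_i 2^{(1-1/k)i} = O(n)$, and the $O(n\log^{c_1+1}n)$ accumulation of the polylog terms. You instead use the substitution method, guessing $T(n,m)\le Amn^{1/k}+Bn\log^{c_1+1}n$ and closing the induction; this is more self-contained (no separate level-sum lemma) and the key cancellations --- the geometric slack $A(1-2^{-1/k})mn^{1/k}$ and the slack $(\log n - 1)^{c_1+1}\le \log^{c_1+1}n - \log^{c_1}n$ --- are exactly the right ones. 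Two small caveats. First, of your two options for absorbing the residual $Ac_2 2^{-1/k}n^{1/k}$, only the second works: bounding $n^{1/k}\le mn^{1/k}$ and absorbing it into the $A$-slack would require $c_2 2^{-1/k} < 1-2^{-1/k}$, i.e.\ $c_2 < 2^{1/k}-1\le 1$, which fails for the values $c_2=3,4$ actually used; you must route this term into the $B$-slack (choosing $B$ after $A$), as you also propose. Second, your closing remark that the total $m$-growth over all levels is $O(c_2\log n)$ is true only along a single root-to-leaf path; the aggregate extra weight at level $i$ is $c_2(2^i-1)$, and its total contribution to the bound is $O(c_2 n)$, not $O(\log n)$. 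This does not affect your argument, since the induction accounts for the $c_2$ correctly at each step, but the intuition as stated is inaccurate.
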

\begin{proof}
We will write the recursion as a sum over all levels $i$ and all subproblems $j$ at each level. Because $n$ is halved in each level, the recursion has $O(\log n)$ levels. There are $2^i$ subproblems at level $i$ of the recursion. 

Let $m_{i,j}$ denote the $m$ value used in subproblem $j$ of level $i$. We consider the sum of the $m$ values over all subproblems at level $i$ which we denote by $M_i$, so $M_i = \sum_{j = 0}^{2^i} m_{i,j}$. We prove by induction that $M_i = m + c_2\cdot (2^{i}-1)$ for any $i \geq 1$. For $i = 1$ this states that $M_1 = m + c_2\cdot (2^1 - 1) = m + c_2$, which is equivalent to $m_1 + m_2 = m + c_2$. Suppose that the hypothesis $M_i = m + c_2 \cdot (2^i - 1)$ holds for $i = k$. The $2^{k+1}$ subproblems at level $k+1$ consist of $2^{k+1}/2 = 2^k$ pairs of subproblems $(j_1,j_2)$, each generated by a subproblem $j$ at level $k$, for which $m_{k+1,j_1} + m_{k+1,j_2} = m_{k,j} + c_2$. We can thus find $M_{k+1}$ by summing over these pairs. So, $M_{k+1} = \sum_{(j_1,j_2)} (m_{k,j} + c_2) = M_k + c_2\cdot 2^k = m + c_2 \cdot (2^k - 1)+c_2\cdot 2^k = m + c_2\cdot (2^{k+1} -1)$.

We are now ready to formulate the recursion as a summation. For simplicity we use that $M_i \leq m + c_2 2^i$.

\begin{align*}
    T(n,m) &= \sum_{i=0}^{O(\log n)} \sum_{j=0}^{2^i} \left( m_{i,j}\left(\frac{n}{2^i}\right)^{1/k} + \frac{n}{2^i} \log^{c_1}\left(\frac{n}{2^i}\right) \right) \\
    &=  \sum_{i=0}^{O(\log n)} \sum_{j=0}^{2^i}  m_{i,j}\left(\frac{n}{2^i}\right)^{1/k} + \sum_{i=0}^{O(\log n)} \sum_{j=0}^{2^i} \frac{n}{2^i}\log^{c_1}\left(\frac{n}{2^i}\right) \\
    &= \sum_{i=0}^{O(\log n)} \left(\frac{n}{2^i}\right)^{1/k} \sum_{j=0}^{2^i}  m_{i,j} + O(n\log^{c_1+1} n)\\
    &\leq \sum_{i=0}^{O(\log n)} \left(\frac{n}{2^i}\right)^{1/k} (m + c_2 2^i) + O(n\log^{c_1+1} n) \\
    &= mn^{1/k} \sum_{i=0}^{O(\log n)} \frac{1}{2^{i/k}} + c_2n^{1/k} \sum_{i=0}^{O(\log n)} 2^{(1-1/k)i} + O(n\log^{c_1+1} n)\\
    &= O(mn^{1/k}) + c_2n^{1/k} \cdot O(n^{1-1/k}) + O(n\log^{c_1+1} n)\\
    &= O(mn^{1/k} + n\log^{c_1+1} n). \qedhere
\end{align*}
\end{proof}

\begin{lemma}\label{lem:group_properties}
    If the groups adhere to the following properties, then $\G$ has $O(m\sqrt{n}+n\log^2n)$ complexity:
    \begin{enumerate}
        \item each group contains $\Theta(\sqrt{n})$ sites, and
        \item each vertex of $P$ is only used by shortest paths within $O(1)$ groups.
    \end{enumerate}
\end{lemma}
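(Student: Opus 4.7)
The plan is to decompose the analysis into two layers: first, I would bound the complexity added at one level of the geodesic recursion (one chord $\lambda$ and its associated modified 1-dimensional spanner $\G_\lambda$), and then I would plug that bound into a recurrence of the type handled by Lemma~\ref{lem:1_dim_recursion}.

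For the first layer, I would separate the edges contributed to $\G$ via $\G_\lambda$ into \emph{within-group} edges $(p,c_i)$ and \emph{between-center} edges $(c_i,c)$, and analyze them level by level in the $O(\log n)$-level 1-dimensional recursion. At level $j$ of that recursion there are $n_j = n/2^j$ projected sites; by Property~1 these are partitioned into $\Theta(\sqrt{n_j})$ groups of $\Theta(\sqrt{n_j})$ sites each, so there are $O(n_j)$ within-group edges and $O(\sqrt{n_j})$ between-center edges. Each within-group edge from a site in $S_i$ to its local center is a shortest path in $P$ whose vertices all belong to the region $R_i$, hence has complexity at most $|R_i|$. The within-group contribution at this 1-dim level is therefore
\[
  \sum_i |S_i|\cdot O(|R_i|) \;=\; O(\sqrt{n_j})\cdot \sum_i |R_i| \;=\; O(m\sqrt{n_j}),
\]
where the last step uses Property~2, which implies $\sum_i |R_i| = O(m)$. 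The $O(\sqrt{n_j})$ between-center edges are arbitrary shortest paths in $P$ of complexity $O(m)$ each, contributing another $O(m\sqrt{n_j})$. Summing the geometric series over the $O(\log n)$ levels of the 1-dim recursion gives $\sum_j O(m\sqrt{n/2^j}) = O(m\sqrt{n})$, and adding one unit per edge to account for the trivial lower bound on each path's complexity gives $O(m\sqrt{n}+n\log n)$ for the complexity added at a single level of the geodesic recursion.

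For the second layer, I would set up the recurrence
\[
  T(n,m) \;=\; T(n_\ell,m_\ell)+T(n_r,m_r)+O(m\sqrt{n}+n\log n),
\]
where $n_\ell+n_r=n$ with $n_\ell,n_r\le 2n/3$ and $m_\ell+m_r=m+O(1)$ because the chord $\lambda$ adds only $O(1)$ vertices to each subpolygon. Applying (a mild generalization of) Lemma~\ref{lem:1_dim_recursion} with $k=2$ and $c_1=1$ resolves this to $T(n,m)=O(m\sqrt{n}+n\log^2 n)$, as required.

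The main obstacle I anticipate is justifying the first layer cleanly, in particular ensuring that \emph{every} within-group shortest path for a group $S_i$ is actually contained in $R_i$ (so that its complexity really is bounded by $|R_i|$ and Property~2 can be invoked through a simple summation). This hinges on the grouping construction, which needs to couple the choice of groups with the choice of the regions $R_i$; I expect the subsequent subsection to provide the explicit construction satisfying both properties, at which point the routine geometric sum and the invocation of Lemma~\ref{lem:1_dim_recursion} close the argument.
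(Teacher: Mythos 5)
Your decomposition into within-group and between-center edges, and your use of Property~2 to get $\sum_i |R_i| = O(m)$ at the top level, match the paper. The outer recurrence over the split by $\lambda$ (with $m_\ell+m_r=m+O(1)$ and an appeal to Lemma~\ref{lem:1_dim_recursion}) is also exactly right. The gap is in how you sum over the inner, 1-dimensional recursion.

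At level $j$ of the 1-dim recursion there are $2^j$ subproblems, each with $n/2^j$ sites. Your per-level total $O(m\sqrt{n/2^j})$ accounts for only one of them; a between-center edge in \emph{any} of these subproblems is a priori a shortest path of complexity $O(m)$, so without further argument the level-$j$ total is $2^j\cdot O\bigl(\sqrt{n/2^j}\bigr)\cdot O(m)=O\bigl(m\sqrt{n\,2^j}\bigr)$, and summing over $j\le \log n$ gives $O(mn)$, not $O(m\sqrt n)$. The geometric series you invoke does not converge to $O(m\sqrt n)$ once the number of subproblems per level is accounted for. The missing idea — and the substantive geometric content of the paper's proof — is that splitting $S_\lambda$ at the point $O$ on $\lambda$ corresponds to splitting the polygon by a horizontal segment through $O$: all sites projecting below $O$ lie in the part of the polygon below that segment, and shortest paths among them are disjoint from shortest paths among the sites above it. Hence the complexity parameter itself is partitioned, $m_1+m_2=m$, at every split by $O$, and the inner recursion $T(n,m)=T(n/2,m_1)+T(n/2,m_2)+O(m\sqrt n + n)$ is again of the form handled by Lemma~\ref{lem:1_dim_recursion}, solving to $O(m\sqrt n + n\log n)$. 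In other words, Lemma~\ref{lem:1_dim_recursion} must be applied twice, once for each nested recursion, rather than replacing the inner application by a direct summation. With that observation inserted, the rest of your argument (including the outer recurrence and your remark that containment of within-group paths in $R_i$ is deferred to the grouping construction) goes through.
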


\begin{proof}
    We will first prove the complexity of the edges in one level of the 1-dimensional spanner is $O(m\sqrt{n} + n)$. Two types of edges are added to the spanner: (a) edges from some $c_i$ to $c$, and (b) edges from some $p \in S_i$ to $c_i$. According to property 1, there are $\Theta(\sqrt{n})$ groups, and thus $\Theta(\sqrt{n})$ type (a) edges, that each have a complexity of $O(m)$. Thus the total complexity of these edges is $O(m\sqrt{n})$. Let $r_i$ be the maximum complexity of a shortest path between any two sites in $S_i$ and let $V_i$ be the set of vertices this path visits. 
    Property 2 states that for any $v \in V_i$ it holds that $|\{j \mid v \in V_j\}| = O(1)$, which implies that $\sum_i r_i = O(m)$. The complexity of all type (b) edges is thus $O(n) + \sum_i r_i O(\sqrt{n}) = O(m\sqrt{n} + n)$. 
    
    Next, we show that in both recursions, the 1-dimensional recursion and the recursion on $P_\ell$ and $P_r$, not only the number of sites, but also the complexity of the polygon is divided over the two subproblems. Splitting the sites into left and right of $O$ corresponds to splitting the polygon horizontally at $O$: all sites left (right) of $O$ in the 1-dimensional space lie in the part of the polygon below (above) this horizontal line segment. Thus, shortest paths between sites left of $O$ use part of the polygon that is disjoint from the shortest paths between the sites right of $O$. This means that for two subproblems we have that $m_1 + m_2 = m$, where $m_i$ denotes the maximum complexity of a path in subproblem $i$. The recursion for the complexity is now given by
    \begin{equation*}
    T(n,m) = T(n/2, m_1) + T(n/2,m_2) + O(m\sqrt{n} + n)\text{, with } m_1 + m_2 = m.
    \end{equation*}
    According to Lemma~\ref{lem:1_dim_recursion} this solves to $T(n) = O(m\sqrt{n} + n \log n)$.
    
    Similarly, the split by $\lambda$ divides the polygon into two subpolygons, while adding at most two new vertices. As all vertices, except for the endpoints of $\lambda$, are in $P_\ell$ or $P_r$ (not both), the total complexity of both subpolygons is at most $m+4$. We obtain the following recursion
    \begin{equation*}
    T(n,m) = T(n/2, m_1) + T(n/2,m_2) + O(m\sqrt{n} + n\log n)\text{, with } m_1 + m_2 = m + 4.
    \end{equation*}
    Lemma~\ref{lem:1_dim_recursion} again states this solves to $T(n) = O(m\sqrt{n} + n\log^2n)$.
\end{proof}

To form groups that adhere to these two properties, we consider the shortest path tree $\mathit{SPT}_c$ of $c$: the union of all shortest paths from $c$ to the vertices of $P$.   We include the sites $p \in S\setminus \{c\}$ as leaves in $\mathit{SPT}_c$ as children of their apex, i.e., the last vertex on $\pi(c,p)$. This gives rise to an ordering of the sites in $S$, and thus of the weighted sites in $S_\lambda$, based on the in-order traversal of the tree. We assign the first $\lceil \sqrt{n} \rceil$ sites to $S_1$, the second $\lceil \sqrt{n} \rceil$ to $S_2$, etc. See Figure~\ref{fig:shortest_path_tree}.

\begin{figure}
    \centering
    \includegraphics{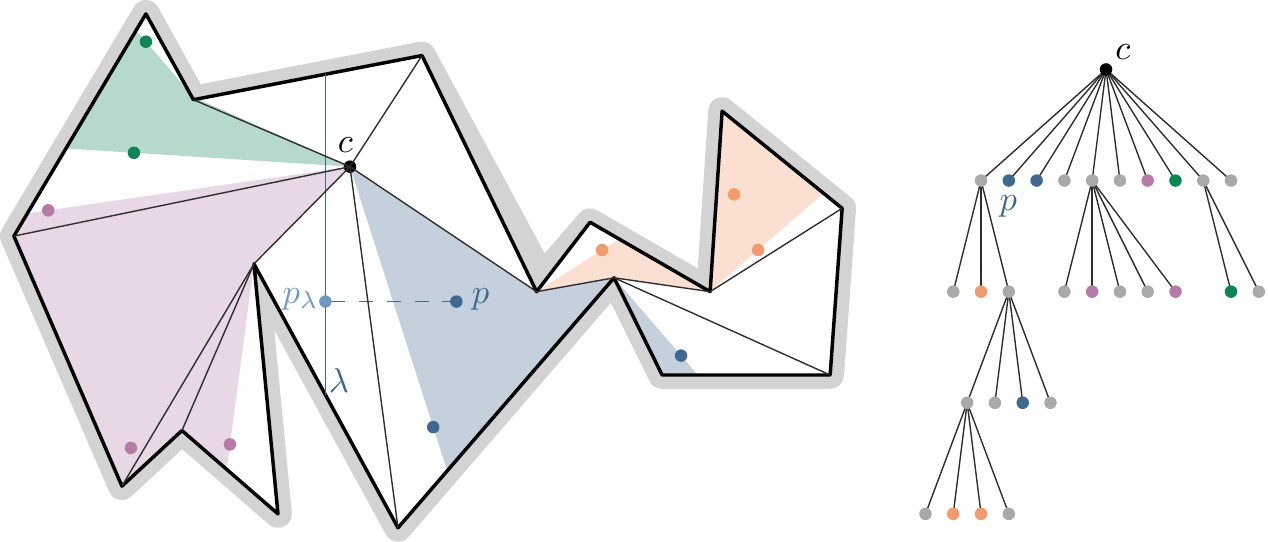}
    \caption{The shortest path tree of $c$. The polygon vertices are grey in the tree. Each group $S_i$ has an associated polygonal region $R_i$ in $P$.}
    \label{fig:shortest_path_tree}
\end{figure}

Clearly these groups adhere to property 1. Proving that they also adhere to property 2 is more involved. For each group $S_i$, consider the minimal subtree $\T_i$ of $\mathit{SPT}_c$ containing all $p \in S_i$. $\T_i$ defines a polygonal region $R_i$ in $P$ as follows. Refer to Figure~\ref{fig:shortest_path_tree} for an illustration. Let $v_i$ be the root of $\T_i$. Consider the shortest path $\pi(v_i,a)$, where $a$ is the first  site of $S_i$ in $\T_i$ by the ordering used before. Let $\pi_a$ be the path obtained from $\pi(v_i,a)$ by extending the last segment of $\pi(v_i,a)$ to the boundary of $P$. Similarly, let $\pi_b$ be such a path for the last site of $S_i$ in $\T_i$. 
We take $R_i$ to be the region in $P$ rooted at $v_i$ and bounded by $\pi_a$, $\pi_b$, and some part of the boundary of $P$, that contains the sites in $S_i$. In case $v_i$ is~$c$, we split $R_i$ into two regions $R_j$ and $R_k$, such that the angle of each of these regions at $c$ is at most $\pi$. The set $S_i$ is then also split into two sets $S_j$ and $S_k$ accordingly. The following three lemmas on $R_i$ and $\T_i$ together imply that the groups adhere to property 2.

\begin{restatable}{lemma}{OnlyTiInRi}
    Only vertices of $P$ that are in $\T_i$ can occur in $R_i$.
\end{restatable}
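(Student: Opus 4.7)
My plan is to prove this by reducing both the geometric containment condition ($u \in R_i$) and the combinatorial membership condition ($u \in \T_i$) to statements about the shortest path tree $\mathit{SPT}_c$, and then matching them via the angular/in-order correspondence used to define the groups. Throughout, I will use the standard fact that shortest paths from a common source $c$ inside a simple polygon form a tree: two such paths can share a common prefix and then diverge, but they do not cross each other, and they cannot cross $\partial P$.

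First, I would show that $u$ must be a descendant of $v_i$ in $\mathit{SPT}_c$, i.e.\ that $\pi(c,u)$ passes through $v_i$. The boundary of $R_i$ consists of $\pi_a$, $\pi_b$, and a portion of $\partial P$, while $c$ lies outside $R_i$ (if $v_i=c$, one works with the sub-regions $R_j$, $R_k$ which meet at $c$). The path $\pi(c,u)$ cannot cross $\partial P$, and it also cannot properly cross $\pi_a$ or $\pi_b$: both of those are (extensions of) shortest paths from $c$, so by the tree property $\pi(c,u)$ can only agree with them on a common prefix. Hence $\pi(c,u)$ can enter $R_i$ only through the single interior corner where $\pi_a$ and $\pi_b$ meet, namely $v_i$, so $v_i$ lies on $\pi(c,u)$.

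Second, I would show that the subtree of $\mathit{SPT}_c$ rooted at $u$ contains at least one site of $S_i$; together with step one this gives $u\in \T_i$. The leaves in the subtree rooted at $u$ form a contiguous range in the in-order traversal used to define the groups. Suppose for contradiction that none of these leaves lies in $S_i$; then the entire range sits either strictly before $a$ or strictly after $b$ in the in-order. Because the in-order around $v_i$ corresponds to the angular order of $v_i$'s sub-branches, in the first case $u$ lies in the angular sector of $v_i$ lying on the opposite side of $\pi(v_i,a)$ from $R_i$, and symmetrically in the second case on the opposite side of $\pi(v_i,b)$. Either conclusion contradicts $u\in R_i$, so $u$ has a descendant in $S_i$ and therefore lies on $\pi(v_i,s)$ for some $s\in S_i$, establishing $u\in \T_i$.

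The main obstacle, and the step I would spend most care on, is the angular/in-order correspondence in step two: I have to argue cleanly that the closed sector of $v_i$ whose boundary rays are $\pi_a$ and $\pi_b$ (extended to $\partial P$) is precisely the union of sub-branches of $v_i$ visited in in-order between the branch containing $a$ and the branch containing $b$. Edge cases to dispatch separately are when $u$ lies on $\pi_a$ or $\pi_b$ (in which case $u$ is already on the shortest path $\pi(v_i,a)$ or $\pi(v_i,b)$, hence trivially in $\T_i$), when $u$ lies on the boundary portion along $\partial P$, and when $v_i=c$ so that $R_i$ is split into two sub-regions; each of these falls out directly from the same two-step argument applied to the relevant sub-region.
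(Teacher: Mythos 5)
Your first step is exactly the paper's proof: extend $\pi_a$ and $\pi_b$ through $v_i$ along $\pi(v_i,c)$ to obtain two shortest paths to $c$, observe that $\pi(u,c)$ cannot cross either of them twice (nor cross $\partial P$), and conclude that $\pi(u,c)$ must leave $R_i$ through the single corner $v_i$, so that $u$ hangs below $v_i$ in $\mathit{SPT}_c$, angularly between the branches containing $a$ and $b$. That part is correct, carefully stated, and is all the paper writes down for this lemma.

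Your second step is an addition, and it is where the argument has a genuine hole. You try to upgrade ``descendant of $v_i$ inside the wedge'' to membership in the \emph{minimal} (Steiner) subtree spanning $S_i$, and your contradiction argument needs the set of sites below $u$ to be nonempty: if that set is empty, the ``contiguous range'' gives no angular anchor for $u$, and no contradiction can be derived. Such vertices really do occur in $R_i$ --- for instance, every convex vertex of $\partial P$ on the stretch of boundary between the endpoints of $\pi_a$ and $\pi_b$ is a leaf of $\mathit{SPT}_c$ with no site beneath it, and a reflex vertex inside the wedge need not be an ancestor of any site of $S_i$ either. For those vertices the strengthened target of your step two (that $u$ lies on $\pi(v_i,s)$ for some $s\in S_i$) is simply not attainable, so the gap cannot be patched by handling more cases. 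The way out is to notice that what the paper actually establishes, and what the subsequent counting via Lemma~\ref{lem:occurence_spt} consumes, is precisely the conclusion of your step one: every vertex of $R_i$ lies in the portion of $\mathit{SPT}_c$ below $v_i$ between the branch of $a$ and the branch of $b$, with $\T_i$ read as that whole sub-fan. In short, stop after your first step; the second step aims at a statement that is both stronger than needed and not true for vertices without site descendants.
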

\begin{proof}
    Let $v$ be a vertex in $R_i$. The paths $\pi_a$ and $\pi_b$ are shortest paths from the polygon boundary to $v_i$. Additionally, we can extend these paths with $\pi(v_i,c)$ to obtain shortest paths to the site $c$. The shortest path from $v$ to $c$ cannot intersect either of these shortest paths twice, thus $v$ is in $\T_i$.
\end{proof}
   
\begin{restatable}{lemma}{PathsInSi}
    All shortest paths between sites in $S_i$ are contained within $R_i$.
\end{restatable}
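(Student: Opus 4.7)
The plan is to argue by contradiction, using two ingredients: (i) the bounding curves $\pi_a$ and $\pi_b$ of $R_i$ are themselves shortest paths in $P$, and (ii) the standard non-crossing property which states that two shortest paths in a simple polygon cannot properly cross---their intersection is a (possibly empty) union of common subpaths.

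To establish (i), I would observe that $\pi_a$ is obtained by extending the last segment $\overline{ua}$ of $\pi(v_i,a)$ (where $u$ is the predecessor of $a$ on $\pi(v_i,a)$) along the same line until it first meets $\partial P$ at a point $a'$. No new bend is introduced at $a$ since the extension is collinear with $\overline{ua}$, so $\pi_a$ bends only at the reflex vertices of $P$ where $\pi(v_i,a)$ already bent, and at the same angles. Because $P$ is simply connected, any path that is locally shortest (bends only at reflex vertices with the correct turning side) is globally shortest in its unique homotopy class; hence $\pi_a = \pi(v_i,a')$ is a shortest path in $P$. The same argument shows that $\pi_b$ is a shortest path from $v_i$ to some $b' \in \partial P$.

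With (i) and (ii) in hand I would proceed as follows. Suppose for contradiction that there exist $p,q \in S_i$ with $\pi(p,q) \not\subseteq R_i$. Since both endpoints lie in $R_i$, the continuous curve $\pi(p,q)$ must leave $R_i$ and return, so it crosses $\partial R_i$ transversally at least twice. The boundary $\partial R_i$ is contained in $\pi_a \cup \pi_b \cup \partial P$, and because $\pi(p,q) \subseteq P$ it cannot transversally cross $\partial P$; hence at least one transversal crossing lies on $\pi_a$ or $\pi_b$. But $\pi(p,q)$ is itself a shortest path, so by the non-crossing property it cannot properly cross the shortest path $\pi_a$ (or $\pi_b$), a contradiction.

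The main obstacle will be handling the degenerate situations: when $v_i = c$ and $R_i$ is further split by the angular condition, when $p$ or $q$ coincides with $a$ or $b$ and thus lies on $\partial R_i$, and especially when $\pi(p,q)$ shares an entire subpath with $\pi_a$ or $\pi_b$ rather than meeting it only transversally. In the shared-subpath case one must verify that between and after shared subpaths the shortest path $\pi(p,q)$ stays on the $R_i$-side of the boundary, which again follows from the non-crossing property applied to the pieces of $\pi(p,q)$ lying strictly off $\pi_a$ and $\pi_b$.
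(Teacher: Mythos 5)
Your proposal rests on a property that is false: two shortest paths in a simple polygon \emph{can} properly cross. The correct statement is that they cross at most once (if they crossed at two points $x$ and $y$, uniqueness of the shortest path between $x$ and $y$ would force the two subpaths to coincide); a single transversal crossing is perfectly possible, e.g.\ the two diagonals of a convex quadrilateral. Your part (i), that $\pi_a$ and $\pi_b$ are shortest paths (the collinear extension keeps the path taut, hence geodesic in a simple polygon), is fine and is indeed needed. But the ``at most one crossing'' fact only rules out $\pi(p,q)$ exiting and re-entering $R_i$ through the \emph{same} bounding path; it does not rule out the main case, where $\pi(p,q)$ exits through $\pi_a$ and re-enters through $\pi_b$, travelling around $v_i$ outside $R_i$. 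Your final sentence (``it cannot properly cross $\pi_a$, a contradiction'') therefore does not follow, and the case your concluding remarks dismiss as degenerate is in fact the heart of the lemma.

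The paper closes exactly this gap with an extra geometric argument: since $v_i$ is not a reflex vertex of $R_i$ (this is why $R_i$ is split into two pieces of angle at most $\pi$ when $v_i = c$), there is a line through $v_i$ that avoids the interior of $R_i$. A path that leaves $R_i$ through $\pi_a$ and returns through $\pi_b$ must wind around $v_i$ and hence meet this line twice; the straight segment of the line between the two intersection points lies in $P$ (it separates $v_i$ from the portion of the path outside $R_i$) and shortcuts $\pi(p,q)$, contradicting its optimality. To repair your proof you would need to add this step, or some equivalent argument handling the exit-through-$\pi_a$, enter-through-$\pi_b$ configuration.
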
 
\begin{proof}
    Suppose there is a shortest path $\pi(p,q)$, $p_\lambda,q_\lambda \in S_i$, that is not contained within $R_i$. Then $\pi(p,q)$ must exit $R_i$ through $\pi_a$ and enter again through $\pi_b$, or the other way around. The path thus goes around $v_i$. Consider a line through $v_i$ that does not pass through the interior of $R_i$. Note that this exists because $v_i$ is not a reflex vertex of $R_i$. As $P$ is a simple polygon, the line must intersect $\pi(p,q)$ twice. This provides a shortcut of the shortest path by following along the line, which is a contradiction.
\end{proof}

\begin{restatable}{lemma}{occurenceSPT}\label{lem:occurence_spt}
    Any vertex $v \in \mathit{SPT}_c$ occurs in at most two trees $\T_i$ and $\T_j$ as a non-root node.
\end{restatable}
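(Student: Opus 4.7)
My plan is to argue via the in-order traversal of $\mathit{SPT}_c$. This traversal induces a linear ordering on all the sites, and by construction the groups $S_i$ are contiguous blocks of $\lceil \sqrt{n} \rceil$ sites in this ordering. (The case $v_i = c$ only introduces one extra split, which is performed along the angular order at~$c$; since the in-order traversal visits the children of $c$ in angular order, this split also produces contiguous sub-ranges, so it can be absorbed into the same framework.) A standard property of tree traversals is that for any vertex $v$ of $\mathit{SPT}_c$, the set of sites contained in the subtree $T_v$ of $\mathit{SPT}_c$ rooted at $v$ also forms a contiguous interval $I_v = [\ell_v, r_v]$ in this order.

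Next I would characterise when $v$ is a non-root node of some $\T_i$. Recall that $v_i$ is the lowest common ancestor (LCA) of $S_i$ in $\mathit{SPT}_c$ and that $\T_i$ consists precisely of the nodes lying on paths from $v_i$ down to the sites of $S_i$. Hence $v$ lies in $\T_i$ iff $v$ is both a descendant of $v_i$ and an ancestor of some $p \in S_i$; and $v$ is a non-root iff, in addition, $v$ is a strict descendant of $v_i$. By the LCA property, being a strict descendant of $v_i$ is equivalent to $S_i$ having at least one site outside $T_v$ (otherwise the LCA of $S_i$ would be $v$ itself or a node below $v$). Rephrased in terms of intervals, $v$ is a non-root node of $\T_i$ exactly when the interval of $S_i$ intersects $I_v$ but is not contained in $I_v$, i.e.\ when $S_i$ straddles at least one of the endpoints $\ell_v$ or $r_v$.

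Because the groups partition the sites into pairwise disjoint contiguous intervals, at most one group's interval can straddle $\ell_v$ (namely the group containing position $\ell_v$, if it extends strictly to the left of it) and at most one can straddle $r_v$. Combined, at most two distinct groups $S_i, S_j$ can have $v$ as a non-root node of their trees $\T_i$ and $\T_j$, which is exactly the claim. I expect the main subtlety to lie in the $v_i = c$ case: one has to verify that splitting $S_i$ by angle at~$c$ does not destroy the ``contiguous interval'' structure on which the whole argument depends; once that is in place the conclusion follows immediately from the in-order/LCA correspondence.
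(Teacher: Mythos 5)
Your proof is correct and takes essentially the same approach as the paper: both arguments rest on the facts that the groups are contiguous blocks in the in-order traversal, that the sites below any node $v$ form a contiguous block, and that a non-root occurrence of $v$ in $\T_i$ forces $S_i$ to have sites both inside and outside the subtree of $v$. The paper phrases this as a contradiction with three straddling groups while you count directly via the two interval endpoints, but the underlying idea is identical.
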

\begin{proof}
    Let $v \in \T$, and $\T(v)$ be the subtree rooted at $v$. Suppose $v$ is a non-root node of three trees $\T_i$, $\T_j$, and $\T_k$, $i < j < k$. Then there must be three sites $p_i \in S_i$, $p_j \in S_j$, and $p_k \in S_k$ in $\T(v)$. As our groups occur in order, we know $p_i$ is before $p_j$, and $p_j$ is before $p_k$ in the in-order traversal of $\mathit{SPT}_c$. Let $p$ be the parent of $v$. As $v$ is a non-root node of each of the subtrees, we know that $p$ is in $\T_i$, $\T_j$, and $\T_k$ as well. This implies that there must be a site $p_j' \in S_j$ in $\T \setminus \T(v)$. If $p_j'$ is before $p_j$, then $p_j'$ is also before $p_i$, because $p_j' \notin \T(v)$. This implies that $p_i$ is in $S_j$, because it lies between $p_j' \in S_j$ and $p_j \in S_j$, which is a contradiction. If $p_j'$ is after $p_j$, then the same reasoning implies $p_k \in S_j$, which is also a contradiction.
\end{proof}

Note that the root $r$ of $\T_i$ is never used in a shortest path between sites in $S_i$, because $r$ cannot be a reflex vertex of $R_i$. Consequently, Lemma~\ref{lem:group_properties} states that the spanner has complexity $O(m\sqrt{n}+n\log^2n)$.

\begin{lemma}\label{lem:4sqrt2-spanner}
The graph $\G$ is a geodesic $4\sqrt{2}$-spanner of size $O(n\log^2 n)$.
\end{lemma}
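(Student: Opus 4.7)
The plan is to reduce the claim to Lemma~\ref{lem:1D_to_geodesic_spanner} by showing that the modified 1-dimensional construction on $S_\lambda$ is an additively weighted \emph{4}-spanner; plugging $t=4$ into that lemma then yields the $4\sqrt{2}$ spanning ratio in the simple polygon. The size bound comes from a straightforward edge count: at each level of the 1-dimensional recursion we add at most one edge from every site to its group center and one edge from every group center to the global center $c$, which is $O(n)$ edges per level since there are $\Theta(\sqrt{n})$ groups. Summed over the $O(\log n)$ levels of the 1-dimensional recursion, $\G_\lambda$ has $O(n\log n)$ edges, and multiplying by the $O(\log n)$ levels of the outer polygon-partition recursion yields $O(n\log^2 n)$ edges for $\G$.

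For the spanning ratio, I would mimic the proof of Lemma~\ref{lem:1-dim-2-spanner}. Fix $p_\lambda,q_\lambda\in S_\lambda$ and pass to the level of the 1-dimensional recursion at which they are first separated by the splitting point $O$ (the degenerate case where one of them sits on the vertical line through $O$ is handled exactly as in Lemma~\ref{lem:1-dim-2-spanner}). Let $c$ be the global center at that level and $c_i,c_j$ be the group centers of the groups containing $p_\lambda$ and $q_\lambda$. The case split is on whether these groups coincide. If $p_\lambda,q_\lambda$ lie in the same group $S_i$, then the two-hop path $p_\lambda\to c_i\to q_\lambda$ lies in $\G_\lambda$ and, using the minimality of $c_i$ inside $S_i$ exactly as in Equation~(\ref{eq:dw}), has weighted length at most $2d_w(p_\lambda,q_\lambda)$.

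If instead $p_\lambda$ and $q_\lambda$ lie in different groups, the four-hop path $p_\lambda\to c_i\to c\to c_j\to q_\lambda$ lies in $\G_\lambda$. Bounding each hop by the triangle inequality at $O$ and using the chain of minimality inequalities $d_w(c,O)\leq d_w(c_i,O)\leq d_w(p_\lambda,O)$ together with its symmetric counterpart for $q_\lambda$, the four hops contribute at most $2d_w(p_\lambda,O)$, $2d_w(p_\lambda,O)$, $2d_w(q_\lambda,O)$, and $2d_w(q_\lambda,O)$ respectively, for a total of $4d_w(p_\lambda,O)+4d_w(q_\lambda,O)=4d_w(p_\lambda,q_\lambda)$. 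Hence $\G_\lambda$ is a $4$-spanner, and Lemma~\ref{lem:1D_to_geodesic_spanner} with $t=4$ completes the proof.

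The only step that requires care is the four-hop bookkeeping in the different-groups case; everything else is a direct reuse of the arguments already established in Lemmas~\ref{lem:1-dim-2-spanner} and~\ref{lem:1D_to_geodesic_spanner}, so I do not expect any genuine obstacle.
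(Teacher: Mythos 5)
Your proposal is correct and follows essentially the same route as the paper: reduce to showing $\G_\lambda$ is an additively weighted $4$-spanner via the four-hop path $p_\lambda \to c_{i,\lambda} \to c_\lambda \to c_{j,\lambda} \to q_\lambda$, bound each hop by the triangle inequality at $O$ together with the minimality chain $d_w(c_\lambda,O)\leq d_w(c_{i,\lambda},O)\leq d_w(p_\lambda,O)$, and then invoke Lemma~\ref{lem:1D_to_geodesic_spanner} with $t=4$. The only cosmetic difference is that you split off the same-group case explicitly, whereas the paper's four-hop bound covers it implicitly; both yield the same $4d_w(p_\lambda,q_\lambda)$ bound and the same $O(n\log^2 n)$ edge count.
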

\begin{proof}
We prove the 1-dimensional spanner $\G_\lambda$ is a 4-spanner with $O(n\log n)$ edges. Together with Lemma~\ref{lem:1D_to_geodesic_spanner}, this directly implies $\G$ is a $4\sqrt{2}$-spanner with $O(n\log^2n)$ edges.

In each level of the recursion, we still add only a single edge for each site. Thus, the total number of edges is $O(n\log n)$. Again, consider two sites $p_\lambda,q_\lambda \in S_\lambda$, and let $c_\lambda$ be the chosen center point at the level where $p_\lambda$ and $q_\lambda$ are separated by $O$. Let $S_i$ be the group of $p_\lambda$ and $S_j$ the group of $q_\lambda$. Both the edges $(p_\lambda,c_{i,\lambda})$ and $(c_{i,\lambda},c_\lambda)$ are in $\G_\lambda$, similarly for~$q_\lambda$. 
We thus have a path $p_\lambda \to c_{i,\lambda} \to c_\lambda \to c_{j,\lambda} \to q_\lambda$ in $\G_\lambda$. Using that $d_w(p_\lambda,c_{i,\lambda}) \leq d_w(p_\lambda,O) + d_w(c_{i,\lambda},O)$, because of the triangle inequality, and $d_w(c_{i,\lambda},O) \leq d_w(p_\lambda,O)$, we find:

\begin{align*}
       d_{\G_\lambda}(p_\lambda,q_\lambda) &= d_w(p_\lambda,c_{i,\lambda}) + d_w(c_{i,\lambda},c_\lambda) + d_w(c_\lambda,c_{j,\lambda}) + d_w(c_{j,\lambda},q_\lambda)\\
    &\leq d_w(p_\lambda,O) + 2 d_w(c_{i,\lambda},O)+ 2d_w(c_\lambda,O) + 2d_w(c_{j,\lambda},O) + d_w(q_\lambda,O) \\
    &\leq 4d_w(p_\lambda,O) + 4d_w(q_\lambda,O)\\
    &= 4d_w(p_\lambda,q_\lambda)  \qedhere
\end{align*}\end{proof}

\subsubsection{A $2\sqrt{2}k$-spanner of complexity $O(mn^{1/k} +n\log^2n)$}\label{sub:general_complexity_spanner}

We first sketch how to generalize the approach of Section~\ref{sec:4sqrt2spanner} to obtain a spanner with a trade-off between the (constant) spanning ratio and complexity, and then formally prove the result in Lemma~\ref{lem:delta2sqrt2-spanner}. Fix $N = n^{1/k}$, for some integer constant $k \geq 1$. Instead of $\Theta(\sqrt{n})$ groups, we create $\Theta(N)$ groups. For each of these groups we select a center, and then partition the groups further recursively. By connecting each center to its parent center, we obtain a tree of height $k$. This results in a spanning ratio of $2\sqrt{2}k$.

\begin{restatable}{lemma}{simplepolygonspanner}\label{lem:delta2sqrt2-spanner}
For any integer constant $k \geq 1$, there exists a geodesic $2\sqrt{2}k$-spanner of size~$O(n\log^2n)$ and complexity $O(m n^{1/k} + n\log^2 n)$. 
\end{restatable}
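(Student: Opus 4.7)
The plan is to extend the two-level grouping of Section~\ref{sec:4sqrt2spanner} to a $k$-level hierarchy. Fix $N=\lceil n^{1/k}\rceil$. In each recursive call of the 1-dimensional additively weighted spanner on some projected set of $n'$ sites, after picking the overall center $c_\lambda$ minimizing $d_w(\cdot,O)$, I would use the shortest-path-tree in-order to split the sites into $\lceil (n')^{1/k}\rceil$ top-level groups of size $(n')^{1-1/k}$, split each of these into the same number of sub-groups using the subtree rooted at that group's root, and repeat for $k-1$ nested splits. At every level $j\in\{1,\dots,k-1\}$ I would pick a group center $c^{(j)}_I$ minimizing $d_w(\cdot,O)$ inside the group indexed by $I=(i_1,\dots,i_j)$, add an edge from it to the enclosing level-$(j-1)$ center, and at the bottom add an edge from every site to its level-$(k{-}1)$ center. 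These 1-dimensional edges are lifted to geodesic paths in $P$ as in Section~\ref{sec:simple_geodesic_spanner}.

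For the spanning ratio, consider $p_\lambda,q_\lambda$ separated by $O$ at this level. Let $c^{(p)}_0=p_\lambda,c^{(p)}_1,\dots,c^{(p)}_{k}=c_\lambda$ be the chain of ancestor centers of $p_\lambda$, and analogously for $q_\lambda$. By the triangle inequality each hop satisfies $d_w(c^{(p)}_j,c^{(p)}_{j+1})\le d_w(c^{(p)}_j,O)+d_w(c^{(p)}_{j+1},O)$, and the greedy choice of centers gives $d_w(c^{(p)}_j,O)\le d_w(p_\lambda,O)$ for every $j$. Summing the $k$ hops on each side of $c_\lambda$ yields $d_{\G_\lambda}(p_\lambda,q_\lambda)\le 2k\,d_w(p_\lambda,q_\lambda)$, so $\G_\lambda$ is a 1-dimensional $2k$-spanner with $O(n'\log n')$ edges (since every call still adds $O(n')$ edges in total). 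Combined with Lemma~\ref{lem:1D_to_geodesic_spanner} this gives that $\G$ is a geodesic $2\sqrt{2}k$-spanner of size $O(n\log^2 n)$.

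For the complexity, I would reapply the three region lemmas of Section~\ref{sec:4sqrt2spanner} at every level. At level $j\ge 1$, within a level-$(j-1)$ region $R^{(j-1)}_i$ with $m^{(j-1)}_i$ vertices, the subregions $R^{(j)}_{i,\ell}$, defined from the subtree $\T^{(j-1)}_i$ restricted to $R^{(j-1)}_i$ using the analogous in-order construction, again satisfy that only tree vertices lie inside them, that shortest paths between members of a sub-group stay inside, and that each polygon vertex appears in at most two subregions of the same level. Cascading the $O(1)$-overlap gives $\sum_{I\text{ at level }j}m^{(j)}_I=O(m)$ for every $j$. Hence the $N$ center-to-parent edges inside each level-$(j-1)$ region contribute $O(Nm^{(j-1)}_i)$ complexity, summing to $O(mN)$ per level; the $n'$ leaf-to-center edges contribute a further $O(mN+n')$. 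Totalling across the $k-1$ levels yields $O(mn^{1/k}+n')$ complexity per 1-dimensional call. Plugging this into Lemma~\ref{lem:1_dim_recursion} twice (once for the split by $O$ with $m_1+m_2=m$, once for the split by $\lambda$ with $m_1+m_2=m+4$) produces the claimed $O(mn^{1/k}+n\log^2 n)$.

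The main obstacle is showing that the region-overlap property truly cascades through all $k$ levels. Specifically, I would have to verify that analogues of Lemma~\ref{lem:occurence_spt} and its two companions remain true when the underlying shortest-path tree is replaced by the subtree $\T^{(j-1)}_i$ of the level above, and that the special case for the root $v^{(j-1)}_i$, namely splitting the region whenever the incident interior angle exceeds~$\pi$, is handled consistently at every level so that the number of regions containing a given vertex stays constant level by level. Once this cascading is established, the telescoping argument for the spanning ratio and the summation for the complexity are direct generalizations of the $k=2$ case already handled in Lemma~\ref{lem:4sqrt2-spanner}.
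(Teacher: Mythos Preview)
Your proposal is correct and matches the paper's argument in structure: build a $k$-level center hierarchy inside each call of the $1$-dimensional spanner, telescope the triangle-inequality estimates to get stretch $2k$, then invoke Lemma~\ref{lem:1D_to_geodesic_spanner} for the extra $\sqrt{2}$ and Lemma~\ref{lem:1_dim_recursion} twice for the complexity. The one difference is how the sub-groups are formed: the paper rebuilds a \emph{fresh} shortest-path tree of the newly chosen center $c^{(j-1)}_{i'}$ at every level of the hierarchy, whereas you keep the single tree $\mathit{SPT}_c$ and take successive contiguous blocks of its in-order traversal. Both variants give the same bounds; your variant is in fact the one the paper uses later in the efficient construction (Lemma~\ref{lem:1-dim-spanner-bottom-up}), with $\mathit{SPT}_\lambda$ in place of $\mathit{SPT}_c$.

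Your stated obstacle about cascading largely dissolves once you notice this. Because every level-$j$ group is a contiguous block of the \emph{same} in-order traversal of $\mathit{SPT}_c$, Lemma~\ref{lem:occurence_spt} applies verbatim with the $N^{j}$ groups playing the role of the $S_i$: any vertex of $\mathit{SPT}_c$ is a non-root node of at most two of the subtrees $\T^{(j)}_I$, hence $\sum_I r^{(j)}_I = O(m)$ directly, with no multiplicative blow-up across levels. You do not need to re-prove the companion lemmas for ``the subtree of the level above''; they are the original lemmas applied to a finer partition of the same tree. The root angle-splitting case is likewise a constant-factor issue at each level, since a group whose minimal subtree is rooted at $c$ is split into at most two pieces and the remaining complexity accounting is unchanged.
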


\begin{proof}
We prove that the 1-dimensional spanner we just sketched is a $2k$-spanner of complexity $O(mn^{1/k} + n\log n)$ with $O(n\log n)$ edges. Lemma~\ref{lem:1D_to_geodesic_spanner}, together with Lemma~\ref{lem:1_dim_recursion}, then implies that $\G$ is a geodesic $2\sqrt{2}k$-spanner of complexity $O(mn^{1/k} + n\log^2n)$ and size $O(n\log^2n)$. 
We first describe the 1-dimensional spanner in more detail, then analyze its complexity, and finally analyze its size and spanning ratio.

Fix $N = n^{1/k}$. When building a single level of the 1-dimensional spanner $\G_\lambda$, instead of $\Theta(\sqrt{n})$ groups of size $\Theta(\sqrt{n})$, we create $\Theta(N)$ groups of size $\Theta(n^{1-1/k})$, based on the shortest path tree of $c$, for some integer constant $k \geq 1$. After selecting a center $c_i$ for each group $S_i$, we recursively split the groups further based on the shortest path trees of the new centers $c_i$, until we reach groups of size one. For each group $S_i$ at level $j$ of this recursion, a center $c^{(j)}_i$ is selected as the site in $S_i$ for which $d_w(c^{(j)}_i, O)$ is minimal. We add an edge from each $c^{(j)}_i$ to its parent $c^{(j-1)}_{i'}$. The final tree $T$ obtained this way has height $k$.

Let $r^{(j)}_i$ be the number of vertices in $R^{(j)}_i$ of a group $S_i$ at level $j$, where $R_i^{(j)}$ is defined as in Section~\ref{sec:4sqrt2spanner}. In other words, $r^{(j)}_i$ is the maximum complexity of a shortest path between two sites in $S_i$ at level $j$, as in the proof of Lemma~\ref{lem:group_properties}. 
Let $S_1^{(j)},...,S_{O(N)}^{(j)}$ be the subgroups of some group $S_{i'}^{(j-1)}$. Then the corresponding regions $R_1^{(j)},...,R_{O(N)}^{(j)}$ partition $R^{(j-1)}_{i'}$, and Lemma~\ref{lem:occurence_spt} implies that a vertex of $R_{O(N)}^{(j)}$ can be in at most two of the smaller regions. Thus, we still have that $\sum_i r^{(j)}_i = O(m)$. This implies that the complexity all edges from level $j$ to $j-1$ is $O(mn^{1/k} +n)$. As $T$ has height $O(k)$, the total complexity of the edges in $T$ is also $O(mn^{1/k}+ n)$. Lemma~\ref{lem:1_dim_recursion} implies that this results in a 1-dimensional spanner of complexity $O(mn^{1/k} + n\log n)$.

In a single level of the recursion to build the 1-dimensional spanner $\G_\lambda$, only one edge is added for each site, namely to its parent in the tree. We thus still add $O(n)$ edges in each level of the recursion, and $O(n\log n)$ edges in total.

Consider two sites $p,q \in S_\lambda$, and let $c$ be the chosen center point at the level where $p$ and~$q$ are separated by $O$. The spanning ratio of the 1-dimensional spanner is determined by the number of sites we visit on a path from $p \in S_\ell$ to $q \in S_r$. Observe that the height of the tree $T$ is $k$. We assume w.l.o.g. that the path from $p$ to $q$ is as long as possible, i.e. visits $2k - 1$ vertices. In a slight abuse of notation, we denote by $c^{(j)}_p$ the centers on the path in $T$ from $p$ to the root~$c$. Then there is a path $p \to c^{(k-1)}_p \to c^{(k-2)}_p \to ... \to c^{(0)} (= c) \to ... \to c^{(k-2)}_q + c^{(k-1)}_q \to q$. Using that $d_w(c^{(j)}_p,c^{(j+1)}_p) \leq d_w(c^{(j)}_p,O) + d_w(c^{(j+1)}_p,O)$ and that $d_w(c^{(j)}_p,O) \leq d_w(p,O)$, we find:

\begin{equation*}
\begin{split}
        d_{\G_\lambda}(p,q) &= d_w(p,c^{(k-1)}_p) + \sum_{j=0}^{k-2} d_w(c^{(j+1)}_p,c^{(j)}_p)  + \sum_{j = 0}^{k-2} d_w(c^{(j)}_q,c^{(j+1)}_q) + d_w(c^{(k-1)}_q,q)\\
    &\leq d_w(p,O) + 2 \sum_{j=1}^{k-1} d_w(c^{(j)}_p,O)+ 2d_w(c,O) +  2 \sum_{j = 1}^{k-1} d_w(c^{(j)}_q,O) + d_w(q,O) \\
    &\leq d_w(p,O) + 2(k-1)d_w(p,O) + 2d_w(c,O) + 2(k-1)d_w(q,O) + d_w(q,O)\\
    &\leq 2k(d_w(p,O) + d_w(q,O))\\
    &= 2kd_w(p,q).
\end{split}
\end{equation*}
Thus the spanning ratio of the 1-dimensional spanner is $2k$.
\end{proof}

\subsection{Construction algorithm}\label{sub:construction_algo_SP}
In this section we propose an algorithm to construct the spanners of Section~\ref{sec:complexity_spanners}. The following gives a general overview of the algorithm, which computes a $2\sqrt{2}k$-spanner in $O(n\log^2n + m\log n)$ time. 
In the rest of this section we will discuss the steps in more detail.
\begin{enumerate}
    \item Preprocess $P$ for efficient shortest path queries and build both the vertical decomposition \VD and horizontal decomposition \HD of $P$.
    \item For each $p \in S$, find the trapezoid in \VD and \HD that contains $p$. For each trapezoid $\nabla \in \VD$, store the number of sites of $S$ that lies in $\nabla$ and sort these sites on their $x$-coordinate.
    \item Recursively compute a spanner on the sites $S$ in $P$:
    \begin{enumerate}
        \item Find a vertical chord $\lambda$ of $P$ such that $\lambda$ partitions $P$ into two polygons $P_\ell$ and $P_r$, and each subpolygon contains at most $2n/3$ sites using the algorithm of Lemma \ref{lem:find_vertical_chord}.
        \item For each $p \in S$, find the point $p_\lambda$ on $\lambda$ and its weight using the algorithm of Lemma~\ref{lem:find_projections}, and add this point to $S_\lambda$.
        \item Compute an additively weighted 1-dimensional spanner $\G_\lambda$ on the set $S_\lambda$ using the algorithm of Lemma~\ref{lem:1-dim-spanner} or Lemma~\ref{lem:1-dim-spanner-bottom-up}.
        \item For every edge $(p_\lambda, q_\lambda) \in E_\lambda$ add the edge $(p,q)$ to $\G$.
        \item Recursively compute spanners for $S_\ell$ in $P_\ell$ and $S_r$ in $P_r$.
    \end{enumerate}
\end{enumerate}

In step 1, we preprocess the polygon in $O(m)$ time such that the distance between any two points $p,q \in P$ can be computed in $O(\log m)$ time~\cite{Chazelle_triangulate,2PSP_simple_polygon}. We also build the horizontal and vertical decompositions of $P$, and a corresponding point location data structure, as a preprocessing step in $O(m)$ time \cite{Chazelle_triangulate,Kirkpatrick_point_location}. We then perform a point location query for each site $p \in S$ in $O(n \log m)$ time in step 2 and sort the sites within each trapezoid in $O(n\log n)$ time in total. The following lemma describes the algorithm to compute a vertical chord that partitions $P$ into two subpolygons such that each of them contains roughly half of the sites in $S$. It is based on the algorithm of Bose~\etal~\cite{PolygonCutting} that finds such a chord without the constraint that it should be vertical. Because of this constraint, we use the vertical decomposition of $P$ instead of a triangulation in our algorithm.

\begin{restatable}{lemma}{VerticalChord}\label{lem:find_vertical_chord}
In $O(n + m)$ time, we can find a vertical chord of $P$ that partitions $P$ into two subpolygons $P_\ell$ and $P_r$, such that each subpolygon contains at most $2n/3$ sites of $S$.
\end{restatable}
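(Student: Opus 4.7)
The plan is to adapt the polygon-cutting algorithm of Bose~\etal~\cite{PolygonCutting} to respect the vertical-chord constraint, by replacing their triangulation with the vertical decomposition \VD of~$P$; every wall of \VD is then a vertical chord of~$P$ by construction. The dual graph of \VD, with trapezoids as nodes and shared walls as edges, is a tree $T$ since $P$ is simply connected.

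First I would build $T$ in $O(m)$ time from \VD, weighting each node by its site count (available from step~2 of the main algorithm), and compute subtree weights by a DFS in $O(n+m)$ time. Following~\cite{PolygonCutting}, I would then locate a centroid-like trapezoid $\Delta^*$ whose removal decomposes $T$ into subtrees of weight at most $n/2$, using the standard walk-toward-the-heaviest-neighbour procedure starting from a leaf. This takes $O(m)$ time.

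At $\Delta^*$, the chord is extracted in one of two ways. If some wall of $\Delta^*$ separates a subtree of weight in $[n/3,2n/3]$, that wall is the desired chord. Otherwise every subtree adjacent to $\Delta^*$ has weight below $n/3$, and I would sweep a vertical line across $\Delta^*$ from left to right, using its $x$-sorted site list (precomputed) together with the known subtree weights on each side to update, in $O(1)$ per event, the total weight on the left of the sweep line; I halt at an $x^*$ where this weight lies in $[n/3,2n/3]$ and return the vertical segment through $\Delta^*$ at that $x^*$.

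The main obstacle will be showing that the sweep always finds such an $x^*$; the argument parallels the three-neighbour analysis of~\cite{PolygonCutting}. Since every adjacent subtree of $\Delta^*$ has weight below $n/3$ and the left weight increases by at most $1$ per site crossed, it cannot jump over the interval $[n/3,2n/3]$, so a valid $x^*$ must exist whenever both the left-side and the right-side subtree sums of $\Delta^*$ are at most $2n/3$---both of which can be ensured by an appropriate refinement of the choice of $\Delta^*$ along the heaviest branch. The total running time is $O(n+m)$, dominated by constructing~$T$, the DFS, and the linear sweep.
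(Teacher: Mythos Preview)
Your proposal is correct and follows essentially the same approach as the paper: both build the dual tree of the vertical decomposition, weight trapezoids by their site counts, walk toward a centroid-like trapezoid, and then either take one of its bounding walls or choose a vertical chord inside it using the $x$-sorted sites. One small point: the ``appropriate refinement of the choice of $\Delta^*$'' you gesture at is unnecessary---the missing observation (which the paper states explicitly and which replaces the three-neighbour analysis of~\cite{PolygonCutting}) is that each trapezoid has at most two left and at most two right neighbours, so once every adjacent subtree has weight below $n/3$, both the left-side and right-side subtree sums are automatically below $2n/3$, and the sweep succeeds without further adjustment.
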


\begin{proof}
Consider the dual tree of the vertical decomposition \VD. Because our
polygon vertices have distinct $x$- and $y$-coordinates, the maximum degree of any
node in the tree is four; at most two neighbors to the right and two to the left of the trapezoid.
We select an arbitrary node $r$ as root of the tree. For each node $v$, we compute $c(v)$: the number of sites in $S$ that lie in some trapezoid of the subtree rooted at $v$. These values can be computed in linear time (in the size of the polygon) using a bottom-up approach, because we already know the number of sites that lie in each trapezoid.

Let $v$ be an arbitrary node in the tree and $\nabla_v$ the corresponding trapezoid. We first show that there is a vertical segment contained in $\nabla_v$ that partitions the polygon such that each subpolygon contains at most $2/3$ of the sites if
\begin{enumerate}
    \item $n/3 \leq c(v) \leq 2n/3$, or
    \item $\nabla_v$ contains at least $2n/3$ sites, or
    \item for each child $w$ of $v$ we have $c(w) < n/3$, and $c(v) > 2n/3$.
\end{enumerate}

In case 1, we choose $\lambda$ as the vertical segment between the
trapezoid of $v$ and its parent. In case 2, we choose $\lambda$ as a
segment for which exactly $n/3$ sites in $\nabla_v$ lie left of
$\lambda$. As $\nabla_v$ contains more than $2n/3$ sites, at most
$n/3$ sites lie outside of $\nabla_v$, thus $P_\ell$ contains
between $n/3$ and $2n/3$ sites.

In case 3, we choose $\lambda$ to lie within $\nabla_v$. Note that $v$ must have at least two children, otherwise $\Delta_v$ would contain at least $2n/3-n/3 = n/3$ sites, thus we would be in case 1 or case 2. Assume that $\nabla_{p(v)}$ lies right of $\nabla_v$,
where $p(v)$ denotes the parent of $v$. We consider the two children $u,w$ for which the trapezoids lie left of $\nabla_v$, see Figure~\ref{fig:finding_lambda}. When there is only one such site $u$, we consider $c(w) = 0$. It holds that $c(u) + c(w) < 2n/3$. We choose $\lambda$ such that $\max(0, n/3 - c(u) - c(w))$ sites in $\nabla_v$ lie left of $\lambda$. Thus $P_\ell$ contains between $n/3$ and $2n/3$ sites of $S$. Similarly, we consider the children on the right when $\nabla_{p(v)}$ lies left of~$\nabla_v$.

\begin{figure}
    \centering
    \includegraphics{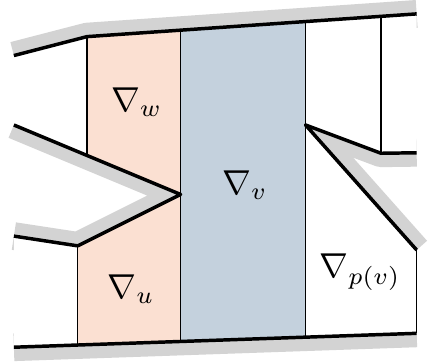}
    \caption{The trapezoid $\nabla_v$ has parent $\nabla_{p(v)}$ and relevant children $\nabla_u$ and $\nabla_w$.}
    \label{fig:finding_lambda}
\end{figure}

We now show that if none of the above conditions hold, there is a path in the tree to a node for which one of these conditions holds. If none of the above conditions hold, then either $c(v) < n/3$, or $c(v) > 2n/3$ and there is a child $w$ of $v$ with $c(w) \geq n/3$. In the first case, we consider the node $p(v)$, in the second case we consider the node $w$ with $c(w) \geq n/3$. By continuing like this, a path in the tree is formed. If the path ends up in $r$, it must hold that $c(r) > 2n/3$, and thus condition 2 holds. If the path ends up in a leaf node, then either condition 1 or 2 must hold for the leaf node. 

This proves not only that there exists such a vertical segment, but also provides a way to find such a segment. As the tree contains $O(m)$ nodes, we can find a trapezoid that can contain $\lambda$ in $O(m)$ time. We can separate the sites in a trapezoid by a vertical line segment such that exactly $x$ sites lie left of the segment in linear time. Thus, the algorithm runs in $O(n + m)$ time.
\end{proof}

The following lemma states that we can find the projections $p_\lambda$ efficiently. The algorithm produces not only these projected sites, but also the shortest path tree $\mathit{SPT}_\lambda$ of $\lambda$.

\begin{restatable}{lemma}{FindProjections}\label{lem:find_projections}
We can compute the closest point $p_\lambda$ on $\lambda$ and $d(p,p_\lambda)$ for all sites $p \in S$, and the shortest path tree $\mathit{SPT}_\lambda$, in $O(m + n\log m)$ time.
\end{restatable}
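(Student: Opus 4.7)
The plan is to compute the shortest path map $\mathit{SPM}_\lambda$ of $\lambda$ in $P$ once, and then answer all $n$ projection queries by point location. The chord $\lambda$ partitions $P$ into the subpolygons $P_\ell$ and $P_r$; in each subpolygon $\lambda$ is (part of) a boundary edge, so we can build $\mathit{SPT}_\lambda$ and the shortest path map from the edge $\lambda$ separately in each half. After triangulating $P$ in $O(m)$ time~\cite{Chazelle_triangulate}, both $\mathit{SPT}_\lambda$ and $\mathit{SPM}_\lambda$ can be constructed in a total of $O(m)$ time using the standard linear-time single-source shortest-path-tree algorithm for simple polygons. We then preprocess $\mathit{SPM}_\lambda$ for point location in $O(m)$ time using Kirkpatrick's method~\cite{Kirkpatrick_point_location}.

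Each cell of $\mathit{SPM}_\lambda$ has an associated \emph{apex}: a reflex vertex $v$ of $P$ (or, for a cell whose points are directly visible from $\lambda$, the appropriate foot on $\lambda$ or endpoint of $\lambda$) such that for every point $q$ inside the cell, the shortest path $\pi(q,\lambda)$ consists of the straight segment from $q$ to $v$ followed by $\pi(v,\lambda)$. For each apex $v$ the tree $\mathit{SPT}_\lambda$ already stores both the closest point $v_\lambda\in\lambda$ and the distance $d(v,\lambda)=d(v,v_\lambda)$. Therefore, given any $q$ in that cell, its closest point on $\lambda$ is $q_\lambda=v_\lambda$ and its geodesic distance is $d(q,\lambda)=|qv|+d(v,\lambda)$; both values can be read off in $O(1)$ time once the cell has been identified. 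The direct-visibility cells are handled by projecting $q$ perpendicularly onto $\lambda$ (or onto the relevant endpoint), again in $O(1)$ time.

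For each of the $n$ sites $p\in S$ we then perform a single point-location query in $\mathit{SPM}_\lambda$, taking $O(\log m)$ time, and extract $p_\lambda$ and $d(p,p_\lambda)$ in $O(1)$ additional time. Adding the $O(m)$ preprocessing cost to the $n$ queries of cost $O(\log m)$ each yields the claimed bound of $O(m+n\log m)$. The main technical point to justify is that $\mathit{SPM}_\lambda$ with source the \emph{segment} $\lambda$ can be produced in $O(m)$ time; this reduces to the standard point- or edge-source case by treating $\lambda$ as an added polygon edge dividing $P$ into $P_\ell$ and $P_r$, so that the linear-time SPT construction applies in each half, and the cells of $\mathit{SPM}_\lambda$ are then obtained directly by walking along the ``last segments'' emanating from the reflex vertices of $\mathit{SPT}_\lambda$.
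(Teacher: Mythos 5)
Your proposal is correct in substance and reaches the same $O(m+n\log m)$ bound, but it is packaged differently from the paper's proof. You treat the shortest path map of the segment $\lambda$ as a black box (``compute $\mathit{SPM}_\lambda$ in $O(m)$ time, then do point location''), whereas the paper never builds $\mathit{SPM}_\lambda$ explicitly: it colors the trapezoids of the horizontal decomposition (blue/orange/green/purple according to how they relate to $\lambda$) and reads off $p_\lambda$ for every site by a single DFS over the dual tree in $O(m+n)$ time, reusing the trapezoid containment already computed in the preprocessing step; it then assembles $\mathit{SPT}_\lambda$ from point-source shortest path trees rooted at $t$, $b$, and the $O(1)$ vertices on the blue/green boundary, each restricted to a disjoint region. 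The catch is that the step you flag as ``the main technical point'' --- that the SPM of a \emph{segment} source in a simple polygon is computable in linear time --- is not quite a one-line reduction to the point-source algorithm of Guibas et al.: an edge source has a perpendicular-visibility region whose ``apex'' varies continuously along $\lambda$, and one must first carve out that region and identify its windows before point-source SPTs can take over behind them. That carving-out is precisely what the paper's coloring argument supplies, so your proof is really deferring the bulk of the paper's work to a claimed-standard subroutine; it is a true and known fact, but you should either cite a linear-time edge-source SPM construction or spell out the decomposition as the paper does. Your per-site point-location query in $\mathit{SPM}_\lambda$ costs $O(\log m)$ each, which still fits the budget (the paper instead spends its $O(n\log m)$ on two-point distance queries to get $d(p,p_\lambda)$, which your apex formula $d(p,\lambda)=|pv|+d(v,\lambda)$ avoids --- a small simplification in your favor). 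Finally, remember to attach each site to $\mathit{SPT}_\lambda$ as a child of its apex, since the lemma asks for the tree with the sites included.
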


\begin{proof}
Consider the horizontal decomposition \HD of $P$ and its dual tree $T$. Let $x_\lambda$ denote the $x$-coordinate of the vertical line segment $\lambda$ and let $t$ be its top endpoint and $b$ its bottom endpoint. We choose the root $r$ of the tree as the trapezoid $\nabla_r$ for which $t \in \nabla_r$ and $\nabla_r \cap \lambda \neq \emptyset$. We color $T$ as follows, see Figure~\ref{fig:computing_plambda} for an example. Every node $v$ for which the corresponding trapezoid $\nabla_v$ contains the top or bottom endpoint of $\lambda$ is colored orange. Note that the root $r$ is thus colored orange. Every node $v$ for which $\nabla_v$ is crossed from top to bottom by $\lambda$ is colored blue. For every other node $v \in T$, consider its lowest ancestor $w$ that is colored either blue or orange. If $w$ is blue, then $v$ is colored green, if $w$ is orange, $v$ is colored purple. Next, we describe how to find $p_\lambda$ for a site $p  = (x_p, y_p) \in \nabla_v$ for each color of $v$.

\begin{figure}
    \centering
    \includegraphics{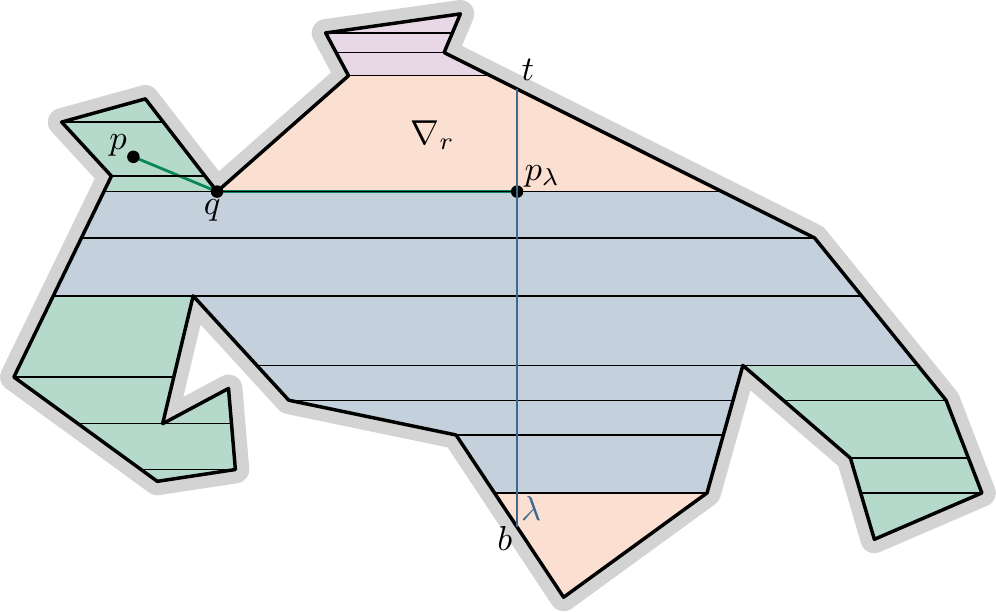}
    \caption{An example of the coloring of the horizontal decomposition in a simple polygon.}
    \label{fig:computing_plambda}
\end{figure}

\textit{Blue:} The horizontal line segment connecting $p$ and $\lambda$ is contained within $P$. So, $p_\lambda = (x_\lambda, y_p)$. 

\textit{Orange:} The shortest path to $\lambda$ is again a line segment. If $p$ lies above $t$ or below $b$, then $p_\lambda$ is $t$ or $b$, respectively. Otherwise,  $p_\lambda = (x_\lambda, y_p)$.

\textit{Green:} Let $w$ be the highest green ancestor of $v$ (possibly $v$ itself). Suppose that $\nabla_w$ lies above the trapezoid $\nabla_{p(w)}$ of its (blue) parent and $\nabla_w$ lies in $P_\ell$. Let $q$ be the bottom right corner of $\nabla_w$. Then $q_\lambda = (x_\lambda, y_q)$, as $q$ also lies in $\nabla_{p(w)}$. We will show that $p_\lambda = q_\lambda$. Suppose to the contrary that $p_\lambda \neq q_\lambda$. As $v$ is a descendant of $w$, any path from $p$ to $\lambda$ must intersect the bottom of $\nabla_v$. Let $q'$ be the this intersection point.
Because $q'$ lies on the bottom boundary of $\nabla_w$, the horizontal line segment $q'q_\lambda$ is contained within $P$ and thus $q'_\lambda = q_\lambda$. 
Because $d(q', q'_\lambda) = d(q', q_\lambda) < d(q', p_\lambda)$, subpath optimality implies that $d(p, q_\lambda) < d(p,p_\lambda)$, which is a contradiction. So $p_\lambda = q_\lambda$. Symmetrically, we can show a similar statement when $\nabla_v$ lies below $\nabla_{p(v)}$ and/or $\nabla_v$ lies in $P_r$.

\textit{Purple:} Let $w$ be the highest purple ancestor of $v$. Assume that $\nabla_{p(w)}$ lies below $\nabla_w$. Note that the bottom segment of $\nabla_w$ lies above $t$. It follows that for all points $q$ on this segment we have $q_\lambda = t$. According to the same argument as for the green trapezoids, we thus have $p_\lambda = t$. Symmetrically, if $\nabla_{p(w)}$ lies above $\nabla_w$, then $p_\lambda = b$.

We can thus find $p_\lambda$ for all $p \in S$ as follows. We perform a depth first search on $T$ starting at $r$. When we visit a node $v$, we first determine its color in $O(1)$ time. Then for all $p \in \nabla_v$ we determine $p_\lambda$ as described before. Note that this can also be done in constant time, because for a green/purple node we already computed the projected sites for the parent trapezoid. This thus takes $O(m + n)$ time overall. 

The shortest path tree $\mathit{SPT}_\lambda$ can now be computed as follows. All vertices of $P$ in a blue trapezoid are children of $\lambda$. At most four of these vertices lie on the boundary of the blue and green region (such as $q$ in Figure~\ref{fig:computing_plambda}). For each such vertex $q$, we include the shortest path tree of $q$ restricted to its respective green region as a subtree of $q$. We include the shortest path trees of $t$ and $b$ restricted to the top and bottom orange and purple regions as a subtree of $\lambda$. Because the regions where we construct a shortest path trees are disjoint, we can compute all of these shortest path trees in $O(m)$ time~\cite{Linear_time_spt}.
Finally, for all $p \in S$ we compute $d(p,p_\lambda)$ in $O(n\log m)$ time, and include each site in $\mathit{SPT}_\lambda$ as a child of their apex.
\end{proof}

\begin{lemma}\label{lem:1-dim-spanner}
    Given $\mathit{SPT}_\lambda$, we can construct a $4$-spanner $\G_\lambda$ on the additively weighted points $S_\lambda$, where the groups adhere to the properties of Lemma~\ref{lem:group_properties}, in $O(n\log n + m)$ time.
\end{lemma}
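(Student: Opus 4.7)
The plan is to recursively build $\G_\lambda$ following the construction of Section~\ref{sec:4sqrt2spanner}, where the critical subroutine at every level is the grouping step. Computing a fresh shortest-path tree from the current center at every level of the 1-dimensional recursion would cost $\Theta(m)$ each and blow up the bound, so instead I would use the preprocessed $\mathit{SPT}_\lambda$ to determine the groups implicitly.

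First I would compute a global order $\sigma$ on the sites of $S$ by performing an in-order traversal of $\mathit{SPT}_\lambda$, which costs $O(n+m)$ because the tree has that many nodes, and I would store the sites in two doubly-linked lists: one sorted by $y$-coordinate along $\lambda$ (built once in $O(n\log n)$ time) and one in the order $\sigma$. The recursive step on a subset $S'$ of size $n'$ then runs entirely on these lists in $O(n')$ time: find the median projection $O$ along $\lambda$, scan to obtain $c = \arg\min_{p\in S'} d_w(p_\lambda,O)$, walk the $\sigma$-list restricted to $S'$ and cut it into contiguous blocks of size $\lceil\sqrt{n'}\rceil$, select $c_i = \arg\min_{p\in S'_i} d_w(p_\lambda,O)$ inside each block $S'_i$, emit the edges $(p_\lambda,c_{i,\lambda})$ for $p\in S'_i$ and $(c_{i,\lambda},c_\lambda)$, and finally split the lists and recurse on $S'_\ell$ and $S'_r$. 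The $O(\log n)$ recursion levels each do $O(n)$ total work, so the recursion contributes $O(n\log n)$, for the claimed $O(n\log n+m)$ running time. The spanning ratio of~$4$ and the $O(n\log n)$ edge bound come directly from Lemma~\ref{lem:4sqrt2-spanner}, and property~1 of Lemma~\ref{lem:group_properties} is immediate from the block size.

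The main obstacle is verifying property~2 of Lemma~\ref{lem:group_properties} for groups produced from $\mathit{SPT}_\lambda$ rather than from $\mathit{SPT}_c$. I would re-do the three structural statements preceding Lemma~\ref{lem:4sqrt2-spanner}: for each block $S'_i$, define $\T_i$ as the minimal subtree of $\mathit{SPT}_\lambda$ spanning $S'_i$ and $R_i$ as the polygonal region bounded by the two extremal shortest paths $\pi_a,\pi_b$ from $\lambda$ to the extreme sites of $S'_i$, extended to $\partial P$. The analogue of the first lemma (``only vertices in $\T_i$ appear in $R_i$'') and of Lemma~\ref{lem:occurence_spt} (``each vertex occurs as a non-root node of at most two $\T_i$'') carries over verbatim with $\lambda$ playing the role of the center~$c$. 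The only genuinely new case is the analogue of \emph{PathsInSi}: when the root of $\T_i$ lies on $\lambda$ itself rather than at a reflex vertex of $P$, the ``line through a non-reflex vertex'' shortcut in the original proof must be replaced by the chord $\lambda$, which already separates $R_i$ from the complementary region and thereby forbids the analogous detour. Once this case is handled, the vertex-occurrence count gives property~2 and the lemma follows.
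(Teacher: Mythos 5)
Your proposal is correct and follows essentially the same route as the paper: form groups as contiguous blocks in the in-order traversal of $\mathit{SPT}_\lambda$, reuse that ordering across all levels of the 1-dimensional recursion (since splitting at $O$ splits the tree into the subtrees for $S_\ell$ and $S_r$), and re-verify the two group properties via the minimal subtrees $\T_i$ and their regions $R_i$, with the only new structural case being a root on $\lambda$. The paper handles that case by extending $\pi_a$ horizontally across $\lambda$ to $\partial P$, which matches the role you assign to the chord $\lambda$ in your sketch.
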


\begin{proof}
The $4$-spanner of Section~\ref{sec:4sqrt2spanner} requires an additional step at each level of the recursion, namely the formation of $\Theta(\sqrt{n})$ groups. We first discuss the running time to construct a $4$-spanner when forming the groups as in Section~\ref{sec:4sqrt2spanner}, and then improve the running time by introducing a more efficient way to form the groups.

In Section~\ref{sec:4sqrt2spanner}, the groups are formed based on the shortest path tree of the site $c$. Building the shortest path tree, and a corresponding point location data structure, takes $O(m)$ time~\cite{Linear_time_spt}. Then, we perform a point location query for each site to find its apex in the  shortest path tree, and add the sites to the tree. These queries take $O(n \log m)$ time in total. We form groups based on the traversal of the tree. Note that we do not distinguish between sites with the same parent in the tree, as the tree $\T_i$ (and thus the region $R_i$) obtained contains the same vertices of $P$ regardless of the order of these sites. After obtaining the groups, we again add only $O(n)$ edges to the spanner. The overall running time of the algorithm is thus $O((m + n \log m) \log n)$.

This running time can be improved by using another approach to form the groups. To form groups that adhere to the properties of Lemma~\ref{lem:group_properties}, and thus result in a spanner of the same complexity, we can use any partition of $P$ into regions $R_i$, as long $R_i$ as contains $\Theta(\sqrt{n})$ sites and $\sum_i r_i = O(m)$. Next, we describe how to form such groups efficiently using $\mathit{SPT}_\lambda$.

We first define an ordering on the sites. This is again based on the traversal of some shortest path tree. Instead of considering the shortest path tree of a point site, we consider the shortest path tree $\mathit{SPT}_\lambda$ of $\lambda$. Again, all sites in $S$ are included in this shortest path tree. Additionally, we split the node corresponding to $\lambda$ into a node for each distinct projection point on $\lambda$ (of the vertices and the sites) and add an edge between each pair of adjacent points, see Figure~\ref{fig:spt_lambda}. We root the tree at the node corresponding to the bottom endpoint of~$\lambda$. Whenever a node $t$ on $\lambda$ has multiple children, in other words, when multiple sites are projected to the same point $t$, our assumption that all $y$-coordinates are distinct ensures that all these sites lie either in $P_\ell$ or $P_r$.

The groups are formed based on the in-order traversal of this tree, which can be performed in $O(m+n)$ time. As before, the first $\lceil \sqrt{n} \rceil$ are in $S_1$, the second in $S_2$, etc. The groups thus adhere to the first property. Next, we show they also adhere to the second property.

\begin{figure}
    \centering
    \includegraphics{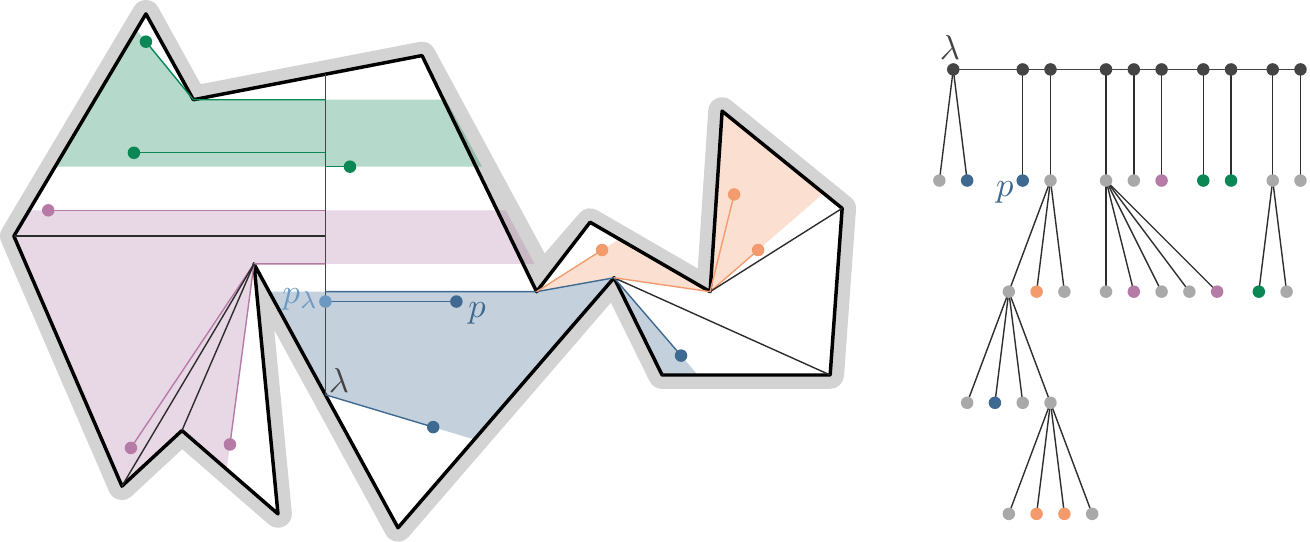}
    \caption{The shortest path tree $\mathit{SPT}_\lambda$ and associated polygonal region $R_i$ for each group $S_i$.}
    \label{fig:spt_lambda}
\end{figure}

For each group $S_i$, we again consider the minimal subtree $\T_i$ of $\mathit{SPT}_\lambda$ containing all $p \in S_i$. 
$\T_i$ defines a region $R_i$ in $P$ as follows. Let $a$ be the first site of $S_i$ in $\T_i$ by the ordering used before. Assume that $a$ lies in $P_\ell$. We distinguish two cases: $a_\lambda \in \T_i$, or $a_\lambda \notin \T_i$. When $a_\lambda \in \T_i$, then let $\pi_a$ be the path obtained from $\pi(a_\lambda,a)$ by extending the last segment to the boundary of $P$. Additionally, we extend $\pi_a$ into $P_r$ horizontally until we hit the polygon boundary. When $a_\lambda \notin \T_i$, consider the root $v_i$ of $\T_i$. Let $\pi_a$ be the path obtained from $\pi(v_i, a)$ by extending the last segment of the path to the boundary of~$P$. Similarly, let $\pi_b$ be such a path for the rightmost site of $S_i$ in $\T_i$. 
We take $R_i$ to be the region in $P$ bounded by $\pi_a$, $\pi_b$, and some part of the boundary of $P$ that contains the sites in $S_i$. See Figure~\ref{fig:spt_lambda}. Note that, as before, only vertices of $P$ that are in $\T_i$ can occur in $R_i$. All shortest paths between sites in $S_i$ are contained within $R_i$. Just as for the shortest path tree of $c$, Lemma~\ref{lem:occurence_spt} implies that any vertex $v \in \mathit{SPT}_\lambda$ occurs in at most two trees $\T_i$ and $\T_j$ as a non-root vertex. We conclude that any vertex is used by shortest paths within at most two groups.

After splitting $\lambda$ at a point $O$, the tree $\mathit{SPT}_\lambda$ is also split into two trees $\T_\ell$ and $\T_r$ that contain exactly the sites in $S_\ell$ and $S_r$. We can thus reuse the ordering to form groups at each level of the recursion.
This way, the total running time at a single level of the recursion is reduced to~$O(n)$. The overall running time thus becomes $O(n\log n + m)$.
\end{proof}

\begin{lemma}\label{lem:1-dim-spanner-bottom-up}
    Given $\mathit{SPT}_\lambda$, we can construct a $2k$-spanner $\G_\lambda$ on the additively weighted points $S_\lambda$, where groups are formed as in Section~\ref{sub:general_complexity_spanner}, in $O(n\log n +m)$ time.
\end{lemma}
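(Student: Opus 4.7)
The plan is to adapt the bottom-up strategy from the proof of Lemma~\ref{lem:1-dim-spanner}, avoiding the construction of a separate shortest-path tree for each center in the $k$-level hierarchy of Section~\ref{sub:general_complexity_spanner}. The key observation is that all grouping information we need at every level of the hierarchy can be read off from the single tree $\mathit{SPT}_\lambda$ that has already been computed.

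First, I would compute the in-order traversal of $\mathit{SPT}_\lambda$ to obtain a linear ordering of $S_\lambda$, exactly as in Lemma~\ref{lem:1-dim-spanner}; this takes $O(n+m)$ time and the resulting ordering is fixed throughout the construction. For one level of the outer recursion, which partitions $S_\lambda$ at a point $O$ on $\lambda$, I would compute $d_w(p,O)$ for every site in $O(n)$ time, and then, on each side of $O$, form the $k$-level hierarchy by cutting the ordering into consecutive blocks of sizes $n^{(k-i)/k}$ for $i=1,\dots,k$. Level $k$ consists of singletons, and each block at level $i-1$ is the union of $\Theta(n^{1/k})$ consecutive level-$i$ blocks, giving a balanced tree of height $k$ whose leaves are the sites.

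The centers can then be computed purely bottom-up: the level-$k$ ``center'' of a singleton is the site itself, and the center of a level-$i$ block is the argmin of $d_w(\cdot,O)$ over the $n^{1/k}$ already-computed level-$(i+1)$ centers it contains. Each such step is $O(n^{1/k})$ work, and since the total number of centers across all levels is geometric in the number of leaves, the entire hierarchy is built in $O(n)$ time. For each non-root node of the tree I would add an edge from its center to the center of its parent block; this reproduces exactly the edge set described in Section~\ref{sub:general_complexity_spanner}, so the spanning-ratio argument from the proof of Lemma~\ref{lem:delta2sqrt2-spanner} carries over verbatim to yield a $2k$-spanner, and $O(n)$ edges are added per level of the outer recursion.

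The main obstacle is verifying that the ordering-based hierarchy still satisfies the complexity property used in Section~\ref{sub:general_complexity_spanner}, namely that at every level $j$ the associated regions $R_i^{(j)}$ satisfy $\sum_i r_i^{(j)}=O(m)$. Because every block at level $j$ is a contiguous range in the in-order traversal of $\mathit{SPT}_\lambda$, each $R_i^{(j)}$ can be defined via the minimal subtree $\T_i^{(j)}$ of $\mathit{SPT}_\lambda$ exactly as in the proof of Lemma~\ref{lem:1-dim-spanner}, and Lemma~\ref{lem:occurence_spt} applied to these subtrees gives that each polygon vertex is a non-root of at most two of them at any fixed level. Since splitting $\lambda$ at $O$ also splits $\mathit{SPT}_\lambda$ into two subtrees whose in-order traversals agree with the global ordering restricted to $S_\ell$ and $S_r$, the ordering is preserved and can be reused throughout the outer recursion without recomputation. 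The per-level cost of the outer recursion is therefore $O(n)$, and combining the $O(\log n)$ levels with the $O(n+m)$ preprocessing for the traversal yields the claimed $O(n\log n + m)$ running time.
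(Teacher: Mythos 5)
Your proposal is correct and follows essentially the same route as the paper: order the sites by an in-order traversal of $\mathit{SPT}_\lambda$, form each level of the $k$-level hierarchy as contiguous blocks in that ordering, and compute centers bottom-up by exploiting that the ``closest to $O$'' property is decomposable, so each center is the argmin over its children's centers and the whole hierarchy costs $O(n)$ per level of the outer recursion. Your additional remarks on reusing the ordering after splitting at $O$ and on verifying the complexity property via Lemma~\ref{lem:occurence_spt} are consistent with what the paper delegates to Lemma~\ref{lem:1-dim-spanner} and Section~\ref{sub:general_complexity_spanner}.
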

\begin{proof}
    To construct the 1-dimensional $2k$-spanner of Section~\ref{sub:general_complexity_spanner}, we can use the shortest path tree of $\lambda$ to form the groups as before. Note that we can select a center for each group after computing the groups, as including the center in the subgroups does not influence spanning ratio or complexity. After ordering the sites based on the in-order traversal of $\mathit{SPT}_\lambda$, we can build  the tree of groups in linear time using a bottom up approach. As before, fix $N = n^{1/k}$. We first form the $\Theta(N^{k})$ lowest level groups, containing only a single site, and select a center for each group. Each group at level $i$ is created by merging $\Theta(N)$ groups at level $i-1$, based on the same ordering. We do not perform this merging explicitly, but for each group we select the site closest to $O$ of the merged level-($i-1$) centers as the center. Because our center property, being the closest to $O$, is decomposable, this indeed gives us the center of the entire group. This way, we can compute the edges added in one level of the recursion in linear time, so the running time remains $O(n\log n + m)$.
\end{proof}

The total running time thus becomes $O((n(\log n + \log m) + m)\log n) = O(n\log^2n + m\log n)$. Here, we used that $n \log n \log m = O(n\log^2n)$ for $m < n^2$, and $n \log n \log m = O(m \log n)$ for $m \geq n^2$. By splitting the polygon alternately based on the sites and the polygon vertices, we can replace the final $O(\log n)$ factor by $O(\min(\log n, \log m))$. 
Together with Lemma~\ref{lem:delta2sqrt2-spanner}, we obtain the following theorem. 

\begin{theorem}\label{thm:2sqrt2k-spanner-time}
Let $S$ be a set of $n$ point sites in a simple polygon $P$ with $m$ vertices, and let $k \geq 1$ be any integer constant. We can build a geodesic $2\sqrt{2}k$-spanner of size~$O(n\log^2n)$ and complexity $O(m n^{1/k} + n\log^2 n)$ in $O(n\log^2n + m\log n + K)$ time, where $K$ is the output complexity. 
\end{theorem}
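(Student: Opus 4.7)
\begin{proofsketch}
The plan is to carry out the algorithm outlined in Section~\ref{sub:construction_algo_SP} and to charge its cost in two pieces: the one-time preprocessing and the work done at each node of the recursion tree on the simple polygon. Correctness (spanning ratio, size, and complexity) follows immediately from Lemma~\ref{lem:delta2sqrt2-spanner}, since the algorithm constructs exactly the spanner described there, so all that remains is the time analysis.

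For the preprocessing (steps~1 and~2), I would combine three known ingredients: preprocessing $P$ for $O(\log m)$-time shortest path queries and building the horizontal and vertical decompositions together with point-location structures takes $O(m)$ time; point-locating the $n$ sites costs $O(n \log m)$; and sorting sites within trapezoids costs $O(n \log n)$ overall. This gives $O(m + n \log n + n \log m)$ preprocessing.

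Next I would analyze a single recursive call on a subpolygon with $n'$ sites and $m'$ vertices: Lemma~\ref{lem:find_vertical_chord} finds $\lambda$ in $O(n' + m')$ time, Lemma~\ref{lem:find_projections} computes all projections together with $\mathit{SPT}_\lambda$ in $O(m' + n' \log m')$ time, and Lemma~\ref{lem:1-dim-spanner-bottom-up} builds the 1-dimensional spanner in $O(n' \log n' + m')$ time. Thus the work at a single node of the recursion tree is $O(n' \log n' + n' \log m' + m')$, excluding the edges emitted.

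The main (mild) obstacle is summing these costs over the recursion. The chord $\lambda$ splits $P$ into subpolygons of complexities $m_1, m_2$ with $m_1 + m_2 \leq m + 4$ (at most two new vertices on $\lambda$), and the site set splits into pieces of size at most $2n/3$, so the recursion has depth $O(\log n)$. An argument entirely analogous to Lemma~\ref{lem:1_dim_recursion} shows that the total of the $m'$ values at level $i$ is $O(m + 2^i)$, hence $\sum_i \sum_j m'_{i,j} = O(m \log n + n)$, and at each level the total of the $n' \log n'$ terms is $O(n \log n)$. The total time is therefore
\[
  O\bigl(n \log^2 n + n \log n \log m + m \log n\bigr).
\]
To simplify the middle term, I would split cases on $m$ versus $n^2$: if $m < n^2$ then $\log m = O(\log n)$ so $n \log n \log m = O(n \log^2 n)$, and if $m \ge n^2$ then $\log n = O(\log m)$ so $n \log n \log m = O(m \log n)$. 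Either way the construction time is $O(n \log^2 n + m \log n)$. Adding $O(K)$ for reporting the output edges yields the stated bound.
\end{proofsketch}
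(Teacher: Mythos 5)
Your proposal is correct and follows essentially the same route as the paper: correctness is delegated to Lemma~\ref{lem:delta2sqrt2-spanner}, the per-node cost is assembled from Lemmas~\ref{lem:find_vertical_chord}, \ref{lem:find_projections}, and~\ref{lem:1-dim-spanner-bottom-up}, the recursion is summed using the fact that the subpolygon complexities satisfy $m_1+m_2\leq m+4$ per split, and the same case distinction on $m$ versus $n^2$ absorbs the $n\log n\log m$ term. Your write-up is, if anything, slightly more explicit than the paper about how the level sums are bounded.
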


\section{Balanced shortest-path separators}\label{sub:sp-separator}

Let $S$ be a set of $n$ point sites inside a polygonal domain $\P$ with $m$ vertices. We denote by $\partial \P$ the boundary the polygonal domain, i.e. the boundary of the outer polygon and the boundary of the holes.
In this section, we develop an algorithm to partition the polygonal domain $\P$ into two subdomains $\P_\ell$ and $\P_r$ such that roughly half of the sites in $S$ lie in $\P_\ell$ and half in $\P_r$. Additionally, we require that the curve bounding $\P_\ell$ consists of at most three shortest paths and possibly part of $\partial \P$.

In a polygonal domain, we cannot simply split the domain into two subdomains by a line segment, as we did to partition a simple polygon, because any line segment that appropriately partitions $S$ might intersect one or more holes. Even if we allow a shortest path between two points on the outer boundary of $\P$ as our separator, it is not always possible to split the sites into sets $S_\ell$ and $S_r$ of roughly equal size. See Figure~\ref{fig:sp_triangle_appendix} for an example. We thus need another approach for subdividing the domain. In the version of this paper published at SoCG 2023~\cite{complexity_spanners}, we adapted the balanced shortest-path separator of Abam, de Berg, and Seraji~\cite{SpannerPolyhedralTerrain} for this purpose. However, a much more efficient construction can be achieved by using a different approach based on the separator by Thorup~\cite{Thorup_separator_polygon}, which is the approach we take here.

\begin{figure}
    \centering
    \includegraphics{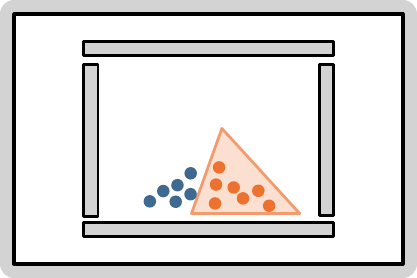}
    \caption{No shortest path between two points on the boundary of $\P$ can separate the sites into two groups. The sites can be separated by three shortest paths, for example using the orange triangle.}
    \label{fig:sp_triangle_appendix}
\end{figure}

To partition the polygonal domain into two subdomains $\P_\ell$ and $\P_r$, such that each contains roughly half of the sites in $S$, we allow three different types of separator. These separators consist of 1, 2, or 3 shortest paths, as seen in Figure~\ref{fig:separator_types_appendix}. Formally, we define a balanced shortest-path separator as follows.

\begin{definition}\label{def:sp-separator}
A \emph{balanced shortest-path separator} (sp-separator) partitions the polygonal domain into two subdomains $\P_\ell$ and $\P_r$, such that $2n/9 \leq |S_\ell| \leq 2n/3$. The separator is of one the following three types. 
\begin{itemize}
    \item A \emph{1-separator} consists of a single shortest path $\pi(u,v)$ that connects two points $u,v$ on the outer boundary of $P$. 
    \item A \emph{2-separator} consists of two shortest paths $\pi(u,v)$ and $\pi(u,w)$, where $u \in \P$ and $v,w \in \partial P$ on the boundary of the same hole or the outer polygon. 
    \item A \emph{3-separator} consists of three shortest paths $\pi(u,v)$, $\pi(v,w)$, and $\pi(w,u)$, where $u,v,w\in\P$, and each pair of shortest paths overlaps in a single interval, starting at the common endpoint. 
\end{itemize}
\end{definition}

\begin{figure}
    \centering
    \includegraphics{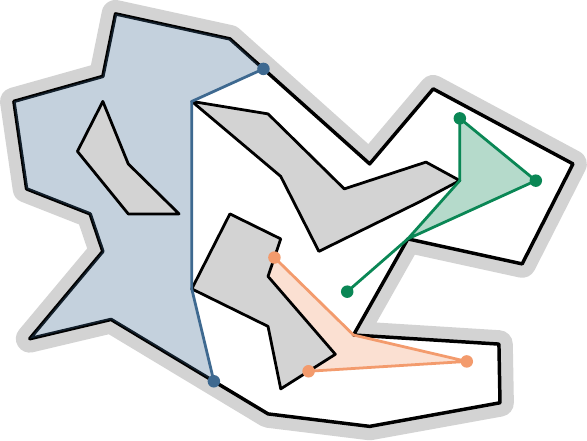}
    \caption{A 1-separator (blue), a 2-separator (orange), and a 3-separator (green).}
    \label{fig:separator_types_appendix}
\end{figure}

Note that one of the subpolygons $\P_\ell$ or $\P_r$ might be disconnected for a 2- or 3-separator.
We also allow a shortest path as a \emph{degenerate} 3-separator, $\P_\ell$ is then the degenerate polygon that is this shortest path. We call $u,v,w$ the \emph{corners} of the separator. Slightly abusing our notation, we refer to the closed subset of $\R^2$ corresponding to the polygonal domain $\P_\ell$ of a separator $\Delta$ by $\Delta$ itself. In the rest of this section, we prove the following theorem.

\begin{restatable}{theorem}{BalancedSeparatorImproved}\label{thm:find_separator_holes_improved}
Let $S$ be a set of $n$ point sites in a polygonal domain $\P$ with $m$ vertices. A balanced sp-separator exists and it can be computed in $O(m\log m + n\log n)$ time.
\end{restatable}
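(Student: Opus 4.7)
The plan is to adapt Thorup's planar shortest-path separator technique to the geodesic setting, producing the separator as a single diagonal of a triangulation that is closed up, using at most two extra shortest paths, into a curve bounding a balanced region.

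First I would triangulate $\P$ in $O(m\log m)$ time and weight each triangle by the number of sites of $S$ it contains; a one-time point-location pass assigns the sites in $O(n\log m)$ total time. Let $G^*$ be the weak dual graph of this triangulation. It has $O(m)$ nodes and bounded degree, but is \emph{not} a tree, since each of the $h$ holes contributes one independent cycle. To turn $G^*$ into a tree I would pick a vertex $r$ on the outer boundary of $\P$, compute the shortest-path tree $\mathit{SPT}_r$ to every triangulation vertex, and for each hole $H_i$ mark the geodesic $\gamma_i = \pi(r, v_i)$ to a chosen vertex $v_i \in \partial H_i$. Cutting $\P$ along $\gamma_1,\ldots,\gamma_h$ yields a simple polygon $\P'$ whose triangulation has a tree-shaped weighted dual.

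Next I would run the weighted tree-balancing argument of Lemma~\ref{lem:find_vertical_chord} on this dual tree to find, in $O(n+m)$ time, a triangulation diagonal $d$ of $\P'$ such that each side of $d$ carries between $n/3$ and $2n/3$ of the weight. The sp-separator is obtained by projecting $d$ back to $\P$. Three cases arise, matching exactly the three types in Definition~\ref{def:sp-separator}: if $d$ is disjoint from every cut-path $\gamma_i$ and has both endpoints on $\partial\P$, it is itself a 1-separator; if $d$ crosses a single $\gamma_i$, the closed cut curve consists of $d$ together with two sub-paths of $\gamma_i$ meeting at the crossing, which is a 2-separator rooted at the crossing point with both endpoints on $\partial H_i$; and if $d$ crosses two cut-paths $\gamma_i,\gamma_j$, the closed curve decomposes into three shortest paths sharing the common prefix of $\gamma_i$ and $\gamma_j$, i.e.\ a 3-separator.

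The main obstacle is showing that the balance is preserved after passing from $\P'$ back to $\P$: sites that happen to lie exactly on one of the cut-paths $\gamma_i$, or on the prefix shared between two of them, must be reassigned to either $\P_\ell$ or $\P_r$, and in the worst case this can shift the balance by up to $n/9$; this is precisely the slack we allowed in Definition~\ref{def:sp-separator} when asking only for $2n/9 \le |S_\ell| \le 2n/3$ instead of $n/3 \le |S_\ell| \le 2n/3$. Bounding this shift requires a careful choice of $r$ and of each representative $v_i$: I would pick $v_i$ so that $\gamma_i$ is weight-minimal among candidate cuts and, if too many sites pile up on a long shared initial prefix of several $\gamma_i$'s, reroute one of the cuts through a different vertex of $H_i$. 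The time bound then follows immediately: triangulation and shortest-path tree in $O(m\log m)$, point location of the $n$ sites in $O((n+m)\log m)$, and the dual-tree balancing search in $O(n+m)$, for a total of $O(m\log m + n\log n)$.
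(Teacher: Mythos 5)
Your high-level plan---triangulate, weight the triangles by the sites they contain, and run a Lipton--Tarjan-style search on a tree structure derived from shortest paths out of a boundary vertex---is the same as the paper's, but the specific reduction you chose has a genuine gap. You cut $\P$ open only along the $h$ geodesics $\gamma_i=\pi(r,v_i)$, one per hole, and then claim that a single balanced diagonal $d$ of the resulting simple polygon, closed up with sub-paths of the $\gamma_i$, is always a separator of one of the three types of Definition~\ref{def:sp-separator}. That is not true. The curve that actually separates the two triangle sets back in $\P$ is $d$ together with \emph{every} piece of the cut-tree that has one side in each part; for instance, if $d$ joins a vertex $a\in\partial H_1$ to a vertex $b\in\partial H_2$ and avoids the cuts, the separating curve is $d$ plus \emph{all} of $\gamma_1$ and \emph{all} of $\gamma_2$, attached to $d$ only via arcs of the two hole boundaries. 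The two shortest paths end at $v_1\neq a$ and $v_2\neq b$ on two \emph{different} holes and do not share endpoints with $d$, so this is neither a 2-separator nor a 3-separator. The fix is exactly what the paper does: use the \emph{full} shortest-path tree of $r$ (to all triangulation vertices) as the spanning tree, so that the separator associated with a non-tree edge $d=(a,b)$ is its fundamental cycle $d\cup\pi(r,a)\cup\pi(r,b)$ --- two shortest paths from a common corner plus one segment, which always collapses to a 1-, 2-, or 3-separator.

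Second, a balanced \emph{single} diagonal need not exist. The weights sit on the nodes of the dual tree, so the centroid-style argument of Lemma~\ref{lem:find_vertical_chord} may terminate at a node all of whose incident subtrees are light, in which case the cut must pass \emph{through} a triangle (the analogue of cases 2 and 3 of that lemma, where the chord lies inside a trapezoid); in particular a single triangle may contain more than $n/3$ sites. The paper treats the heavy triangle as a separate degenerate case and otherwise uses the Lipton--Tarjan walk --- moving from one fundamental cycle to an adjacent one so that the enclosed weight changes by at most one light triangle per step --- to guarantee that a balanced non-tree edge exists; your proposal needs one of these mechanisms. Finally, your concern about sites on cut-paths shifting the balance by $n/9$, and the proposed rerouting of the $\gamma_i$, are unnecessary: once every site is assigned to a unique triangle (as the paper does), the partition induced by the dual-tree split is identical to the partition induced by the separator in $\P$, and the $2n/9$ slack in Definition~\ref{def:sp-separator} is an artefact of the alternative construction in the appendix, not of any reassignment.
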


Thorup~\cite{Thorup_separator_polygon} proposed a shortest path separator for a polygonal domain $P$ consisting of at most six shortest paths (and possibly part of $\partial P$) that separates $P$ into three parts that each contain at most half of the triangles of some triangulation of $P$. His idea is to build the shortest path tree of an arbitrary vertex, and then apply his earlier algorithm to find a separator for a planar graph using a spanning tree of the graph~\cite{Thorup_separator_digraphs}. This approach is in turn based on the separator theorem for planar graphs by Lipton and Tarjan~\cite{separatorTheorem}. In our case, we want our separator not to separate the vertices of the polygon, but the sites in $S$. Next, we give a standalone explanation of how to handle this case, using the ideas of Thorup~\cite{Thorup_separator_polygon}, and Lipton and Tarjan~\cite{separatorTheorem}. Our goal is to find a balanced separator consisting of at most three shortest paths. In other words, we want to find a 1-, 2-, or 3-separator.

We start by constructing the shortest path tree $T$ of an arbitrary
vertex $v$ on the outer boundary of $P$. We then triangulate the space
both inside and outside the convex hull of $P$, including the interior
of the obstacles, constrained by the edges of $\partial P$ and
$T$. The triangulation within the convex hull of $P$ can be done using
straight line segments. However, we want to triangulate the space
outside of the convex hull as well, but for this we may use curved
edges, as illustrated in Figure~\ref{fig:separator_spt}. We call a triangle \emph{free} when it lies in $P$, so its interior is not part of a hole or the exterior. We assign each site to the triangle that contains the site. If there are multiple sites in the interior of the boundary of two triangles, then all of these sites are assigned to the same free triangle (chosen arbitrarily). 
We set the weight of a triangle to the number of sites that is assigned to it. If there is a triangle of weight greater than $n/3$, then we can easily find a balanced separator that is contained within this triangle in $O(n\log n)$ time. When there are many sites on an edge of this triangle, our assignment ensures these are all assigned to the same triangle, which allows us to use a segment on this edge as a degenerate 3-separator. In the rest of this section, we thus assume that there is no such a heavy triangle.

We find a balanced separator by constructing a sequence $\Delta_0 \supset \Delta_1 \supset \dots \supset \Delta_k$ of separators such that $|\Delta_i \cap S| > 2n/3$ for $i =0,\dots k-1$ and $n/3 \leq |\Delta_k \cap S| \leq 2n/3$. Each separator $\Delta_i$ consists of an edge of the triangulation $(u_i,v_i) \notin T$ and the shortest paths $\pi(v,u_i)$ and~$\pi(v,v_i)$. See Figure~\ref{fig:separator_spt} for an illustration.

\begin{figure}
    \centering
    \includegraphics{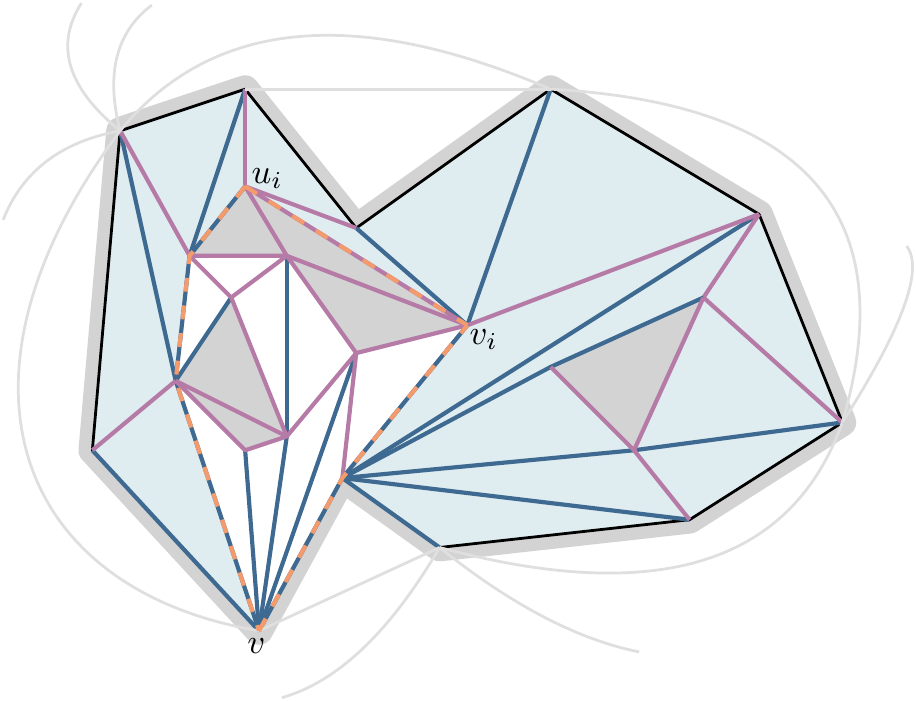}
    \caption{The shortest path tree $T$ of $v$ (blue edges) and a triangulation (purple edges inside $P$ and grey edges outside $P$). The separator $\Delta_i$ (dashed orange) separates the white part of the polygon from the blue part.}
    \label{fig:separator_spt}
\end{figure}

Let $(u_0,v_0)$ be an arbitrary non-tree edge. If $(u_0,v_0)$ is in the interior of a hole (or in the exterior), then the shortest paths $\pi(u_0,v)$ and $\pi(v_0,v)$, together with the obstacle boundary (or the boundary of the outer polygon), form a separator. To be precise, this is a 2-separator, unless $u_0$ or $v_0$ is equal to $v$ then it is a 1-separator. Otherwise, $\pi(u_0,v)$, $\pi(v_0,v)$, and $\pi(u_0,v_0)$ together form a 3-separator. Let $\Delta_0$ correspond to the side of the separator that has the highest weight. If the weight of $\Delta_0$ is at most $2n/3$, we are done. Otherwise, we find the next separator $\Delta_1$ as follows.

Consider the triangle $(u_0,v_0,w)$ adjacent to $(u_0,v_0)$ contained within $\Delta_0$. If one of the sides of this triangle is a tree edge, then let the non-tree edge be $(u_1,v_1)$. Otherwise, consider the separators defined by $(\pi(u_0,w), \pi(u_0, v), \pi(w,v))$ and $(\pi(v_0,w),\pi(v_0,v), \pi(w,v))$. Note that these separators are contained within $\Delta_0$, and their union is equal to $\Delta_0$. In this case, let $(u_1,v_1)$ be the edge $(u_0,w)$ or $(v_0,w)$ corresponding to the separator with largest weight. As the triangle $(u_0,v_0,w)$ has weight less than $n/3$, the weight of~$\Delta_1$ is at least $n/3$, and~$\Delta_1$ contains at least one less triangle than $\Delta_0$. We continue this process until we find a balanced separator $\Delta_k$. Note that the separator thus always consists of two shortest paths to $v$ from vertices of $P$ and is closed by either a line segment between two vertices of $P$, or part of the boundary. As the weight of the separator corresponds to the number of sites in $P_\ell$, this gives us a separator of the sites. 

Building the shortest path tree of $v$ and triangulating it takes $O(m\log m)$ time~\cite{Chazelle_triangulate,Linear_time_spt,Kirkpatrick_point_location}. We can compute the weight of all triangles in $O(n \log m)$ time using $O(n)$ point location queries. Next, we argue that the search for the separator can then be performed in $O(m)$ time~\cite{separatorTheorem}. Computing the weight of $\Delta_0$ can be done in $O(m)$, for example using a breadth-first search approach. The next separator $\Delta_{i+1}$ contains at least one triangle less than $\Delta_i$, so it can take at most $m$ steps to find a valid separator. In such a step there are two cases: either one side of the triangle $(u_i,v_i,w)$ is a tree edge, or none of its sides are. In the first case we simply compute the weight of $\Delta_{i+1}$ by subtracting the weight of the triangle $(u_i,v_i,w)$ from the weight of $\Delta_i$. In the second case, we must find the weight of the separators defined by $(\pi(u_i,w), \pi(u_i, v), \pi(w,v))$ and $(\pi(v_i,w),\pi(v_i,v), \pi(w,v))$. As they partition $\Delta_i$ it is sufficient to compute the weight of just one of these separators. We aggregate the weight of the two separators simultaneously, by using a breadth-first search in both problems and alternately taking one step in each problem. We stop whenever we find the complete weight of one of the separators. The number of step in the BFS is thus at most twice the size of the smallest of the two separators. As the triangles in one of these separators are not contained in any of the separators $\Delta_{i+1},\dots, \Delta_k$ (but are in $\Delta_i$), the running time over all separators $\Delta_0, \dots, \Delta_{k-1}$ is $O(m)$. We conclude that the total running time to compute a valid separator is $O(n \log n + m \log m + n \log m) = O(n\log n + m\log m)$.

\section{Spanners in a polygonal domain}\label{sec:polygonal_domain}
We consider a set of point sites $S$ that lie in a polygonal domain $\P$ with $m$ vertices and $h$ holes. Let $\partial \P$ denote the boundary of the outer polygon. In Section~\ref{sub:simple_geodesic_polygon}, we first discuss how to obtain a simple geodesic spanner for a polygonal domain, using the separator of Section~\ref{sub:sp-separator}. As before, the complexity of this spanner can be high. In Section~\ref{sub:12-spanner}, we discuss an adaptation to the spanner construction that achieves lower-complexity spanners, where the edges in the spanner are no longer shortest paths.

\subsection{A simple geodesic spanner}\label{sub:simple_geodesic_polygon}

A straightforward approach to construct a geodesic spanner for a polygonal domain would be to use the same construction we used for a simple polygon in Section~\ref{sec:simple_geodesic_spanner}. As discussed in Section~\ref{sub:sp-separator}, we cannot split the polygon into two subpolygons by a line segment $\lambda$. However, to apply our 1-dimensional spanner, we require only that the splitting curve $\lambda$ is a shortest path in $\P$. 
Instead of a line segment, we use the balanced sp-separator of Section~\ref{sub:sp-separator} to split the polygonal domain. There are three types of such a separator: 
a shortest path between two points on $\partial \P$ (1-separator), two shortest paths starting at the same point and ending at the boundary of a single hole (2-separator), three shortest paths $\pi(u,v)$, $\pi(v,w)$, and $\pi(u,w)$ with $u,v,w \in \P$ (3-separator). See Figure~\ref{fig:separator_types_appendix} for an illustration and Definition~\ref{def:sp-separator} for a formal definition. Let $\P_\ell$ be the polygonal domain to the left of $\lambda$, when $\lambda$ is a 1-separator, and interior to $\lambda$, when $\lambda$ is a 2- or 3-separator. Symmetrically, $\P_r$ is the domain to the right of $\lambda$ for a 1-separator and exterior to $\lambda$ for a 2-, or 3-separator. As before, let $S_\ell$ be the sites in the closed region $\P_\ell$, and $S_r := S\setminus S_\ell$. To compute a spanner on the set $S$, we project the sites to each of the shortest paths defining the separator, and consecutively run the 1-dimensional spanner algorithm once on each shortest path. Note that these projections are no longer unique, as there might be two topologically distinct shortest paths to $\lambda$. However, we can simply select one such that no two paths properly intersect to obtain the desired spanning ratio and spanner complexity. We then add the edge $(p,q)$ to our spanner $\G$ for each edge $(p_\lambda,q_\lambda)$ in the 1-dimensional spanners. Finally, we recursively compute spanners for the sites $S_\ell$ in $\P_\ell$ and $S_r$ in $\P_r$, just like in the simple polygon case.

Whenever the sp-separator intersects a single hole at two or more different intervals, then part of $\P_\ell$ or $\P_r$ becomes disconnected. When this happens, we simply consider each connected polygonal domain as a separate subproblem, and recurse on all of them. Let $n_i$ and $m_i$ denote the number of sites and worst-case complexity of a shortest path in subproblem $i$. This means that only reflex vertices are counted for $m_i$, which are the only relevant vertices for the spanner complexity. The sites are partitioned over the subproblems, so we have $\sum_i n_i = n$. 
The only new vertices (not of $\P$) that can be included in the subproblems are the at most three corners of the separator. Each vertex can be a reflex vertex in only one of the subproblems, thus $\sum_i m_i \leq m + 3$. In the recursion, an increase in the number of subproblems means that we might have more than $c\cdot 2^i$ vertices not of $\P$ at level $i$, but the depth of the recursion tree is then proportionally decreased. All further proofs on complexity of our spanners are written in term of $\P_\ell$ and $\P_r$, but translate to the case of multiple subproblems.

Next, we analyze the spanner construction using any 1-dimensional additively weighted $t$-spanner of size $O(n\log n)$.

\begin{lemma}\label{lem:1D_to_geodesic_spanner_holes}
The graph $\G$ is a geodesic $3t$-spanner of size $O(n\log^2 n)$.
\end{lemma}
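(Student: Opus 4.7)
The plan is to lift the argument of Lemma~\ref{lem:1D_to_geodesic_spanner} from a line-segment separator to the sp-separator of Theorem~\ref{thm:find_separator_holes_improved}. The size bound is immediate: Theorem~\ref{thm:find_separator_holes_improved} gives $|S_\ell|,|S_r|\le 7n/9$, so the recursion has depth $O(\log n)$; at each level I run a 1-dimensional $t$-spanner on each of the at most three shortest-path pieces of $\lambda$, contributing $O(n\log n)$ edges per level, hence $O(n\log^2 n)$ edges in total.

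For the spanning ratio I induct on $|S|$. If $p$ and $q$ lie in the same subdomain (or the same connected component of a subdomain, when the separator disconnects one), the inductive hypothesis gives $d_\G(p,q)\le 3t\cdot d(p,q)$; here I use that geodesics between same-side sites stay in the subdomain, since any ``shortcut through the other side'' would have to cross a shortest path of $\lambda$ twice and could be shortened. The interesting case is $p\in S_\ell$, $q\in S_r$. Because $\pi(p,q)$ cannot cross $\partial\P$, it must leave $\P_\ell$ across one of the (at most three) shortest-path pieces $\lambda_i$ of $\lambda$, at some point $r$. The 1-dimensional spanner $\G_{\lambda_i}$ on $\lambda_i$ then supplies a path in $\G$ of total weight at most $t\cdot d_w(p_{\lambda_i}, q_{\lambda_i})$, exactly as in the simple-polygon proof (an edge $(p',q')$ induced from $\G_{\lambda_i}$ in $\G$ has weight $d(p',q')\le d_w(p'_{\lambda_i},q'_{\lambda_i})$ because we can route through the projections along $\lambda_i$).

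The main obstacle is bounding $d_w(p_{\lambda_i}, q_{\lambda_i})$ in terms of $d(p,q)$ without the right-triangle geometry available for a line segment; $\lambda_i$ is a curved shortest path. My key step is the estimate that for any $x\in\P$ with (chosen) closest point $x_{\lambda_i}$ on $\lambda_i$ and any $y\in\lambda_i$,
\[
  d(x, x_{\lambda_i}) + d(x_{\lambda_i}, y) \le 3\, d(x,y),
\]
which follows from $d(x, x_{\lambda_i})\le d(x,y)$ (definition of closest point) combined with the triangle inequality $d(x_{\lambda_i}, y)\le d(x_{\lambda_i}, x) + d(x,y)$, using that arc length along $\lambda_i$ coincides with geodesic distance in $\P$ since $\lambda_i$ is itself a shortest path. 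Applied to $(p,r)$ and $(q,r)$, together with $d(p_{\lambda_i},q_{\lambda_i})\le d(p_{\lambda_i},r)+d(r,q_{\lambda_i})$, this yields $d_w(p_{\lambda_i}, q_{\lambda_i}) \le 3\,d(p,q)$, and hence $d_\G(p,q)\le 3t\cdot d(p,q)$. Degrading the $\sqrt{2}$ factor of Lemma~\ref{lem:1D_to_geodesic_spanner} to a factor of $3$ is the price paid for the curvature of the separator; it is exactly what turns the $\sqrt{2}t$ bound of the simple polygon into the $3t$ bound claimed here.
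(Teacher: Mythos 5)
Your size bound and your handling of the ``crossing'' case are fine and match the paper's computation: for $r\in\pi(p,q)\cap\lambda$ one has $d(p,p_\lambda)\le d(p,r)$, $d(q,q_\lambda)\le d(q,r)$, and $d(p_\lambda,q_\lambda)\le d(p_\lambda,p)+d(p,r)+d(q,r)+d(q,q_\lambda)\le 2d(p,q)$, giving $d_w(p_\lambda,q_\lambda)\le 3d(p,q)$ and hence the factor $3t$.

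The genuine gap is in your base case of the induction. You split on whether $p$ and $q$ lie in the same subdomain and justify recursing by claiming that ``geodesics between same-side sites stay in the subdomain, since any shortcut through the other side would have to cross a shortest path of $\lambda$ twice and could be shortened.'' This is false for 2- and 3-separators: the separator consists of up to three \emph{distinct} shortest paths, so a geodesic between two sites of $S_\ell$ can leave $\P_\ell$ by crossing $\pi(u,v)$ once and re-enter by crossing $\pi(v,w)$ once, crossing no single shortest path twice; no shortening argument applies, and the subdomain $\P_\ell$ need not be geodesically convex. (The paper states this explicitly: ``a shortest path between two sites $p,q$ in $S_\ell$ is not necessarily contained in $P_\ell$.'') In that situation $d_{\P_\ell}(p,q)$ can exceed $d(p,q)$, so the inductive hypothesis only bounds $d_\G(p,q)$ by $3t\cdot d_{\P_\ell}(p,q)$, which is not what you need. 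The fix is to change the case split: distinguish not by which side $p$ and $q$ lie on, but by whether $\pi(p,q)$ is entirely contained in $\P_\ell$ or $\P_r$ (then recurse) or intersects some separator piece $\lambda$ at a point $r$ (then apply your crossing-case bound, which works regardless of which sides $p$ and $q$ are on, since every site of $S$ is projected onto every piece of the separator). With that repair your argument coincides with the paper's proof.
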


\begin{proof}
As the 1-dimensional spanner has size $O(n\log n)$, there are still $O(n\log^2n)$ edges in~$\G$. What remains is to argue that $\G$ is a $3t$-spanner. Let $p,q$ be two sites in $S$. In contrast to the simple polygon case, a shortest path between two sites $p,q$ in $S_\ell$ (resp. $S_r$) is not necessarily contained in $P_\ell$ (resp. $P_r$). Therefore, we distinguish two different cases: either $\pi(p,q)$ is fully contained within $\P_r$ or $\P_\ell$, or there is a point $r \in \pi(p,q) \cap \lambda$ for some shortest path $\lambda$ of the separator. In the first case, there exists a path in $\G$ of length at most $3t d(p,q)$ by induction. In the second case, we have 
\begin{equation}
    d_\G(p,q) \leq d_{\G_\lambda}(p,q) \leq td_w(p_\lambda,q_\lambda) = t(d(p,p_\lambda) + d(p_\lambda,q_\lambda) + d(q,q_\lambda)).
\end{equation}
Additionally, we use that $d(p_\lambda,q_\lambda) \leq d(p_\lambda,r) + d(r,q_\lambda)$ and $d(p_\lambda,r) \leq d(p_\lambda, p) + d(p,r)$,  because of the triangle inequality, so
\begin{equation*}
    d(p_\lambda,q_\lambda) \leq d(p_\lambda,r) + d(q_\lambda,r) \leq d(p_\lambda, p) + d(p,r) + d(q_\lambda,q) + d(q,r) \leq 2 d(p,q).
\end{equation*}
It follows that $d_\G(p,q) \leq 3t d(p,q)$.
\end{proof}

Applying the simple 1-dimensional spanner of Section~\ref{fig:1D_spanner}  results in a 6-spanner. However, using the refinement of Lemma~\ref{lem:refinement}, we again obtain a $(2 + \varepsilon)$-spanner.

\subsection{Low complexity spanners in a polygonal domain}\label{sub:low_complexity_spanners_holes}

To obtain spanners of low complexity in a simple polygon, we formed groups of sites such that shortest paths within a group were disjoint from shortest paths of other groups.
We proposed two different ways of forming these groups, based on the shortest path tree of the central site $c$, and based on the shortest path tree of the separator $\lambda$. Both of these approaches do not directly lead to a low complexity spanner in a polygonal domain, as we explain next.

Both methods can still be applied in a polygonal domain, as the shortest path tree of both a site and a shortest path is still well-defined. However, it does not give us the property that we want for our groups. In particular, the second property discussed in Lemma~\ref{lem:1-dim-spanner}: each vertex of $\P$ is only used by shortest paths within $O(1)$ groups, does not hold. This is because the shortest path between two vertices $u,v$ is not necessarily homotopic to the path $\pi(u,c) \cup \pi(c,v)$. Thus paths within a group can go around a certain hole, while their shortest paths to $c$ (or $\lambda$) do not. See Figure~\ref{fig:problem_holes_spt} for an example. The construction can easily be expanded to ensure there are more sites in each group, by simply adding as many sites very close to the existing ones, or to more than three groups, by adding an additional hole above the construction with two corresponding sites. Consequently, the property that each vertex of $\P$ is used only by shortest paths within $O(1)$ groups does not hold.

\begin{figure}
    \centering
    \includegraphics{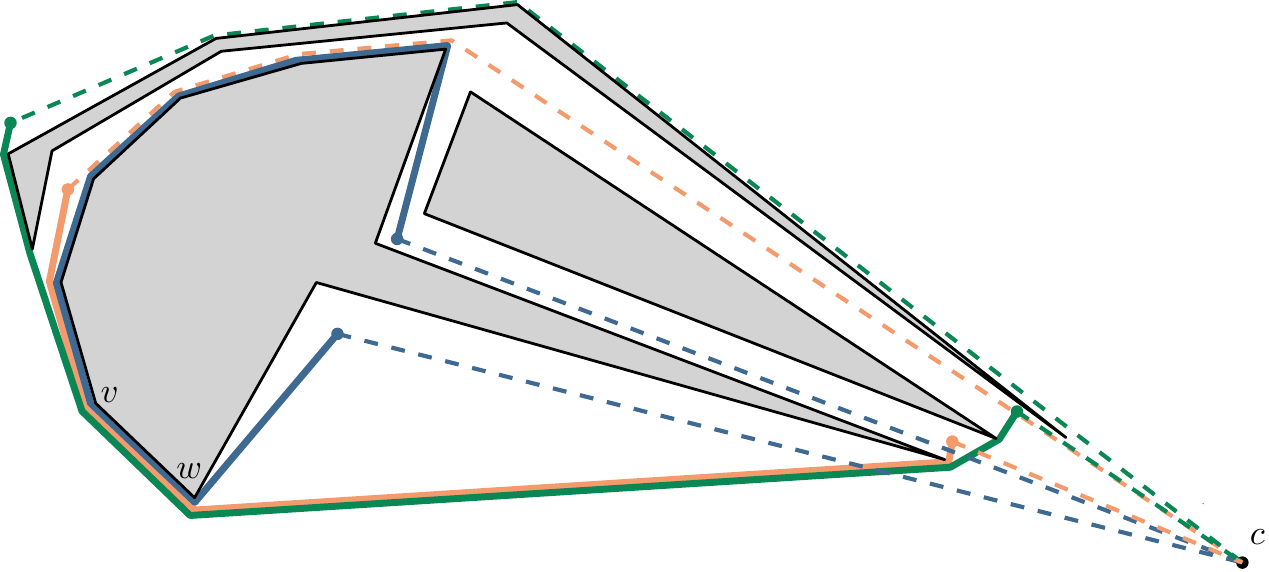}
    \caption{Assigning the sites to groups based on the shortest path tree of $c$, as described in Section~\ref{sec:4sqrt2spanner}, forms these colored groups. The shortest path from each site to $c$ is shown dashed. Each shortest path between two sites of a group contains both vertices $v$ and $w$.}
    \label{fig:problem_holes_spt}
\end{figure}

So far, we assumed that every edge $(p,q) \in E$ is a shortest path between $p$ and $q$. To obtain a spanner of low complexity, we can also allow an edge between $p$ and $q$ to be any path between the two sites. We call such a spanner a relaxed geodesic spanner. Note that our lower bounds still hold in this case. In the lower bound for a $(3-\varepsilon)$-spanner (Figure~\ref{fig:lowerbound_3spanner}), every path between $p \in S_\ell$ and $q \in S_r$ has complexity $\Theta(m)$, and in the general lower bound (Figure~\ref{fig:lower_bound_general}) we can easily adapt the top side of the polygon such that any path between two sites $p,q$ has the same complexity as $\pi(p,q)$.

\subsubsection{A $12$-spanner of complexity $O(m\sqrt{n} + n\log^2n)$}\label{sub:12-spanner}

To obtain a low complexity spanner in a polygonal domain, we adapt our techniques for the simple-polygon $4\sqrt{2}$-spanner in such a way that we avoid the problems we just sketched. The main difference with the simple polygon approach is that for an edge $(p_\lambda,q_\lambda)$ in the 1-dimensional spanner, the edge $(p,q)$ that we add to $\G$ is no longer $\pi(p,q)$. Instead, let $(p,q)$ be the shortest path from $p$ to $q$ via $p_\lambda$ and $q_\lambda$, excluding any overlap of the path. We denote this path by $\pi_\lambda(p,q)$. This path is not unique, for example when $\pi(p,p_\lambda)$ is not unique, but choosing the path that is used to construct $\mathit{SPT}_\lambda$ will do. Formally, $\pi_\lambda(p,q)$ is defined as follows. 
\begin{definition}\label{def:pi_lambda}
    The path $\pi_\lambda(p,q)$ is given by:
    \begin{itemize}
        \item $\pi(p,p_\lambda) \cup \pi(p_\lambda, q_\lambda) \cup \pi(q_\lambda,q)$, where $ \pi(p_\lambda, q_\lambda) \subseteq \lambda$, if $\pi(p,p_\lambda)$ and $\pi(q,q_\lambda)$ are disjoint,
        \item $\pi(p,r) \cup \pi(r,q)$, where $r$ denotes the closest point to $p$ of $\pi(p,p_\lambda) \cap \pi(q,q_\lambda)$, otherwise.
    \end{itemize}
\end{definition}    

One of the properties that we require of the groups, see Lemma~\ref{lem:group_properties}, has changed, namely that each vertex of $\P$ is only used by shortest paths within $O(1)$ groups. Instead of the shortest paths between sites in a group, we consider the paths $\pi_\lambda(p,q)$ of Definition~\ref{def:pi_lambda}. The following lemma shows that we can obtain a spanner with similar complexity as in a simple polygon when groups adhere to this adjusted property.

\begin{lemma}\label{lem:group_properties_polygonal_domain}
    If the groups adhere to the following properties, then $\G$ has complexity $O(m\sqrt{n}+n\log^2n)$:
    \begin{enumerate}
        \item each group contains $\Theta(\sqrt{n})$ sites, and
        \item each vertex of $\P$ is used by paths $\pi_\lambda$ within $O(1)$ groups.
    \end{enumerate}
\end{lemma}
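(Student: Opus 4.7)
My plan is to closely mirror the proof of Lemma~\ref{lem:group_properties}, replacing shortest paths by $\pi_\lambda$-paths throughout and adapting the recursions to the sp-separator setting.

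\emph{Step 1: one level of the 1-dimensional spanner.}
I would first show that the complexity contributed by a single call to the 1-dimensional spanner on $S_\lambda$ is $O(m\sqrt{n}+n)$. The two kinds of edges are treated as in Lemma~\ref{lem:group_properties}. The $\Theta(\sqrt{n})$ group-center-to-center edges $\pi_\lambda(c_i,c)$ each consist of at most three shortest paths in $\P$ glued by a sub-arc of $\lambda$, hence each has complexity $O(m)$, contributing $O(m\sqrt{n})$ in total. For the $n$ site-to-group-center edges $\pi_\lambda(p,c_i)$, let $V_i$ be the set of vertices of $\P$ used by any $\pi_\lambda$-path between sites of $S_i$ and let $r_i=|V_i|$. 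Property~2 says that each vertex of $\P$ lies in $O(1)$ of the $V_i$, hence $\sum_i r_i = O(m)$. Each $\pi_\lambda$-path from a site of $S_i$ to $c_i$ uses at most $r_i$ vertices, so the total complexity of type-(b) edges in group $i$ is $O(r_i\sqrt{n})$; summing over the $\Theta(\sqrt{n})$ groups gives $O(m\sqrt{n}+n)$, matching the per-level bound of the simple-polygon case.

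\emph{Step 2: the two recurrences.}
There are two enclosing recursions to solve. The outer recursion splits $\P$ by the sp-separator into $\P_\ell$ and $\P_r$, which introduces at most three new corner vertices and thus satisfies $m_1+m_2\le m+O(1)$; if the separator disconnects one side into several components the bookkeeping works component by component, as already noted in Section~\ref{sub:simple_geodesic_polygon}. The inner recursion of the 1-dim spanner splits $S_\lambda$ at a point $O$ on $\lambda$; even though this does not geometrically cut $\P$, the sub-arc of $\lambda$ touched by $\pi_\lambda$-paths of the two halves is disjoint, and since property~2 is assumed to hold at every level of the recursion, the vertex budget available to each half again satisfies $m_1+m_2\le m+O(1)$.

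\emph{Step 3: solve with Lemma~\ref{lem:1_dim_recursion}.}
Substituting the per-level bound $O(m\sqrt{n}+n)$ into the inner recurrence and invoking Lemma~\ref{lem:1_dim_recursion} yields a total complexity of $O(m\sqrt{n}+n\log n)$ for one 1-dimensional spanner. Feeding this, in turn, into the outer recurrence and applying Lemma~\ref{lem:1_dim_recursion} a second time replaces the $n\log n$ term by $n\log^2 n$ while preserving the $O(m\sqrt{n})$ term, which gives the claimed bound $O(m\sqrt{n}+n\log^2 n)$.

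\emph{Main obstacle.}
The hardest step is the inner recursion: splitting $\lambda$ at $O$ no longer corresponds to a geometric partition of $\P$, so a reflex vertex can a priori serve $\pi_\lambda$-paths on both sides of $O$. The clean resolution is to charge vertices against the $r_i$-budget guaranteed by property~2 at each level rather than against the ambient $m$, so that any vertex appearing on both sides is absorbed into the $O(1)$ constant of that property. A secondary subtlety is that $\pi_\lambda(p,q)$ from Definition~\ref{def:pi_lambda} has two cases; in the second case (where $\pi(p,p_\lambda)$ and $\pi(q,q_\lambda)$ overlap) the path still uses only vertices already counted in $V_i$, so the same argument applies verbatim.
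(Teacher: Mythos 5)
Your Step~1 and your treatment of the outer (sp-separator) recursion match the paper's proof: the per-level bound $O(m\sqrt{n}+n)$ via $\sum_i r_i = O(m)$, and $m_1+m_2\le m+3$ because only the separator corners are new vertices. The gap is in the inner recursion, exactly where you flag the ``main obstacle'', and your proposed resolution does not close it. Disjointness of the sub-arcs of $\lambda$ says nothing about the vertices of $\P$, and ``property~2 holds at every level'' does not imply $m_1+m_2\le m+O(1)$. More importantly, charging against the per-group budgets $r_i$ cannot work for the type-(a) center-to-center edges: a single edge $\pi_\lambda(c_i,c)$ is bounded only by the \emph{whole} complexity budget of its subproblem, not by any $r_i$. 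If a vertex could serve both sides of $O$ at every level, each subproblem at depth $j$ of the 1-dimensional recursion could have budget $m$, and the $2^j\cdot\Theta\bigl(\sqrt{n/2^j}\bigr)$ center edges at that level would contribute $\Theta\bigl(m\sqrt{n\cdot 2^j}\bigr)$, summing to $\Theta(mn)$ rather than $O(m\sqrt{n}+n\log n)$. So you genuinely need $m_1+m_2\le m+O(1)$ across the split at $O$, and property~2 alone does not give it to you.

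The missing ingredient is a disjointness statement about the paths to $\lambda$ themselves: the paths $\pi(p,p_\lambda)$ for sites projected to one side of $O$ are vertex-disjoint from the paths $\pi(q,q_\lambda)$ for sites projected to the other side. This holds because the paths to $\lambda$ are chosen non-crossing and form the tree $\mathit{SPT}_\lambda$, so each vertex of $\P$ lies on the path to a \emph{unique} point of $\lambda$ and hence belongs to exactly one side of $O$; and every $\pi_\lambda(p,q)$ with both endpoints on one side bends only at vertices of those paths (your own observation about Definition~\ref{def:pi_lambda}). This gives $m_1+m_2=m$ for the $O$-split, after which Lemma~\ref{lem:1_dim_recursion} applies as you describe and the rest of your argument goes through.
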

\begin{proof}
    Note that the complexity of any path $\pi_\lambda(p,q)$ is $O(m)$, as it can use a vertex of $\P$ at most once. 
    Thus the proof of Lemma~\ref{lem:group_properties} directly implies that the complexity of the edges in one level of the 1-dimensional spanner is $O(m\sqrt{n} + n)$.

    In the 1-dimensional recursion, splitting the sites by $O$ no longer corresponds to a horizontal split in the polygon. However, the paths $\pi(p, p_\lambda)$, $p \in S_\ell$, are still disjoint from the paths $\pi(q,q_\lambda)$, $q \in S_r$. For the two subproblems generated by the split by $O$ it thus still holds that $m_1 + m_2 = m$, where $m_i$ denotes the maximum complexity of a path in subproblem $i$. Lemma~\ref{lem:1_dim_recursion} states that this recursion solves to $O(m\sqrt{n} + n\log n)$.

    In the recursion where the domain is partitioned into two subpolygons $\P_\ell$ and $\P_r$, we now add at most three new vertices to the polygonal domain, namely the three corners of the sp-separator. Each vertex of $\P$ can only be a reflex vertex in either $\P_\ell$ or $\P_r$, so $m_1 + m_2 \leq m + 3$. Lemma~\ref{lem:1_dim_recursion} implies that this recursion for the complexity solves to $O(m\sqrt{n} + n\log^2n)$.
\end{proof}

As in Lemma~\ref{lem:1-dim-spanner}, we form the groups based on the traversal of the shortest path tree $\mathit{SPT}_\lambda$. We again include all sites in $S$ in the shortest path tree. Whenever a node $v$ of $\lambda$ has multiple children, we let all nodes that correspond to vertices/sites that lie to the \emph{left} of $\lambda$ come before nodes that correspond to vertices/sites that lie to the right of~$\lambda$ in the in-order traversal. Within these sets, the vertices/sites are ordered from bottom to top, as seen from $v$. See Figure~\ref{fig:groups_with_holes} for an example. The subtree rooted at the start and end point of~$\lambda$ is simply a part of the shortest path tree of the start/end point. The first $\lceil \sqrt{n}\rceil$ sites in the in-order traversal are in $S_1$, the second $\lceil \sqrt{n} \rceil$ in $S_2$, etc.

\begin{figure}
    \centering
    \includegraphics{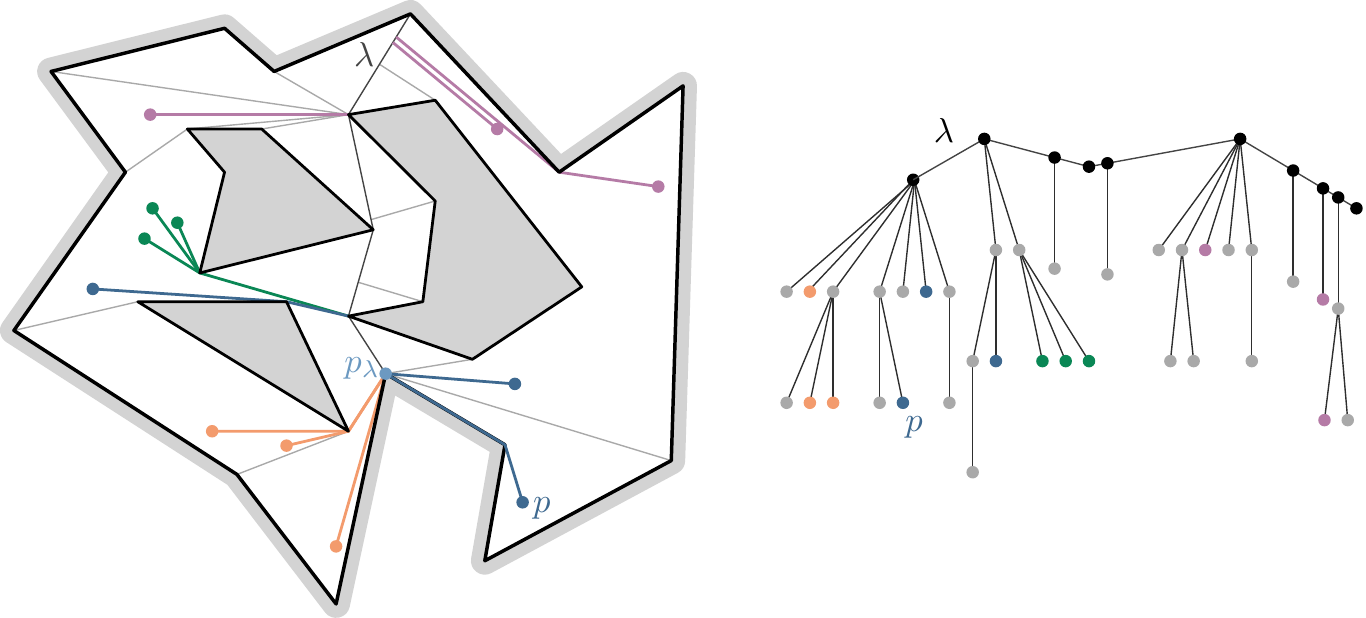}
    \caption{The shortest path tree $\mathit{SPT}_\lambda$ and the corresponding group assignment.}
    \label{fig:groups_with_holes}
\end{figure}

Clearly, these groups adhere to property 1 of Lemma~\ref{lem:group_properties_polygonal_domain}.
To show these groups adhere to property 2, we again consider for each group $S_i$ the minimal subtree $\T_i$ of $\mathit{SPT}_\lambda$ that contains all $p \in S_i$. Whenever there is more than one vertex of $\lambda$ in $\T_i$, we choose the leftmost of these vertices as the root of $\T_i$.

\begin{lemma}\label{lem:path_in_Ti}
An edge $\pi_\lambda(p,q)$ with $p_\lambda, q_\lambda \in S_i$ bends only at vertices in $\T_i$.
\end{lemma}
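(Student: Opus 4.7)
The plan is to prove the lemma by a case analysis following Definition~\ref{def:pi_lambda}. The key preliminary observation I would establish is that $\T_i$ necessarily contains (a)~the root path $\pi(s, s_\lambda)$ of every $s \in S_i$, and (b)~the subpath of $\lambda$ spanning the projections of all sites in $S_i$. Claim (a) is immediate because this root path is how $s$ is attached to $\mathit{SPT}_\lambda$, and $\T_i$ is the minimal subtree containing $s$. Claim (b) follows because $\lambda$ is represented in $\mathit{SPT}_\lambda$ as a chain of nodes (one per distinct projection point) connected by edges, so in order to connect two sites in $S_i$ whose projections onto $\lambda$ differ, the minimal subtree $\T_i$ must include the portion of this chain between their projections.

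First I would dispatch the disjoint case. If $\pi(p,p_\lambda)$ and $\pi(q,q_\lambda)$ do not meet, then by definition $\pi_\lambda(p,q) = \pi(p,p_\lambda) \cup \pi(p_\lambda, q_\lambda) \cup \pi(q_\lambda, q)$, with the middle piece $\pi(p_\lambda,q_\lambda)$ contained in $\lambda$. The two root paths lie in $\T_i$ by observation (a), and the middle piece lies in $\T_i$ by observation (b) applied to the pair $p, q$. Hence every bending vertex of $\pi_\lambda(p,q)$ is a vertex of $\T_i$.

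Next I would handle the overlapping case. Let $r$ be the closest point to $p$ in $\pi(p,p_\lambda) \cap \pi(q,q_\lambda)$, so that $\pi_\lambda(p,q) = \pi(p,r) \cup \pi(r,q)$. By subpath optimality, $\pi(p,r)$ is a prefix of $\pi(p,p_\lambda)$ and $\pi(r,q)$ is (reading from $q$) a prefix of $\pi(q,q_\lambda)$; both are therefore contained in $\T_i$ by observation (a), so all bends occur at vertices of $\T_i$. The main (small) obstacle I foresee is rigorously justifying observation (b)---in particular, verifying that our convention of inserting a node in $\mathit{SPT}_\lambda$ for every distinct projection point, together with our tie-breaking choice of the leftmost such node as the root of $\T_i$, is precisely what forces the connecting chain on $\lambda$ to lie inside $\T_i$. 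Once that is settled, the lemma follows directly from the two observations and the case split.
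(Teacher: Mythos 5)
Your proof is correct and is essentially the paper's argument unpacked: the paper simply observes that $\pi_\lambda(p,q)$ is the tree path from $p$ to $q$ in $\mathit{SPT}_\lambda$ via their lowest common ancestor (which is $p_\lambda$ or $q_\lambda$ in the disjoint case and $r$ in the overlapping case), and such a tree path lies in the minimal subtree $\T_i$ by definition. One caveat: your observation~(a) is not literally true as stated --- if every site of $S_i$ is a descendant of a single polygon vertex in $\mathit{SPT}_\lambda$, then $\T_i$ contains no node of $\lambda$ and hence not the full root paths (and the justification ``minimal subtree containing $s$'' proves nothing for a single node) --- but this does no harm, since that situation forces every pair into the overlapping case, where you only use the prefixes up to the common point $r$, which do lie on the tree path between $p$ and $q$ and hence in $\T_i$.
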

\begin{proof}
The path $\pi_\lambda(p,q)$ is a subpath of $\pi(p,p_\lambda) \cap \pi(p_\lambda,q_\lambda) \cap \pi(q,q_\lambda)$. In particular, it is the path from $p$ to $q$ in $\T_i$ via their lowest common ancestor. This is $p_\lambda$ or $q_\lambda$ when $\pi(p,p_\lambda) \cap \pi(q,q_\lambda) = \emptyset$ and the vertex $r$, as in Definition~\ref{def:pi_lambda}, otherwise.
\end{proof}

\begin{lemma}
Any vertex in $\mathit{SPT}_\lambda$ occurs in at most two trees $\T_i$ and $\T_j$ as a  non-root node.
\end{lemma}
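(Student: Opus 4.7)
The plan is to reuse the argument of Lemma~\ref{lem:occurence_spt} almost verbatim. The only property of the simple-polygon in-order traversal that was actually used there is that the traversal visits the descendants of any node as one contiguous block of sites. The custom traversal of $\mathit{SPT}_\lambda$ defined just before Figure~\ref{fig:groups_with_holes} (at each vertex of $\lambda$: left-of-$\lambda$ children first, then right-of-$\lambda$ children, each bottom-to-top; standard at non-$\lambda$ vertices) still has this property, because at every node we commit to a fixed linear order on the children and recursively expand each child's subtree before moving to the next. Checking this contiguity is the only place where the polygonal-domain setting enters.

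I would argue by contradiction. Suppose some vertex $v$ is a non-root node in three subtrees $\T_i$, $\T_j$, $\T_k$ with $i<j<k$. Writing $\T(v)$ for the subtree of $\mathit{SPT}_\lambda$ rooted at $v$, each $\T_\ell$ must contain a site $p_\ell \in S_\ell \cap \T(v)$, and by the contiguity of $\T(v)$'s traversal block these three sites appear in the order $p_i, p_j, p_k$ within the traversal.

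Since $v$ is non-root in $\T_j$, the parent of $v$ in $\mathit{SPT}_\lambda$ also lies in $\T_j$, and so $\T_j$ contains at least one site $p_j' \in S_j$ outside $\T(v)$. Because descendants of $v$ form a contiguous block in the traversal, $p_j'$ lies either entirely before $p_i$ or entirely after $p_k$ in the ordering. The groups $S_\ell$ are formed by slicing the traversal into consecutive blocks of size $\lceil\sqrt{n}\rceil$, so every site situated strictly between two sites of $S_j$ must itself belong to $S_j$. In the first subcase this forces $p_i\in S_j$; in the second subcase it forces $p_k \in S_j$. Either outcome contradicts $p_i \in S_i$ or $p_k \in S_k$, completing the proof.

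The main obstacle I anticipate is not the contradiction argument itself, which is a straightforward adaptation, but the upstream bookkeeping: confirming that the left-then-right, bottom-to-top convention at $\lambda$-vertices yields a well-defined traversal and that the subtree $\T_i$ remains connected under this ordering, so that my use of ``$\T(v)$ corresponds to a contiguous block'' is genuinely valid even when the parent of $v$ is a vertex of $\lambda$ with many children on both sides.
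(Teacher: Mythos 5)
Your proof is correct and takes essentially the same route as the paper, which proves this lemma simply by citing the proof of Lemma~\ref{lem:occurence_spt}; you reproduce that contradiction argument faithfully. Your explicit identification of the one property that must be re-verified in the polygonal-domain setting --- that the modified in-order traversal of $\mathit{SPT}_\lambda$ still visits each subtree $\T(v)$ as a contiguous block --- is exactly the (implicit) reason the earlier proof carries over.
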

\begin{proof}
Follows directly from the proof of  Lemma~\ref{lem:occurence_spt}.
\end{proof}

Except for the root nodes, property 2 thus holds. As each $\T_i$ has only one root node by definition, the number of groups using a single root node $v$ in their paths may be large, but the sum of the number of groups that use a node over all root nodes is still $O(\sqrt{n})$. From this and Lemma~\ref{lem:group_properties_polygonal_domain} we conclude that $\G$ has complexity $O(m\sqrt{n} + n \log^2n)$.

\begin{lemma}
The graph $\G$ is a geodesic $12$-spanner of size $O(n \log^2n)$.
\end{lemma}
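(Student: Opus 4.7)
The plan is to combine two facts that are already available: the one-dimensional additively weighted spanner $\G_\lambda$ built on each shortest path of the separator is a $4$-spanner with $O(n\log n)$ edges (as in Lemma~\ref{lem:4sqrt2-spanner}), and the lifting argument of Lemma~\ref{lem:1D_to_geodesic_spanner_holes} turns a $t$-spanner in the one-dimensional setting into a geodesic $3t$-spanner in a polygonal domain. Setting $t=4$ gives the claimed factor of $12$. The only thing that must be checked carefully is that the lifting still works when the edges we place into $\G$ are the relaxed curves $\pi_\lambda(p,q)$ of Definition~\ref{def:pi_lambda} rather than true shortest paths.

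For the size bound, the sp-separator of Theorem~\ref{thm:find_separator_holes_improved} is balanced, so each recursive subproblem contains at most a constant fraction of the sites. The recursion tree therefore has $O(\log n)$ depth. At each node we perform the 1D construction on at most three shortest paths of the separator; by Lemma~\ref{lem:4sqrt2-spanner} each such 1D spanner contributes $O(n\log n)$ edges, and each such edge corresponds to one edge added to $\G$. Summing over levels yields the $O(n\log^2 n)$ bound, exactly as in the simple-polygon case. The disconnected pieces that can arise from a $2$- or $3$-separator do not cause trouble because, as remarked before Lemma~\ref{lem:1D_to_geodesic_spanner_holes}, the sites are still partitioned between subproblems.

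For the spanning ratio, I would fix two sites $p,q\in S$ and examine $\pi(p,q)$. If $\pi(p,q)$ stays inside a single subdomain created by the separator, induction gives a path in $\G$ of length at most $12\,d(p,q)$. Otherwise $\pi(p,q)$ crosses one of the shortest paths $\lambda$ of the separator at some point $r$; let $p_\lambda,q_\lambda$ be the projections used when building $\G_\lambda$. The key observation is that every edge $(p',q')$ of $\G_\lambda$ is realized in $\G$ by $\pi_\lambda(p',q')$, whose geometric length is at most $d(p',p'_\lambda)+d(p'_\lambda,q'_\lambda)+d(q',q'_\lambda)=d_w(p'_\lambda,q'_\lambda)$. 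Summing along the $\G_\lambda$-path of length at most $4\,d_w(p_\lambda,q_\lambda)$ guaranteed by Lemma~\ref{lem:4sqrt2-spanner} gives
\[
d_\G(p,q)\;\le\;4\bigl(d(p,p_\lambda)+d(p_\lambda,q_\lambda)+d(q,q_\lambda)\bigr),
\]
and then the triangle inequality through $r$ together with $d(p,p_\lambda)\le d(p,r)$ and $d(q,q_\lambda)\le d(q,r)$ bounds the parenthesized quantity by $3\,d(p,q)$, exactly as in the proof of Lemma~\ref{lem:1D_to_geodesic_spanner_holes}. This yields $d_\G(p,q)\le 12\,d(p,q)$.

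The main obstacle, and the reason this requires comment rather than being a one-line invocation, is that each edge of $\G$ is now a path that need not be a shortest path, so one must verify that the weight of each $\G$-edge is still dominated by the additive weighted distance between the corresponding projected sites. Once that inequality is noted, the spanning-ratio argument of Lemma~\ref{lem:1D_to_geodesic_spanner_holes} applies verbatim. I would also flag that the construction has so far been analyzed only when a single shortest path $\lambda$ is involved per recursive step; for a $2$- or $3$-separator the induction handles each of the (at most three) constituent shortest paths in sequence, which is legitimate because $r$ must lie on at least one of them.
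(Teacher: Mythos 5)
Your proposal is correct and follows the same route as the paper: it invokes the $4$-spanner bound for $\G_\lambda$ from Lemma~\ref{lem:4sqrt2-spanner}, checks the one point that genuinely needs checking (that each relaxed edge $\pi_\lambda(p,q)$ has length at most $d_w(p_\lambda,q_\lambda)$, so $d_\G(p,q)\le d_{\G_\lambda}(p_\lambda,q_\lambda)$ still holds), and then applies the $3t$ lifting of Lemma~\ref{lem:1D_to_geodesic_spanner_holes} with $t=4$. The paper's proof is just a terser version of exactly this argument.
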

\begin{proof}
The number of edges is exactly the same as in the $4\sqrt{2}$-spanner in a simple polygon.
The way the groups are formed in the construction of the 1-dimensional spanner $\G_\lambda$ does not influence its spanning ratio, thus $\G_\lambda$ is a 4-spanner (see Lemma~\ref{lem:4sqrt2-spanner}). Note that even for our redefined edges in $\G$ it holds that $d_\G(p,q) \leq d_{\G_\lambda}(p_\lambda,q_\lambda)$. Lemma~\ref{lem:1D_to_geodesic_spanner_holes} then directly implies that $\G$ is a 12-spanner.
\end{proof}

\subsubsection{A $6k$-spanner of complexity $O(mn^{1/k} + n \log^2n)$}
The generalization of the construction, discussed in Section~\ref{sub:general_complexity_spanner}, where $\Theta(n^{1/k})$ groups are recursively partitioned into smaller groups, is also applicable in a polygonal domain. As our groups adhere to the required properties, the complexity of this spanner remains $O(mn^{1/k} + n\log^2 n)$, as in the simple polygon, but the spanning ratio increases to $6k$. 

\begin{lemma}\label{lem:6k-spanner-holes}
Let $S$ be a set of $n$ point sites in a polygonal domain $\P$ with $m$ vertices, and let $k \geq 1$ be any integer constant. There exists a geodesic $6k$-spanner of size~$O(n\log^2n)$ and complexity $O( m n^{1/k} + n\log^2 n)$. 
\end{lemma}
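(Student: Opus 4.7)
My plan is to combine the layered $1$-dimensional spanner construction of Lemma~\ref{lem:delta2sqrt2-spanner} with the polygonal-domain techniques of Section~\ref{sub:12-spanner}. At each level of the outer recursion I apply the balanced sp-separator of Theorem~\ref{thm:find_separator_holes_improved} to split $\P$ into $\P_\ell$ and $\P_r$, where the separator $\lambda$ is composed of at most three shortest paths (depending on whether it is a $1$-, $2$-, or $3$-separator). On each constituent shortest path of $\lambda$ I construct a $1$-dimensional additively weighted spanner, but now I use the $2k$-spanner of Lemma~\ref{lem:delta2sqrt2-spanner} with its internal tree of $\Theta(n^{1/k})$ groups recursively refined to height $k$, rather than the simple $2$-spanner. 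For every resulting edge $(p_\lambda, q_\lambda)$ I add the relaxed path $\pi_\lambda(p,q)$ of Definition~\ref{def:pi_lambda} to $\G$, and then recurse on $\P_\ell$ and $\P_r$ (or on each connected component produced by a $2$- or $3$-separator).

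For the spanning ratio, I would first observe that $|\pi_\lambda(p,q)| \leq d(p,p_\lambda) + d(p_\lambda,q_\lambda) + d(q_\lambda,q) = d_w(p_\lambda,q_\lambda)$, so the key inequality $d_\G(p,q) \leq d_{\G_\lambda}(p_\lambda,q_\lambda)$ used in Lemma~\ref{lem:1D_to_geodesic_spanner_holes} remains valid even with relaxed edges. Invoking that lemma with $t = 2k$ then yields a $6k$-spanner. For the size bound, the $1$-dimensional spanner contributes $O(n\log n)$ edges at each of the $O(\log n)$ outer recursion levels, giving $O(n\log^2 n)$ edges in total.

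The complexity analysis mirrors Section~\ref{sub:12-spanner} but with the $k$-level internal group tree. At each internal level of the $1$-dimensional spanner I form the $\Theta(n^{1/k})$ groups using the in-order traversal of $\mathit{SPT}_\lambda$ exactly as in Section~\ref{sub:12-spanner}, so Lemma~\ref{lem:path_in_Ti} ensures that any $\pi_\lambda$-edge within a group bends only at vertices of the minimal subtree $\T_i$. The node-occurrence argument of Lemma~\ref{lem:occurence_spt} then implies that each non-root vertex of $\mathit{SPT}_\lambda$ appears in at most two $\T_i$'s per internal level, so $\sum_i r_i = O(m)$ holds at every internal level. Summing over the $O(k)$ internal levels gives $O(mn^{1/k}+n)$ complexity for the edges added at one outer recursion level. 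Applying Lemma~\ref{lem:1_dim_recursion} once for the $1$-dimensional inner split (which preserves $m$ exactly) and once for the outer sp-separator split (which adds only $O(1)$ separator corners as new reflex vertices per split) yields the total complexity $O(mn^{1/k}+n\log^2 n)$.

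The main obstacle I expect is propagating the ``$O(1)$ groups per vertex'' property uniformly through all $O(k)$ levels of the internal group tree when edges are the relaxed $\pi_\lambda$-paths. Specifically, when a level-$j$ center $c_i^{(j)}$ is connected to its parent center $c_{i'}^{(j-1)}$, I must argue that $\pi_\lambda(c_i^{(j)}, c_{i'}^{(j-1)})$ is contained in the region associated to the level-$(j{-}1)$ parent group, so that the per-level accounting $\sum_i r_i^{(j)} = O(m)$ survives across merges. This should follow from the fact that the sub-group tree $\T_i^{(j)}$ is nested inside $\T_{i'}^{(j-1)}$ in $\mathit{SPT}_\lambda$, combined with Lemma~\ref{lem:path_in_Ti}, but writing out the inductive argument cleanly will require care since the parent-subgroup boundary is a union of root-to-leaf paths in $\mathit{SPT}_\lambda$ rather than a chord.
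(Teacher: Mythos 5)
Your proposal is correct and takes essentially the same route as the paper, which likewise obtains the $6k$-spanner by running the $k$-level group-tree construction of Section~\ref{sub:general_complexity_spanner} on $\mathit{SPT}_\lambda$ with relaxed $\pi_\lambda$-edges and invoking Lemma~\ref{lem:1D_to_geodesic_spanner_holes} with $t=2k$. The obstacle you flag is already resolved by Lemma~\ref{lem:path_in_Ti}: a $\pi_\lambda$-edge between two sites of a group is the tree path between them in $\mathit{SPT}_\lambda$, hence lies in the minimal subtree of that group, and since $S_i^{(j)} \subseteq S_{i'}^{(j-1)}$ these subtrees are nested, so the edge from a level-$j$ center to its parent center is automatically confined to the parent's subtree and the per-level accounting $\sum_i r_i^{(j)} = O(m)$ goes through.
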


\subsection{Construction algorithm}

In this section we discuss an algorithm to compute the geodesic spanners of Section~\ref{sub:low_complexity_spanners_holes}. The following gives an overview of the algorithm that computes a $6k$-spanner of complexity $O(m n^{1/k} + n\log^2 n)$ in $O(n^2 \log m + nm\log m)$ time.

\begin{enumerate}
        \item Find an sp-separator such that $\P$ is partitioned into two polygons $\P_\ell$ and $\P_r$, and $S_\ell$ contains at least $2n/9$ and at most $2n/3$ sites using the algorithm of Theorem \ref{thm:find_separator_holes_improved}.
        
        \item For each shortest path $\lambda$ of the separator:
        \begin{enumerate}
            \item For each $p \in S$ find the weighted point $p_\lambda$  on $\lambda$ and add this point to $S_\lambda$ using the algorithm of Lemma~\ref{lem:find_projections_holes}. 
            \item Compute an additively weighted 1-dimensional spanner $\G_\lambda$ on the set $S_\lambda$. 
            \item For every edge $(p_\lambda, q_\lambda) \in E_\lambda$ add the edge $(p,q) = \pi_\lambda(p,q)$ to $\G$. 
        \end{enumerate}
        \item Recursively compute spanners for $S_\ell$ in $\P_\ell$ and $S_r$ in $\P_r$.
\end{enumerate}

The algorithm starts by finding a balanced sp-separator. According to Theorem~\ref{thm:find_separator_holes_improved} this takes $O(n \log n + m\log m)$ time. We then continue by building a 1-dimensional additively weighted spanner on each of the shortest paths defining $\lambda$ as follows.

\begin{lemma}\label{lem:find_projections_holes}
    We can compute the closest point $p_\lambda$ on $\lambda$ and $d(p,p_\lambda)$ for all sites $p \in S$, and the shortest path tree $\mathit{SPT}_\lambda$, in $O((m+n)\log m)$ time.
\end{lemma}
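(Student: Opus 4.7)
My plan is to compute a shortest path map from $\lambda$ in $\P$ and then resolve all $n$ site queries by planar point location. Let $\mathit{SPM}_\lambda$ denote the subdivision of $\P$ in which every cell is the locus of points whose shortest path to $\lambda$ visits the same sequence of reflex vertices of $\partial\P$ and terminates at the same point of $\lambda$. Since $\lambda$ is itself a shortest path in $\P$ and thus has combinatorial complexity $O(m)$, I would construct $\mathit{SPM}_\lambda$ by running the continuous-Dijkstra algorithm of Hershberger and Suri with the initial wavefront seeded along all segments of $\lambda$ rather than at a single source point. The standard analysis then gives a map of total complexity $O(m)$ computed in $O(m\log m)$ time, where each cell stores its \emph{apex} (the last reflex vertex on any shortest path from $\lambda$ into the cell) together with the point of $\lambda$ where such paths terminate and the distance between the apex and that point.

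Next, I would preprocess $\mathit{SPM}_\lambda$ for $O(\log m)$-time planar point location in $O(m\log m)$ time. For each site $p\in S$, a single query returns the cell $c_p$ containing $p$; from its apex $a_p$ we read off $p_\lambda$ and compute $d(p,p_\lambda)=d(\lambda,a_p)+|a_p p|$ in $O(1)$ time, for a total contribution of $O(n\log m)$. The shortest path tree $\mathit{SPT}_\lambda$ restricted to vertices of $\P$ is exactly the tree of shortest paths embedded inside $\mathit{SPM}_\lambda$ and is therefore extracted in $O(m)$ additional time. Finally, every site $p$ is spliced into $\mathit{SPT}_\lambda$ as a child of its apex $a_p$ in $O(1)$ per site, giving the overall $O((m+n)\log m)$ bound.

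The main obstacle is justifying that the Hershberger--Suri continuous-Dijkstra framework extends from a single-point source to a polygonal-chain source $\lambda$ within the same asymptotic bounds. Because $\lambda$ is itself a shortest path, no shortest path from an interior point of $\P$ can properly cross $\lambda$, so $\lambda$ behaves like an additional zero-cost segment of the boundary: seeding the event queue with the $O(m)$ edges of $\lambda$ and invoking the standard wavefront propagation preserves both the $O(m\log m)$ running time and the $O(m)$ output size. A secondary subtlety is that $p_\lambda$ need not be unique when $p$ lies on a bisector of $\mathit{SPM}_\lambda$; taking the representative returned by point location consistently yields the non-crossing choice of projections described in Section~\ref{sub:simple_geodesic_polygon}, so the resulting $\mathit{SPT}_\lambda$ is well defined.
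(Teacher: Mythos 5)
Your proposal is correct and follows essentially the same route as the paper: build the shortest path map of $\lambda$ via the Hershberger--Suri continuous-Dijkstra algorithm (using that $\lambda$ has complexity $O(m)$, so the multi-source/segment-source extension they describe applies without changing the $O(m\log m)$ bound), extract $\mathit{SPT}_\lambda$ from the map, and answer the $n$ site queries by planar point location in $O(\log m)$ each. The only difference is cosmetic — you argue the segment-source extension directly via the non-crossing property of $\lambda$, whereas the paper simply cites Hershberger and Suri's remark that their algorithm handles non-point and multiple sources.
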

\begin{proof}
Hershberger and Suri~\cite{SPMHershbergerSuri} show how to build the shortest path map of a point site in a polygonal domain in $O(m\log m)$ time. They also note that this extends to non-point sources, such as line segments, and to $O(m)$ sources, without increasing the running time. We can thus build the shortest path map of $\lambda$ in $\P$ in $O(m\log m)$ time, using that $\lambda$ has complexity $O(m)$.  After essentially triangulating each region of the shortest path map using the vertices of $P$, we obtain $\mathit{SPT}_\lambda$ in the same time bound. After building a point location data structure for this augmented shortest path map, we can query it for each site $p \in S$ to find $p_\lambda$ and $d(p,p_\lambda)$ in $O(\log m)$ time.
\end{proof}

\begin{lemma}\label{lem:1-dim-spanner-holes}
    Given $\mathit{SPT}_\lambda$, we can construct a $4$-spanner $\G_\lambda$ on the additively weighted points $S_\lambda$, where the groups adhere to the properties of Lemma~\ref{lem:group_properties_polygonal_domain}, in $O(n\log n + m)$ time.
\end{lemma}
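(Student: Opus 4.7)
The plan is to mirror the proof of Lemma~\ref{lem:1-dim-spanner}, with the only essential change being the tie-breaking rule at nodes of $\lambda$ in the in-order traversal of $\mathit{SPT}_\lambda$. The spanner construction itself is exactly the one from Section~\ref{sec:4sqrt2spanner}: at each level of the 1-dimensional recursion we select a center $c_\lambda$ with minimum $d_w(c_\lambda,O)$, partition the current set into groups of size $\lceil\sqrt{n}\rceil$ by the chosen ordering, pick a subcenter $c_{i,\lambda}$ in each group, and emit the star edges to $c_{i,\lambda}$ together with the edges $(c_{i,\lambda}, c_\lambda)$. The spanning-ratio argument of Lemma~\ref{lem:4sqrt2-spanner} uses only the triangle inequality and the center choices, so $\G_\lambda$ is still a $4$-spanner, while the group properties required by Lemma~\ref{lem:group_properties_polygonal_domain} were already established in the discussion preceding the statement.

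First I would preprocess: traverse $\mathit{SPT}_\lambda$ once to produce a linear array listing the sites of $S_\lambda$ in the order specified in Section~\ref{sub:12-spanner}, where at every node on $\lambda$ all left-of-$\lambda$ descendants precede all right-of-$\lambda$ descendants, and within each side sites are ordered bottom to top as seen from that node. Since $\mathit{SPT}_\lambda$ has size $O(m+n)$ and the traversal is a single DFS, this costs $O(m+n)$ time. The array is built once and then reused for every level of the 1-dimensional recursion.

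At every recursive call on a subproblem of size $n_i$, locating the splitting point $O$ so that the two halves are balanced, picking the overall center $c_\lambda$, chunking the array into blocks of $\lceil\sqrt{n_i}\rceil$ consecutive entries, choosing each group center, and emitting the star and supercenter edges can all be done in $O(n_i)$ time by a single linear scan of the current subarray. Summing $O(n_i)$ over the $O(\log n)$ recursion levels gives $O(n\log n)$ for the spanner construction, and adding the $O(m+n)$ preprocessing yields the claimed $O(n\log n+m)$ bound.

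The one point that needs care is the claim that splitting the global ordered array at $O$ produces exactly the orderings that the recursive calls would produce from scratch. Rooting $\mathit{SPT}_\lambda$ at an endpoint of $\lambda$ makes the nodes of $\lambda$ form a path in the tree from the root, so $O$ corresponds to a node on this path and the subtrees rooted at nodes of $\lambda$ below (respectively above) $O$ contain exactly the sites in $S_\ell$ (respectively $S_r$). The tie-breaking rule ``left-of-$\lambda$ before right-of-$\lambda$, bottom to top'' is consistent with itself when applied to any sub-path of $\lambda$, so the restriction of the global in-order traversal to either side is precisely the in-order traversal one would obtain by rebuilding the tree for the restricted separator. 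This is what lets us avoid the $O(m+n)$ preprocessing at every recursion level and reduces the per-level cost to $O(n_i)$.
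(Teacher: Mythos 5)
Your proposal is correct and follows essentially the same route as the paper: reuse the in-order traversal of $\mathit{SPT}_\lambda$ (with the left-before-right, bottom-to-top tie-breaking of Section~\ref{sub:12-spanner}) to obtain a global ordering in $O(m+n)$ time, observe that splitting at $O$ splits the tree consistently so the ordering can be reused at every level, and spend only $O(n_i)$ per recursive call, for $O(n\log n + m)$ total. The paper's proof is just a terser version of the same argument, deferring the ordering and its reuse to Lemmas~\ref{lem:1-dim-spanner} and the discussion in Section~\ref{sub:12-spanner}.
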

\begin{proof}
    As in Lemma~\ref{lem:1-dim-spanner}, we can reuse $\mathit{SPT}_\lambda$ to form the groups based on the ordering produced by an in-order traversal of the tree. See Section~\ref{sub:12-spanner} for an exact description of this ordering. The ordering allows us to form the groups for a level of the 1-dimensional spanner in $O(n)$ time, thus the total running time is $O(n\log n)$.
\end{proof}

For the general $2k$-spanner for additively weighted sites, using the ordering of the sites and Lemma~\ref{lem:1-dim-spanner-bottom-up} to build the tree of groups bottom up implies that we can construct this spanner in $O(n\log n + m)$ time as well.

After computing the additively weighted spanner $\G_\lambda$, we add edge $(p,q) = \pi_\lambda(p,q)$ to $\G$ for every edge $(p_\lambda,q_\lambda) \in \G_\lambda$. We can either compute and store these edges explicitly, which would take time and space equal to the complexity of all added edges, or we can store them implicitly by only storing the points $p_\lambda$ and $q_\lambda$, or the point $r$ from Definition~\ref{def:pi_lambda} when the paths are not disjoint. The point $r$ is the lowest common ancestor of the nodes $p$ and $q$ in $\mathit{SPT}_\lambda$. This can be computed in $O(1)$ time after $O((n+m)\log(n+m))$ preprocessing time~\cite{LCA_queries}.

As there are at most three shortest path that define the separator, step 2 takes $O((n+m)\log(n+m))$ time in total. This means that this is also the dominant term in the construction. The total running time is thus $O((n+m)\log^2n \log m)$.

\begin{restatable}{theorem}{spannerHolesTime}\label{thm:6k-spanner-holes-time}
Let $S$ be a set of $n$ point sites in a polygonal domain $\P$ with $m$ vertices, and let $k \geq 1$ be any integer constant. We can build a relaxed geodesic $6k$-spanner of size~$O(n\log^2n)$ and complexity $O(m n^{1/k} + n\log^2 n)$ in $O((n+m)\log^2n \log m + K)$ time, where $K$ is the output complexity. 
\end{restatable}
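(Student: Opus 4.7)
The plan is to verify the claimed bounds by combining the algorithmic steps summarized above with the correctness and running-time lemmas already established. The fact that $\G$ is a relaxed geodesic $6k$-spanner of size $O(n\log^2 n)$ and complexity $O(mn^{1/k} + n\log^2 n)$ follows directly from Lemma~\ref{lem:6k-spanner-holes}, so the heart of the proof is the running-time analysis.

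First I would bound the cost of one recursive call on a subproblem with $n'$ sites and a polygonal domain of complexity $m'$. Finding a balanced sp-separator costs $O(m'\log m' + n'\log n')$ by Theorem~\ref{thm:find_separator_holes_improved}. For each of the at most three shortest paths comprising the separator, computing all projections $p_\lambda$ together with the augmented shortest-path tree $\mathit{SPT}_\lambda$ takes $O((n'+m')\log m')$ time by Lemma~\ref{lem:find_projections_holes}, while building the 1-dimensional additively weighted $2k$-spanner $\G_\lambda$ on the projected sites using the in-order traversal of $\mathit{SPT}_\lambda$ takes $O(n'\log n' + m')$ time by (the polygonal-domain analogue of) Lemmas~\ref{lem:1-dim-spanner-holes} and~\ref{lem:1-dim-spanner-bottom-up}. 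Preprocessing $\mathit{SPT}_\lambda$ for constant-time LCA queries lets us output each edge $(p,q)=\pi_\lambda(p,q)$ implicitly in $O(1)$ time, at an additive cost of $O((n'+m')\log(n'+m'))$. Thus the per-call cost is $O((n'+m')\log(n'+m'))$.

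To sum over the recursion tree, I would invoke the accounting already used in Lemma~\ref{lem:group_properties_polygonal_domain}: the sites are strictly partitioned by each separator (so $\sum_i n_i = n'$ per split), and the polygon complexity grows by at most three per split (since only the at most three separator corners are new vertices, and each vertex of $\P$ is reflex in at most one side). Combined with the fact that a balanced sp-separator leaves at most $\tfrac{7}{9}n'$ sites on either side (so the outer recursion has depth $O(\log n)$) and the per-level summation solved by Lemma~\ref{lem:1_dim_recursion}, the preprocessing cost across the entire recursion sums to $O((n+m)\log^2 n\log m)$. The additive $K$ term accounts for the option of writing every implicit edge out explicitly; with the default implicit representation the reported output has size $O(n\log^2 n)$, matching the spanner size, whereas explicit output may require up to $O(mn^{1/k}+n\log^2 n)$ time.

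The main technical obstacle is keeping this summation tight: one must verify that the per-call cost $O((n'+m')\log(n'+m'))$, summed over subproblems whose total polygon complexity exceeds $m$ by at most a constant times the number of subproblems, still telescopes into $O((n+m)\log^2 n\log m)$. Applying Lemma~\ref{lem:1_dim_recursion} level by level to absorb the growing polygon complexity, and then multiplying by the $O(\log n)$ depth of the outer recursion, closes the argument.
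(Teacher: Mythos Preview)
Your proposal is correct and follows essentially the same approach as the paper: the paper also derives the theorem by observing that each recursive call costs $O((n'+m')\log(n'+m'))$ (dominated by the separator computation, the projections via the shortest-path map, the 1-dimensional spanner, and the LCA preprocessing), and then summing over the $O(\log n)$-depth recursion using that the sites partition and the polygon complexity grows by at most a constant per split. The only minor imprecision is your appeal to Lemma~\ref{lem:1_dim_recursion}, which is stated for a different recurrence shape; the paper simply bounds the per-level work by $O((n+m)\log(n+m))$ directly and multiplies by the $O(\log n)$ depth, which you effectively do as well.
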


\section{Improving the spanning ratio in a simple polygon}\label{sec:simple_polygon_eps}

\begin{figure}
    \centering
    \includegraphics[page=1]{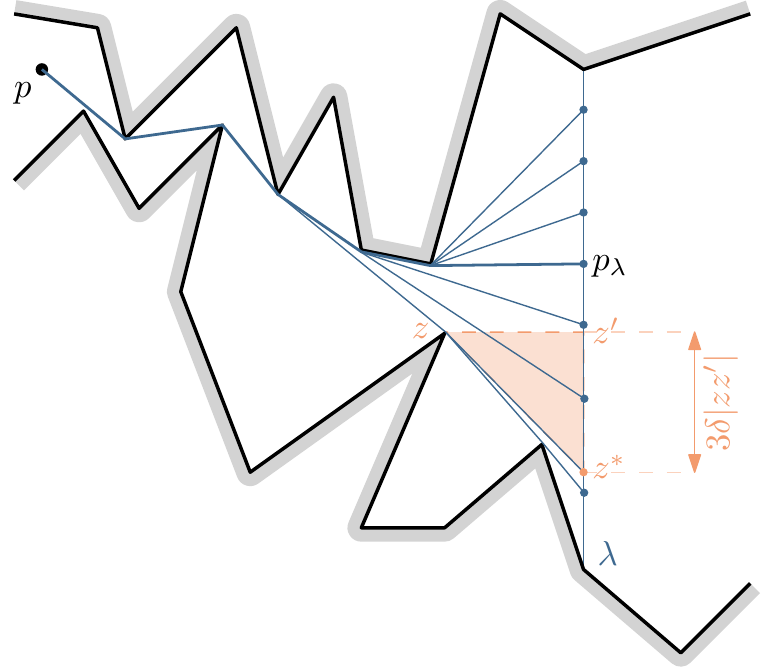}
    \caption{The site $p$ and the additional points $p\ilambda$ it generates for $i \in \{-3,\dots,3\}$.}
    \label{fig:additional_sites}
\end{figure}

In this section, we discuss how to improve the spanning ratio of the low complexity spanner in a simple polygon from $2\sqrt{2} k$ to $2k + \varepsilon$, for any constant $\eps \in (0,2k)$. Observe that this is indeed improves the spanning ratio when $\eps \in (0, (2\sqrt{2}-2)k)$. To apply the refinement of Lemma~\ref{lem:refinement} in combination with our low complexity spanner we require some additional properties on the collection of points added in the refinement. If we apply the lemma directly, and thus add $O(1/\delta^2)$ additional points to the 1-dimensional space for each site in $S$, where $\delta = \eps/(2k)$, then the recursion in the 1-dimensional spanner no longer corresponds to a horizontal split in the polygon. In other words, the proof of Lemma~\ref{lem:group_properties} no longer holds. To fix this issue we make three changes to this naive application. First, we use the paths as defined in Definition~\ref{def:pi_lambda} instead of shortest paths as edges in our spanner. This allows us to construct a low complexity spanner on any tree, instead of requiring the tree to be the shortest path tree of $\lambda$. Second, we do not choose the additional points on $\lambda$ uniformly, but in such a way that the angle of the final segment on the path with $\lambda$ is in a fixed set of angles (of size dependent on $\delta$). Third, instead of constructing a $2k$-spanner on a single 1-dimensional space, we consider a constant number (dependent on $\delta$) of subsets of points, and construct a spanner on each set separately. Next, we elaborate on the second and third change.

\subparagraph{Selecting additional points.}
For a site $p \in S$, consider the closest point $p_\lambda$ on $\lambda$ as before. Let $x$ be a point on $\lambda$. As $P$ is a simple polygon, the absolute value of the angle of the final segment on the path $\pi(p,x)$ (with respect to the $x$-axis) strictly increases as $x$ moves away from $p_\lambda$. This implies that for any $\alpha \in (-\pi,\pi)$ there is at most one point $p_\alpha$ on $\lambda$ such that the final segment of $\pi(p, p_\alpha)$ has angle $\alpha$. Note that for $\alpha = 0$ the point $p_\alpha$ is equal to $p_\lambda$ if $p_\lambda$ is not one of the endpoints of $\lambda$. We use this property to select additional points corresponding to the same angles for each site.

As in Lemma~\ref{lem:refinement}, we select $O(n/\delta^2)$ points on $\lambda$ in total, where $\delta = \eps/(2k)$. Instead of adding these points directly to $S_\lambda$, we create $O(1/\delta^2)$ sets $S\ilambda$ of size $O(n)$. The new points are again weighted by the distance to their original site. We define  $S_\lambda^{(0)} := \{p_\lambda : p \in S\}$. Next, we define the other sets~$S\ilambda$.

For ease of description we do not consider the angle of the final
segment in radians, but measure the angle differently. Let~$z$ be a
vertex of $P$ and let~$z'$ be the horizontal projection of~$z$ on
the line through $\lambda$ (observe that~$z'$ might not be on
$\lambda$). Furthermore, let $z^*$ be a point on $\lambda$ visible
from~$z$. See Figure~\ref{fig:additional_sites}. We then describe
the angle of the segment $zz^*$ by the ratio between $|zz'|$ and
$|z'z^*|$. Let $p\ilambda$ be the point on $\lambda$
for which the last segment on $\pi(p, p\ilambda)$ has ratio
$1/(\delta|i|)$
and lies above $p_\lambda$ if $i > 0$ or below $p_\lambda$ if $i < 0$, if it exists. The set $S\ilambda$ is then defined as $S\ilambda := \{ p\ilambda : p \in S\}$. We will only consider the sets $S\ilambda$ for $i \in I$, where $I = \{-\lceil 3/\delta^2 \rceil, -\lceil 3/\delta^2 \rceil + 1, \dots, \lceil 3/\delta^2 \rceil -1, \lceil 3/\delta^2 \rceil \}$. Let $S_\lambda$ be the union of these sets~$S\ilambda$.

To compute the sets $S\ilambda$ we use the approach from
Lemma~\ref{lem:find_projections} in both $P_\ell$ and $P_r$, which
computes the set $S_\lambda^{(0)}$. For each set $S\ilambda$, instead
of using the horizontal decomposition, we use a decomposition that
consists of trapezoids whose sides are parallel to the angle of the
last segment of the paths in the set $S\ilambda$. Such a decomposition can for example be computed simply by rotating $P$ and $S$ accordingly, and then computing a horizontal decomposition. We then use the coloring of Lemma~\ref{lem:find_projections} to find the site $p\ilambda$ and its distance to $p$ for each $p \in S$. These additional points and the trees that they form can thus be computed in $O(1/\delta^2 \cdot (m + n \log m))$ time.

\subparagraph{Constructing a spanner.} 
To obtain a $(2k + \eps)$-spanner in $P$, we need that for two sites $p \in S_\ell$ and $q \in S_r$ there are points $p_\lambda^{(i)}, q_\lambda^{(j)} \in S_\lambda$ such that
$d_{\G_\lambda}(p_\lambda^{(i)}, q_\lambda^{(j)}) \leq 2k\cdot d_w(p_\lambda^{(i)}, q_\lambda^{(j)})$ and that $d_w(p_\lambda^{(i)}, q_\lambda^{(j)}) \leq  (1+\delta) \cdot d(p,q)$.
We do not require these properties for all pairs of sites in $S$, but only for pairs that lie on opposite sides of $\lambda$. This means that we do not need to build a spanner on the entire set $S_\lambda$. Let $S_{\lambda, \ell}^{(i)}$ and $S_{\lambda, r}^{(i)}$ denote the subsets of $S\ilambda$ that contain all points generated by sites in $S_\ell$ and $S_r$ respectively. We build a 1-dimensional $2k$-spanner $\G_{i,j}$ for each set $S_{i,j} := S_{\lambda, \ell}^{(i)} \cup S_{\lambda, r}^{(j)}$, $i,j \in I$. We thus construct $O(1/\delta^4)$ spanners on sets of size $O(n)$. The edges in $\G_\lambda$ are then simply the union of the edges in all spanners~$\G_{i,j}$. 

To construct a spanner $\G_{i,j}$, we apply our algorithm to construct a low complexity 1-dimensional spanner on a tree $T_{i,j}$ instead of $\mathit{SPT}_\lambda$. The tree $T_{i,j}$ consists of $\lambda$ together with the paths from each point $p\ilambda \in S_{i,j}$ (or $p_\lambda^{(j)}$) to its original site $p$. The following lemma implies that this indeed results in a tree.

\begin{figure}
    \centering
    \includegraphics[page=4]{}
    \caption{For each site the path to its projection and the
      (valid) additional points for $i \in \{-2, -1, 0, 1, 2\}$ are
      shown. The fat grey tree is the concatenation of the $i = -1$
      paths in $P_\ell$, the $i = 2$ paths in $P_r$, and $\lambda$.}
    \label{fig:tree_selection}
\end{figure}

\begin{lemma}
    For two sites $p,q \in S$ the paths $\pi(p, p\ilambda)$ and $\pi(q, q\ilambda)$ do not properly intersect.
\end{lemma}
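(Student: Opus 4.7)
The plan is to split into two cases based on which side of $\lambda$ the sites $p$ and $q$ lie on.

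First I would dispose of the easy case: $p$ and $q$ lie on opposite sides of $\lambda$. Since $P$ is a simple polygon and $\lambda$ is a chord, any shortest path in $P$ between two points of $\overline{P_\ell}$ stays inside $\overline{P_\ell}$. In particular, for $p \in P_\ell$ we have $\pi(p,p\ilambda) \subseteq \overline{P_\ell}$, and symmetrically $\pi(q,q\ilambda) \subseteq \overline{P_r}$ when $q \in P_r$. The two closed subpolygons meet only along $\lambda$, and each of the two paths meets $\lambda$ only at its own endpoint on $\lambda$ (the final segment has slope $1/(\delta|i|)$, which is not parallel to $\lambda$). Hence the paths can share at most the single point $p\ilambda = q\ilambda$, which is not a proper intersection.

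The main case is $p, q \in P_\ell$ (the case $p, q \in P_r$ is symmetric). Here my approach is to show that the whole collection of paths $\{\pi(p, p\ilambda) : p \in P_\ell \text{ and } p\ilambda \text{ exists}\}$ sits inside a single tree embedded in $\overline{P_\ell}$ and rooted at $\lambda$. Once this is in place, any two paths in the collection are subpaths of a common plane tree and can only meet along a shared suffix, never crossing transversally. To construct the tree I would re-run the coloring argument from the proof of Lemma~\ref{lem:find_projections}, but on the trapezoidal decomposition whose parallel sides have the angle $\alpha$ corresponding to ratio $1/(\delta|i|)$, rather than on the horizontal decomposition. Each trapezoid is then labelled blue/orange/green/purple according to how it meets $\lambda$ and the two endpoints of $\lambda$, and within each trapezoid the path from any interior point to its angle-$\alpha$ projection on $\lambda$ is determined locally; the main technical check is that these local rules agree along shared trapezoid boundaries, so that the union of all the paths is a tree.

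The main obstacle I foresee is the loss of the right-angle relation between $\lambda$ and the decomposition that underlies the analysis of Lemma~\ref{lem:find_projections}: there $\lambda$ is vertical and the decomposition sides are horizontal, so the apex-to-$\lambda$ segment is horizontal and automatically lies in $P$. For $i \neq 0$ the decomposition sides are at angle $\alpha$ while $\lambda$ remains vertical, so the apex-to-$\lambda$ segment is slanted, and I would need to verify carefully that such a segment really does lie in $P$ and realises $\pi(p, p\ilambda)$. A minor side issue is that for large $|i|$ the angle-$\alpha$ ray from an apex may miss $\lambda$ inside $P$; in that case the corresponding site simply has no $p\ilambda$ and is irrelevant to the lemma.
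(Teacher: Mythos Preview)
Your plan would likely work, but it is far more elaborate than necessary. The paper disposes of the lemma in four lines with no case split on the side of $\lambda$ and no decomposition at all: take any common point $x$ of $\pi(p,p\ilambda)$ and $\pi(q,q\ilambda)$. By subpath optimality the subpath $\pi(x,p\ilambda)$ is itself a shortest path from $x$ to a point of $\lambda$ whose final segment has the prescribed ratio, so by the uniqueness of the angle-projection (established in the paragraph just before the lemma, and valid for \emph{any} point of $P$, not only sites) we get $x\ilambda = p\ilambda$. The same reasoning gives $x\ilambda = q\ilambda$, hence $p\ilambda = q\ilambda$; uniqueness of shortest paths in a simple polygon then forces the two paths to coincide from $x$ onward, so the intersection is never proper. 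The whole argument hinges on one observation you did not exploit: $x\ilambda$ is well defined and unique for an arbitrary point $x$, not just for sites in $S$.

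Your route---build the tree first via the slanted trapezoidal decomposition and then read off non-crossing---inverts the paper's logic: the paper proves non-crossing here and only \emph{afterwards} concludes that $T_{i,j}$ is a tree. Building the tree directly via the coloring is precisely what the paper does on the \emph{algorithmic} side (it states explicitly that the coloring of Lemma~\ref{lem:find_projections} carries over to the slanted decomposition), so the obstacle you flag about losing the right angle is real but surmountable. Carrying that out rigorously, however, is substantially more work than the one-paragraph uniqueness argument above, and your opposite-sides case---while correct---becomes superfluous once you have the uniform argument.
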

\begin{proof}
    For contradiction, let $x$ be the first point on $\pi(p, p\ilambda) \cap \pi(q, q\ilambda)$ as seen from $p$. We know that $x$ defines a unique point $x\ilambda$ on $\lambda$. As $x$ is on $\pi(p, p\ilambda)$ the subpath from $x$ to $p\ilambda$ is also a shortest path. It follows that $x\ilambda = p\ilambda$. Similarly, $x\ilambda = q\ilambda$. We conclude that $p\ilambda = q\ilambda$, and thus the paths from $p$ and $q$ overlap from $x$ on.
\end{proof}

As in the polygonal domain setting, it follows that the complexity of the edges of $\G_{i,j}$ in $P$ is $O(m\sqrt{n} + n\log^2n)$. The complexity of all edges in $\G_\lambda$ in $P$ is thus $O(1/\delta^4 \cdot (m\sqrt{n} + n\log^2n))$.

What remains is to prove this indeed results in the desired spanning ratio for sites on opposing sides of $\lambda$. Without loss of generality, let $p \in S_\ell$ and $q \in S_r$. We want to show that there are points $p_\lambda^{(i)}, q_\lambda^{(j)} \in S_\lambda$ such that
$d_w(p_\lambda^{(i)}, q_\lambda^{(j)}) \leq  (1+\delta) \cdot d(p,q)$. Let $r$ be the intersection point of $\pi(p,q)$ and $\lambda$. The following lemma states that there is a short path in $\G_\lambda$ between $p$ and $r$.

\begin{figure}
    \centering
    \includegraphics[page=3]{2+eps-spanner-blue.pdf}
    \caption{Notation used in Lemma~\ref{lem:1dim_path}.}
    \label{fig:notation_eps}
\end{figure}

\begin{lemma}\label{lem:1dim_path}
    Let $r \in \lambda$ and $p \in S$. Then there is a point $p\ilambda \in S_\lambda$ such that $d(p, p\ilambda) + d(p\ilambda, r) \leq (1+\delta) d(p,r)$.
\end{lemma}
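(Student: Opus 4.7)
Let $y$ denote the last bend on the shortest path $\pi(p,r)$ (taking $y=p$ if $\pi(p,r)$ is a single segment), let $y'$ be the horizontal projection of $y$ onto the line through $\lambda$, and set $\rho_r=|yy'|/|y'r|$, so that the final segment $yr$ has the ratio $\rho_r$ defined in the text above. The plan is to choose $i$ so that $1/(\delta|i|)$ is as close to $\rho_r$ as the discretisation allows, specifically $|i|=\lfloor 1/(\delta\rho_r)\rfloor$ clipped to $[1,\lceil 3/\delta^2\rceil]$, with the sign of $i$ matching the side of $p_\lambda$ on which $r$ lies.

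In the main regime $\delta/3\le\rho_r\le 1/\delta$, the chosen $|i|$ lies in $[1,\lceil 3/\delta^2\rceil]\subseteq I$. By the strict monotonicity (and continuity) of the last-segment angle as a point moves along $\lambda$, together with $1/(\delta|i|)\ge\rho_r$, the point $p\ilambda$ lies strictly between $p_\lambda$ and $r$, in particular inside the same bend-interval as $r$, so $\pi(p,p\ilambda)$ also bends last at $y$. Writing $A=|yp\ilambda|$, $B=|yr|$, $C=|p\ilambda r|$ and using the right-triangle identities $A^2=|yy'|^2+|y'p\ilambda|^2$, $B^2=|yy'|^2+|y'r|^2$, the bound $|y'p\ilambda|=\delta|i|\cdot|yy'|\le|y'r|$ gives $A\le B$, whence
\[
d(p,p\ilambda)+|p\ilambda r|-d(p,r)\;=\;A+C-B\;\le\;C.
\]
The choice of $i$ yields $C=|yy'|\cdot|1/\rho_r-\delta|i||<\delta\,|yy'|$, and since $d(p,r)\ge|yr|\ge|yy'|$ we conclude $C<\delta\,d(p,r)$, which delivers the desired $(1+\delta)\,d(p,r)$ bound.

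Two boundary cases remain: $\rho_r>1/\delta$ forces $|i|=1$, and $\rho_r<\delta/3$ forces $|i|=\lceil 3/\delta^2\rceil$. In both extremes, $|y'p\ilambda|$ and $|y'r|$ sit on scales very different from $|yy'|$; a second-order expansion of $\sqrt{|yy'|^2+t^2}$ at $t=|y'p\ilambda|$ and $t=|y'r|$ shows that the detour $A+C-B$ is in fact at most a small constant times $\delta\,d(p,r)$, comfortably inside the allowed budget once the discretisation constants $3/\delta^2$ and $\delta/3$ are chosen as in the text. The main obstacle will be rigorously justifying that $p\ilambda$ and $r$ share the same last bend $y$: this rests on the monotonicity of the last-segment angle along $\lambda$ and must be re-examined whenever a bend-vertex transition could sit strictly between $p\ilambda$ and $r$. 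Should such a transition occur, we switch to the adjacent index $\lceil 1/(\delta\rho_r)\rceil$ on the other side, or appeal to the 1-Lipschitzness of $d(p,\cdot)$ along $\lambda$, to recover the same estimate.
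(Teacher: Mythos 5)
Your high-level plan is the same as the paper's: pick the generated point whose discretized angle sits just below that of $r$ (equivalently, the point of $S_\lambda$ generated by $p$ that is nearest to $r$ on the $p_\lambda$-side of $r$), and charge the detour to the right-triangle geometry of the last segment. The genuine gap is the step you yourself flag at the end: your entire quantitative argument --- both $A=\sqrt{|yy'|^2+|y'p\ilambda|^2}$ and $C=|yy'|\cdot|1/\rho_r-\delta|i||<\delta|yy'|$ --- presupposes that $\pi(p,p\ilambda)$ is a single segment emanating from the same last bend $y$ as $\pi(p,r)$. This cannot be guaranteed: the discretization is uniform in the tangent of the last-segment angle, while a bend-interval of $\lambda$ can span an arbitrarily small range of angles, so a transition vertex can sit between $p\ilambda$ and $r$; when it does, the position of $p\ilambda$ on $\lambda$ is governed by a different vertex $y_1$ with a different $|y_1y_1'|$, and both formulas fail. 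The proposed repairs do not close this: ``switching to the adjacent index'' moves the point to the far side of $r$ (where it may not exist), and the Lipschitz fallback $d(p,p\ilambda)\le d(p,r)+|p\ilambda r|$ yields at best $(1+2\delta)d(p,r)$, not the claimed $(1+\delta)d(p,r)$. The paper sidesteps the issue by anchoring the estimate not at the last bend but at the last \emph{common} vertex $z$ of $\pi(p,r)$ and $\pi(p,p^*)$, which lies on both paths by construction, and by using only the containment of $\pi(z,p^*)$ in the right triangle $(z,z',p^*)$, i.e.\ $d(z,p^*)\le|zz'|+|z'p^*|$, together with $d(z,r)\ge|zr|\ge|zz'|$; no claim that either path is straight after $z$ is needed.

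Two smaller issues. First, clipping $|i|$ to $[1,\lceil 3/\delta^2\rceil]$ excludes $i=0$: when $\rho_r>1/\delta$ the correct choice is $p_\lambda=p_\lambda^{(0)}$ itself (which is in $S_\lambda$), whereas your $|i|=1$ point lies beyond $r$ or does not exist. Second, in the regime $\rho_r<\delta/3$ the portion of $\lambda$ between the outermost generated point and $r$ can be long and cross many bend transitions, so this is precisely where the shared-bend assumption is least tenable; your second-order expansion computes the right quantity only under that assumption, whereas the paper's corresponding case derives $|zz'|\le\frac{\delta}{2}\left(|z'r|-|zz'|\right)$ from the index bound and again needs only the triangle containment.
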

\begin{proof}
    Without loss of generality, we assume that $r$ is below $p_\lambda$ on $\lambda$. Let $p^* \in S_\lambda$ be the point generated by~$p$ on $\pi(r,p_\lambda)$ that is closest to~$r$. Furthermore, let $z$ be the last vertex on $\pi(p,r) \cap \pi(p,p^*)$, and let $z'$ be the horizontal projection of~$z$ on the line through~$\lambda$. See Figure~\ref{fig:notation_eps} for an illustration of the notation. Observe that when $z$ is not the last vertex on $\pi(p,p^*)$ then the last vertex on this path must be above $\pi(p,r)$. We consider the following two cases: either $d(r,p^*) > \delta |zz'|$, or $d(r,p^*) \leq \delta |zz'|$.
    
    If $d(r,p^*) > \delta |zz'|$, we claim that $p^* = p_\lambda^{(i')}$ where $i'= -\lceil 3/\delta^2 \rceil$, i.e. $p^*$ is the outermost point generated by $p$. Consider the point $z^*$ at distance $|zz'| \cdot \delta|i'|$ below $z'$. If $p^* = z^*$, then $d(r,p^*) > \delta |zz'|$ implies that the point $p_\lambda^{(i'-1)}$ should lie between $p^*$ and $r$, which is a contradiction. If $p^* \neq z^*$, then $p^*$ must lie below $z^*$, as it must be generated by a vertex below the segment $zz^*$. It follows that  $d(r,z^*) > d(r,p^*) > \delta |zz'|$. This again implies the contradiction that $p_\lambda^{(i'-1)}$ is between $p^*$ and $r$. We conclude that the claim holds. It follows that $|z'r| \geq |z'z^*| = |zz'| \cdot \delta \lceil 3/\delta^2 \rceil \geq |zz'| \cdot (1/\delta + 2/\delta) \geq |zz'| \cdot (1 + 2/\delta) $, as $\delta \in (0,1)$. Rewriting this equation gives $|zz'| \leq \delta/2 \cdot (|z'r| - |zz'|)$. It then follows that
    \begin{align*}
        d(p, p^*) + d(p^*,r) &= d(p,z) + d(z,p^*) + d(p^*,r)\\
        &\leq d(p,z) + |zz'| + |z'p^*| + d(p^*,r)\\
        &= d(p,z) + |zz'| + |z'r|\\
        &\leq d(p,z) + |zz'| + |z'z| + |zr|\\
        &= d(p,z) + 2|zz'| + |zr|\\
        &\leq d(p,z) + \delta \cdot (|z'r| - |zz'|) + |zr|\\
        &\leq d(p,z) + \delta \cdot (|zz'| + |zr| - |zz'|) + |zr|\\
        &\leq d(p,z) + (1 + \delta)|zr|\\
        &\leq (1 + \delta)d(p,r).
    \end{align*}
    
    If $d(r,p^*) \leq \delta |zz'|$, then
    \begin{equation*}
        d(p, p^*) + d(p^*, r) \leq d(p,z) + d(z,p^*) + \delta |zz'| \leq d(p,z) + d(z,r) + \delta d(z,r) \leq (1+\delta)d(p,r).
    \end{equation*}
    Here, we use that $d(z,p^*) \leq d(z,r)$, as $p^*$ is closer to $p_\lambda$ than $r$, and that $|zz'| \leq d(z,r)$.
\end{proof}

Lemma~\ref{lem:1dim_path} also implies that there is a point $q_\lambda^{(j)} \in S_\lambda$ such that $d(q, q_\lambda^{(j)}) + d(q_\lambda^{(j)}, r) \leq (1+\delta) d(q,r)$. It follows that $d_w(p_\lambda^{(i)}, q_\lambda^{(j)}) \leq d(p, p\ilambda) + d(p\ilambda, r) + d(q, q_\lambda^{(j)}) + d(q_\lambda^{(j)}, r) \leq (1+\delta)(d(p,r) + d(q,r)) = (1 + \delta)d(p,r)$. As $\G_{i,j}$ is a $2k$-spanner on $S_{i,j}$, this results in a spanning ratio of $2k(1+\delta) = 2k +\eps$ for $\G$ when choosing $\delta = \eps/(2k)$.

Combining this result with the spanner construction in Section~\ref{sub:construction_algo_SP}, we obtain the following theorem.

\begin{theorem}\label{thm:2k-eps-spanner-time}
Let $S$ be a set of $n$ point sites in a simple polygon $P$ with $m$ vertices, and let $k \geq 1$ be any integer constant. For any constant $\varepsilon \in (0,2k)$, we can build a relaxed geodesic $(2k + \varepsilon)$-spanner of size~$O(c_{\varepsilon, k}n\log^2n)$ and complexity $O(c_{\varepsilon, k} (m n^{1/k} + n\log^2 n))$ in $O(c_{\varepsilon, k}n\log^2n + m\log n + K)$ time, where $c_{\varepsilon, k}$ is a constant depending only on $\varepsilon$ and $k$, and~$K$ is the output complexity. 
\end{theorem}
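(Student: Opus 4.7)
The plan is to slot the refined $(2k+\varepsilon)$ 1-dimensional construction of Section~\ref{sec:simple_polygon_eps} into the recursive algorithm of Section~\ref{sub:construction_algo_SP}. Concretely, set $\delta=\varepsilon/(2k)$. For each vertical chord $\lambda$ produced by the recursion, I would first compute all the projection sets $S_\lambda^{(i)}$ and their weights for $i\in I$ by running the colored decomposition of Lemma~\ref{lem:find_projections} once per angle on a suitably rotated copy of $P$, giving $O(c_{\varepsilon,k}(m+n\log m))$ time per level. Then, for each pair $(i,j)\in I^2$, I would construct the tree $T_{i,j}$ made from $\lambda$ together with the paths $\pi(p,p_\lambda^{(i)})$ for $p\in S_\ell$ and $\pi(q,q_\lambda^{(j)})$ for $q\in S_r$, and invoke the bottom-up tree-based construction of Lemma~\ref{lem:1-dim-spanner-bottom-up} on $T_{i,j}$ to build a 1-dimensional $2k$-spanner $\G_{i,j}$ on $S_{i,j}$ in $O(n\log n+m)$ time. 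For each output edge $(p_\lambda^{(i)},q_\lambda^{(j)})$ of any $\G_{i,j}$ I add the relaxed edge $\pi_\lambda(p,q)$ to $\G$, and finally recurse on $P_\ell$ and $P_r$.

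Correctness of the spanning ratio follows from Lemma~\ref{lem:1dim_path} applied on both sides of $\lambda$: for any $p\in S_\ell$ and $q\in S_r$, the intersection point $r=\pi(p,q)\cap\lambda$ yields indices $i,j$ with $d_w(p_\lambda^{(i)},q_\lambda^{(j)})\le(1+\delta)d(p,q)$, and since $\G_{i,j}$ is a $2k$-spanner this gives $d_\G(p,q)\le 2k(1+\delta)d(p,q)=(2k+\varepsilon)d(p,q)$; pairs on the same side of $\lambda$ are handled by induction. The $O(c_{\varepsilon,k})$ spanners per level each contribute $O(n\log n)$ edges, so the total size is $O(c_{\varepsilon,k}n\log^2 n)$. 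For complexity, the non-intersection property of the paths $\pi(p,p_\lambda^{(i)})$ proven in Section~\ref{sec:simple_polygon_eps} makes each $T_{i,j}$ a genuine tree, which lets the grouping analysis of Lemma~\ref{lem:group_properties} go through on $T_{i,j}$ just as it did on $\mathit{SPT}_\lambda$; each $\G_{i,j}$ then contributes complexity $O(mn^{1/k}+n\log n)$ within the current subpolygon. Applying Lemma~\ref{lem:1_dim_recursion} with a $c_{\varepsilon,k}$ factor pulled out of the per-level term, and the additive split $m_1+m_2\le m+O(1)$ across the chord $\lambda$, resolves the total spanner complexity to $O(c_{\varepsilon,k}(mn^{1/k}+n\log^2 n))$.

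Summing the per-level running time $O(c_{\varepsilon,k}(n\log n+n\log m+m))$ over $O(\log n)$ recursion levels, adding the one-time $O(m)$ preprocessing of $P$ for shortest-path queries and the horizontal/vertical decompositions, and $O(K)$ to write the output, yields $O(c_{\varepsilon,k}(n\log^2 n+n\log n\log m+m\log n)+K)$. This collapses to $O(c_{\varepsilon,k}n\log^2 n+m\log n+K)$ by the same $m\le n^2$ versus $m>n^2$ case split already used in the proof of Theorem~\ref{thm:2sqrt2k-spanner-time}. The main obstacle I anticipate is justifying that applying Lemma~\ref{lem:1-dim-spanner-bottom-up} to the non-standard tree $T_{i,j}$ rather than to $\mathit{SPT}_\lambda$ still produces groups satisfying the properties of Lemma~\ref{lem:group_properties}: this requires defining an in-order traversal of $T_{i,j}$ that respects the left/right structure of $\lambda$ (mirroring the treatment used for $\mathit{SPT}_\lambda$ in Section~\ref{sub:12-spanner}), and then establishing the analogue of Lemma~\ref{lem:occurence_spt} for $T_{i,j}$ so that each vertex of $P$ appears as a non-root node of at most $O(1)$ subtrees $\T_i$, which is what makes the per-level complexity bound for each individual $\G_{i,j}$ hold.
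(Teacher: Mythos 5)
Your proposal follows essentially the same route as the paper: the same angle-based selection of $O(1/\delta^2)$ projection sets computed via rotated copies of the decomposition from Lemma~\ref{lem:find_projections}, the same per-pair trees $T_{i,j}$ with the non-crossing property replacing $\mathit{SPT}_\lambda$, the same relaxed edges $\pi_\lambda(p,q)$, and the same use of Lemma~\ref{lem:1dim_path} for the $(1+\delta)$ detour bound, with $\delta=\varepsilon/(2k)$. The one point you flag as an anticipated obstacle --- transferring the in-order traversal and the analogue of Lemma~\ref{lem:occurence_spt} to $T_{i,j}$ --- is exactly the step the paper also resolves by appeal to the polygonal-domain treatment of Section~\ref{sub:12-spanner}, so the proposal is correct and matches the paper's proof.
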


\section{Lower bounds for complexity}\label{sec:lower_bounds}

In this section, we consider lower bounds on the complexity of spanners. We first describe a simple $\Omega(nm)$ lower bound construction for a (relaxed) geodesic $(3-\eps)$-spanner, and then prove a (slightly worse) $\Omega(mn^{1/(t-1)})$ lower bound construction for a $(t-\eps)$-spanner.

\subsection{Lower bound for $(3-\varepsilon)$-spanners}\label{sub:lower_bound_3-eps}

\begin{theorem}
For any constant $\varepsilon \in (0,1)$, there exists a set of $n$ point sites in a simple polygon $P$ with $m = \Omega(n)$ vertices for which any (relaxed) geodesic $(3-\varepsilon)$-spanner has complexity $\Omega(mn)$. 
\end{theorem}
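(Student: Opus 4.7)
The plan is to exhibit a simple polygon $P$ with $\Theta(m)$ vertices and $n$ sites in which every $(3-\varepsilon)$-spanner must contain a dedicated edge from one designated site $c$ to every other site, and in which every such edge, viewed as a path in $P$, has complexity $\Omega(m)$. This will immediately yield total spanner complexity $\Omega(nm)$.

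For the construction, I would take $P$ to be the union of a thin zigzag corridor $Z$ of length $L_1$, whose boundary carries $\Omega(m)$ reflex vertices, attached at one end to a small ``rake'' region: a narrow base $b$ with $n-1$ thin, nearly parallel arms of length $L_2$. Place $c$ at the free end of $Z$ and a site $p_i$ at the tip of the $i$-th arm. A short case analysis of shortest paths gives $d(c,p_i) = L_1 + L_2$ and $d(p_i, p_j) = 2L_2$ for $i \neq j$, since any geodesic between two tips must retract through $b$. I would then set $L_2 := L_1(2-\varepsilon)/\varepsilon + 1$, which guarantees $L_1 + 3L_2 > (3-\varepsilon)(L_1+L_2) = (3-\varepsilon)d(c,p_i)$.

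With these parameters fixed, I would argue that every $(3-\varepsilon)$-spanner $\G$ contains the edge $(c,p_i)$ for every $i$. Otherwise, the shortest walk in $\G$ from $c$ to $p_i$ passes through at least one intermediate site $p_j$, and since each spanner-edge has weight at least the geodesic distance between its endpoints, the triangle inequality gives $d_\G(c,p_i) \geq d(c,p_j) + d(p_j,p_i) = L_1 + 3L_2 > (3-\varepsilon)d(c,p_i)$, contradicting the spanning ratio; walks through more intermediate sites are ruled out analogously, as each additional hop contributes a further term of at least $2L_2$. For a geodesic spanner the edge $(c,p_i)$ is $\pi(c,p_i)$, which traverses $Z$ and therefore has complexity $\Omega(m)$. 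For a relaxed geodesic spanner the edge can be any path inside $P$ from $c$ to $p_i$, but $Z$ is the unique thin connection between the $c$-side and the rake-side of $P$, so any such path still makes $\Omega(m)$ bends. Combining, $\G$ has at least $n-1$ edges of complexity $\Omega(m)$, for a total of $\Omega(nm)$.

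The main obstacle I expect lies in the geometric realisation of the rake region: one has to verify that $P$ is simple and that no diagonal shortcut through $b$ collapses some $d(p_i,p_j)$ to substantially less than $2L_2$, which would render the forcing argument vacuous. Making $b$ a narrow slit and keeping the arms thin and almost parallel, so that their extensions only meet outside $P$, should do the job; the bookkeeping of angles and widths is routine but is the step that needs the most care, in particular to ensure that the slack produced by shortcuts is smaller than the $\varepsilon$-margin built into the choice of $L_2$.
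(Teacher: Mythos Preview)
Your argument is correct, and it takes a genuinely different (and in some ways cleaner) route than the paper. The paper places the sites symmetrically: $n/2$ sites in spikes on the left and $n/2$ on the right of a central $\Theta(m)$-complexity passage, with all pairwise distances close to $2\ell$. Their counting then goes through a short case analysis: letting $k$ be the number of right sites with a direct crossing edge to some left site, either $k<|S_r|$ (so some $q$ on the right has no direct crossing edge, forcing every $p\in S_\ell$ to have its own crossing edge) or $k=|S_r|=\Theta(n)$; in both cases there are $\Theta(n)$ crossing edges of complexity $\Theta(m)$. Your asymmetric $1$-versus-$(n-1)$ configuration avoids this bipartite case split entirely: with only $c$ on the corridor side, the sole way to reach $p_i$ without the direct edge is via some other $p_j$, and your choice of $L_2$ makes that detour too long, so every $(c,p_i)$ edge is forced. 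The price you pay is the explicit tuning of two length scales $L_1,L_2$ and the more delicate rake realisation you flag, whereas the paper's uniform construction lets them simply take $h\to 0$ to make all distances approach $2\ell$. Both handle the relaxed case the same way, by observing that any path across the corridor (not just the geodesic) has $\Omega(m)$ bends.
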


\begin{proof}
Consider the construction given in Figure~\ref{fig:lowerbound_3spanner}. We assume that $m = \Omega(n)$. We split the sites into two sets $S_\ell$ and $S_r$ equally. The sites lie in long `spikes' of length $\ell$, either on the left ($S_\ell$) or right ($S_r$) of a central passage of complexity $\Theta(m)$. We show that this construction gives a complexity of $\Omega(mn)$ for any $(3-\varepsilon)$-spanner.

When $h$ gets close to 0, the distance between any two sites $p,q$ approaches $2\ell$. To get a $(3-\varepsilon)$-spanner, we can thus have at most one intermediate site on the path from $p$ to $q$. We assume that all possible, constant complexity, edges between vertices on the same side of the construction are present in the spanner. To make sure sites on the left also have (short) paths to sites on the right, we have to add some additional edges that go through the central passage of the polygon, which forces paths to have complexity $\Theta(m)$. We will show that we need $\Theta(n)$ of these edges, each of complexity $\Theta(m)$, to achieve a $(3-\varepsilon)$-spanner, thus proving the $\Omega(nm)$ lower bound.
  
  Let $q \in S_r$. For each $p \in S_\ell$, we need a path with at most one intermediate site to $q$. There are two ways to achieve this: we can go through an intermediate site on the left, or on the right. In the first case, we add an edge from a site $p'\in S_\ell$ to $q$. In the second case, we need to add an edge from each $p \in S_\ell$ to any site $q'\in S_r$. In the first case we thus add only one edge, while in the second case we add $\Theta(n)$ edges (that go through the central passage). Let $k \geq 1$ be the number of sites in $S_r$ that have a direct edge to some site in $S_\ell$. If $k < |S_r|$, then there is some site $q \in S_r$ for which we are in case two, and we thus have $\Theta(n)$ edges of complexity $\Theta(m)$. If $k = |S_r|$, then there is a direct edge to each of the $\Theta(n)$ sites in $S_r$, and we therefore also end up with a complexity of $\Omega(nm)$.
  \end{proof}

\subsection{General lower bounds}\label{sub:general_lower_bound}
\lowerBound*
\begin{proof}
Consider the construction of the polygon $P$ shown in Figure~\ref{fig:lower_bound_general}. Let $p_1,...,p_n$ be the sites from left to right. Thus, the complexity of any path from the $i$-th site $p_i$ to the $j$-th site $p_j$ is at least $|i-j|$. When $h$ is close to $0$, the distance between any two sites approaches $2\ell$. To achieve a spanning ratio of $(t-\varepsilon)$, the path in the spanner from $p_i$ to $p_j$ can visit at most $t -2$ other vertices. In other words, we can go from $p_i$ to $p_j$ in at most $t-1$ hops. This is also called the hop-diameter of the spanner.

As the spanning ratio is determined only by the number of hops on the path, we can model the spanner in a much simpler metric space $\vartheta_n$. This is a 1-dimensional Euclidean space with $n$ points $v_1,...,v_n$ that lie on the $x$-axis at coordinates $1,2,...,n$. The edge $(v_i,v_j)$ thus has length (or weight) $|i-j|$. Any spanning subgraph of $\vartheta_n$ of hop-diameter $h$ and total weight $w$ (the weight of a graph is the sum of the weights of its edges) is in one-to-one correspondence to an $(h + 1 -\varepsilon)$-spanner of $P$ of complexity $\Theta(w)$. Denitz, Elkin, and Solomon~\cite{shallow_low_light_trees} prove the following on the relation between the \textit{hop-radius} and weight of any spanning subgraph of $\vartheta_n$. The hop-radius $h(G,r)$ of a graph $G$ with respect to a root $r$ is defined as maximum number of hops that is needed to reach any vertex in the graph from the root. The hop-radius $h(G)$ of $G$ is then defined as $\min_{r\in V} h(G,r)$. Note that the hop-diameter is an upper bound on $h(G)$. 
\begin{lemma}[Dinitz \etal~\cite{shallow_low_light_trees}] \label{lem:lower_bound_vartheta}
For any sufficiently large integer $n$ and positive integer $h < \log n$, any spanning subgraph of $\vartheta_n$ with hop-radius at most $h$ has weight at least $\Omega(h \cdot n^{1+1/h})$.
\end{lemma}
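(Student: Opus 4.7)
I would prove this lower bound by strong induction on $h$, following the classical shallow-light-tree template. Without loss of generality, we may replace the spanning subgraph by its shortest-hops tree rooted at an optimal vertex, which preserves the hop-radius and has weight no greater; so we argue about a tree $T$ of hop-radius $h$ on $\vartheta_n$. For the base case $h = 1$, $T$ is a star centered at some $r$, and its weight $\sum_{j \neq r} |j - r|$ is minimized (over $r$) at the median, giving $\Omega(n^2) = \Omega(1 \cdot n^{1+1/1})$, which matches the claim.

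For the inductive step, let $T$ have $k$ children $u_1, \ldots, u_k$ of the root $r$, with subtrees $T_1, \ldots, T_k$ of sizes $n_1, \ldots, n_k$ summing to $n - 1$. By a standard rearrangement argument on the line, I would assume each $T_i$ spans a contiguous interval of $\{1,\ldots,n\} \setminus \{r\}$; interleaved intervals can be swapped without increasing total weight. Each $T_i$ is itself a hop-radius-$(h-1)$ tree, and since the points in the $i$-th interval are at pairwise distances at least those of $\vartheta_{n_i}$, the inductive hypothesis yields $w(T_i) \geq c(h-1)\, n_i^{1+1/(h-1)}$.

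Two complementary lower bounds will then combine. First, by convexity of $x \mapsto x^{1+1/(h-1)}$ and $\sum_i n_i = n-1$, we have $\sum_i w(T_i) \geq c(h-1)\, n^{h/(h-1)} / k^{1/(h-1)}$. Second, because the contiguous subtree intervals collectively cover $[1,n]\setminus\{r\}$, the $i$-th child $u_i$ lies at distance at least $\sum_{j<i} n_j$ from $r$, and summing telescopes to $\Theta(nk)$ in the balanced case (with a symmetric contribution on each side when $r$ is interior). Adding both contributions, $w(T) \geq c(h-1)\, n^{h/(h-1)}/k^{1/(h-1)} + \Omega(nk)$. The minimum of this expression over $k$ is attained at $k^* = \Theta(n^{1/h})$, where both terms become $\Theta(h \cdot n^{1+1/h})$: the first term contributes the factor $h-1$ (giving an overall $\Theta(h)$) and the second contributes $n \cdot n^{1/h}$.

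The main obstacle I expect is handling the case in which the subtree sizes are highly unbalanced, where the convexity inequality is not tight and the root-edge bound can weaken depending on how the intervals sit relative to $r$. My plan for this is a case split: either the sizes are roughly balanced (so both bounds apply as stated) or some $T_{i^*}$ satisfies $n_{i^*} = \Omega(n)$, in which case applying the inductive hypothesis directly to $T_{i^*}$ yields $w(T) \geq c(h-1)\, n^{1+1/(h-1)}$, and this already exceeds $\Omega(h\, n^{1+1/h})$ since $h/(h-1) > (h+1)/h$ for $h \geq 2$. The hypothesis $h < \log n$ is used to guarantee that $n^{1/h}$ is bounded away from~$1$ so that the optimization in $k$ is non-degenerate and the induction carries through with a uniform constant.
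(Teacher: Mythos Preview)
The paper does not prove this lemma at all: it is stated with attribution to Dinitz, Elkin, and Solomon and used as a black box in the proof of Theorem~\ref{thm:general_lower_bound}. So there is no ``paper's own proof'' to compare against; everything in your proposal is additional to what the paper contains.

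That said, since you are sketching the Dinitz--Elkin--Solomon argument, a few words on the gaps. The overall template (pass to a BFS tree, induct on $h$, split the cost into root-edge cost plus recursive subtree cost, optimise over the branching factor) is indeed how the original proof proceeds. However, two steps in your sketch are not actually justified. First, the claim that one may assume each subtree $T_i$ occupies a contiguous interval ``by a standard rearrangement argument'' is the heart of the matter and is not obvious: swapping interleaved points changes both the intra-subtree distances and the root-edge lengths simultaneously, and it is not clear a priori that total weight cannot increase. Dinitz \etal\ handle this with a careful exchange argument that you would need to reproduce. Second, your unbalanced-case fallback (``some $n_{i^*}=\Omega(n)$, so apply induction directly'') does not close the induction uniformly over the whole range $h<\log n$: you need $c(h-1)\,(\Omega(n))^{1+1/(h-1)}\geq c\,h\,n^{1+1/h}$ with the \emph{same} constant $c$, and for $h$ close to $\log n$ the extra factor $n^{1/(h(h-1))}$ tends to $1$ while you still lose a constant from the $(h-1)/h$ and from the hidden constant in $\Omega(n)$. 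The actual proof avoids this by a more careful accounting rather than a crude balanced/unbalanced dichotomy.
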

The lemma implies that any $(t-\varepsilon)$-spanner of $P$, which has hop-diameter $t-1$, has complexity $\Omega((t-1) \cdot n^{1+1/(t-1)}) = \Omega(n^{1+ 1/(t-1)})$, for constant $t$.

\begin{figure}
    \centering
    \includegraphics{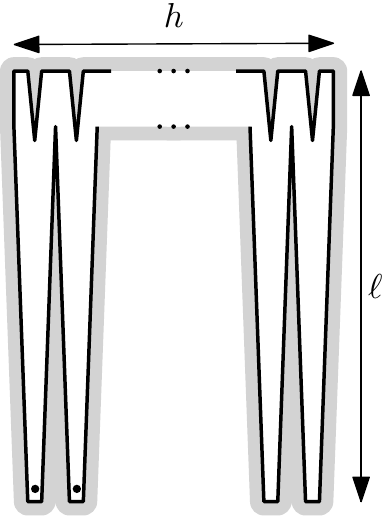}
    \caption{Construction of a simple polygon $P$ for which any $(t-\varepsilon)$-spanner has complexity $\Omega(n^{1 + 1/(t-1)})$. The shortest path from the $i$-th to the $j$-th site has complexity $|i-j|$.}
    \label{fig:lower_bound_general}
\end{figure}

To achieve a lower bound for $m> n$, we slightly adapt our polygon such that the path between two adjacent spikes has complexity $\Theta(m/n)$. This implies that the path between $p_i$ and $p_j$ has complexity $m/n \cdot |i-j|$ instead. For this adapted polygon $P'$, any spanning subgraph of $\vartheta_n$ of hop-diameter $h$ and total weight $w$ is in one-to-one correspondence to an $(h + 1 -\varepsilon)$-spanner of $P'$ of complexity $\Theta(m/n \cdot w)$. It follows from Lemma~\ref{lem:lower_bound_vartheta} that any $(t-\varepsilon)$-spanner of $P'$ has complexity $\Omega(m/n \cdot n^{1+ 1/(t-1)}) = \Omega(mn^{1/(t-1)})$.
\end{proof}
\begin{corollary}
For any integer constant $t \geq 2$, there exists a set of $n$ point sites in a simple polygon $P$ with $m = \Omega(n)$ vertices for which any (relaxed) geodesic $t$-spanner has complexity~$\Omega(mn^{1/t})$.
\end{corollary}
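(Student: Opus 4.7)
The plan is to derive this corollary as an almost immediate consequence of Theorem~\ref{thm:general_lower_bound}. The theorem already establishes that for any constant $\varepsilon \in (0,1)$ and any integer constant $t' \geq 2$, there is a construction forcing complexity $\Omega(mn^{1/(t'-1)})$ on any $(t'-\varepsilon)$-spanner. The corollary differs only in that we now want to rule out spanning ratio exactly $t$ rather than anything strictly less than $t$, while the exponent weakens correspondingly from $1/(t-1)$ to $1/t$.

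The key observation is that a $t$-spanner is, in particular, a $(t'-\varepsilon)$-spanner for $t' = t+1$ and any $\varepsilon \in (0,1]$, since $t \leq (t+1)-\varepsilon$. So I would first invoke Theorem~\ref{thm:general_lower_bound} with parameter $t' = t+1$ (which is still an integer constant $\geq 2$) and some fixed choice such as $\varepsilon = 1/2$. This yields a set of $n$ point sites in a simple polygon $P$ with $m = \Omega(n)$ vertices for which every $(t + 1/2)$-spanner has complexity $\Omega(mn^{1/((t+1)-1)}) = \Omega(mn^{1/t})$.

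Next, any (relaxed) geodesic $t$-spanner $\G$ on this site set satisfies, for every pair $p,q$, that $d_\G(p,q) \leq t \cdot d(p,q) \leq (t+1/2)\cdot d(p,q)$, so $\G$ is in particular a $(t+1/2)$-spanner. Hence the lower bound from the theorem applies to $\G$, yielding complexity $\Omega(mn^{1/t})$, which is exactly the claimed bound.

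There is no real obstacle here; the only thing to be slightly careful about is that Theorem~\ref{thm:general_lower_bound} requires its spanning-ratio parameter to have the form $t'-\varepsilon$ with $\varepsilon \in (0,1)$ and $t'$ an integer $\geq 2$, and we are implicitly using it at $t'=t+1$. Since $t \geq 2$ in the corollary, we have $t' = t+1 \geq 3 \geq 2$, so the hypothesis is satisfied. The same construction and the same argument work verbatim for the relaxed geodesic variant, because Theorem~\ref{thm:general_lower_bound} is already proved for relaxed geodesic spanners.
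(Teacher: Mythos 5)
Your proposal is correct and matches the paper's intent exactly: the corollary is stated without proof precisely because it follows from Theorem~\ref{thm:general_lower_bound} by the substitution $t \mapsto t+1$ (with any fixed $\varepsilon \in (0,1)$), together with the observation that a $t$-spanner is in particular a $(t+1-\varepsilon)$-spanner. Your check that the hypotheses of the theorem remain satisfied for $t' = t+1$ is the only point requiring care, and you handled it.
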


\section{Conclusion}
In this paper, we have taken a first look at a new measure of compactness of geodesic spanners, namely the spanner complexity. We presented both constructions for low complexity spanners, as well as lower bounds on the complexity for a given spanning ratio. A clear direction for future work lies in closing the gap between the upper and lower bounds. This gap is much larger for a polygonal domain, as our spanner of complexity $O(mn^{1/k} + n \log^2n)$ has a spanning ratio of $6k$, while the simple polygon spanner of similar complexity has a spanning ratio of only $2k + \eps$. This difference lies in the application of the refinement suggested by Abam, de Berg, and Seraji~\cite{SpannerPolyhedralTerrain}. For a simple polygon, we managed to adapt their approach to reduce the spanning ratio such that the complexity of the spanner is increased by only a constant factor. However, it does not seem straightforward to apply the same refinement in a polygonal domain while retaining the low complexity property. 

For a polygonal domain, Abam~\etal~\cite{SpannerPolygonalDomain} suggest a different construction from our approach that results in a spanner of size dependent on the number of holes in $P$. Our techniques to construct a low complexity spanner can also be applied to this construction. This results in a $6k$-spanner of size~$O(\sqrt{h}n\log^2n)$ and complexity $O(\sqrt{h} ( m n^{1/k} + n\log^2 n))$. As the separators used here are always line segments instead of general shortest paths, it would be interesting to see if it is possibly easier to apply the refinement of~\cite{SpannerPolyhedralTerrain} in this setting.

\bibliography{bibliography}

\newpage
\appendix

\section{Alternative proof for the existence of a balanced separator}\label{ap:separator}

In an earlier version of this paper~\cite{complexity_spanners}, we
gave a constructive proof for the existence of a balanced sp-separator
using a similar approach to Abam, de Berg, and
Seraji~\cite{SpannerPolyhedralTerrain}. As we consider a polygonal
domain instead of a polyhedral terrain, we need to adapt the approach
slightly to fit this setting. Moreover, we correct a technical issue
of~\cite{SpannerPolyhedralTerrain}. In this section we first give our
proof for the existence of a separator in a polygonal domain, and then
comment how to fix the technical issue
in~\cite{SpannerPolyhedralTerrain} using our approach.

\subparagraph{Existence of a separator.} Let $P$ be a polygonal domain
with $h$ holes. We start by trying to find a 1-separator from an arbitrary fixed point $u \in \partial \P$ to a point $v \in \partial \P$. The point $v$ is moved clockwise along the boundary of $\P$, starting at $u$, to find a separator satisfying our constraints, i.e. $\P_\ell$ contains between $2n/9$ and $2n/3$ sites of $S$. This way, we either find a balanced 1-separator, or jump over at least $2n/3 - 2n/9 = 4n/9$ sites at a point $v$. In this case, the region bounded by two shortest paths between $u$ and $v$ contains at least $4n/9$ sites. We then try to find a 2- or 3-separator contained within this region.

To find a 2- or 3-separator, we construct a sequence of 3-separators $\Delta_0 \supset \Delta_1 \supset ... \supset \Delta_k$, where either the final 3-separator $\Delta_k$ contains between $2n/9$ and $2n/3$ sites, or we find a balanced 2-separator within $\Delta_k$. During the construction, the invariant that $|\Delta_i \cap S| \geq 2n/9$ is maintained. The first 3-separator, $\Delta_0$, has as corners the two points $u$ and $v$ on $\partial \P$ from before, and a point $w$, which is an arbitrary point on one of the two shortest paths connecting $u$ and $v$. So, $\Delta_0$ is exactly the region bounded by the two shortest path connecting $u$ and $v$ that contains at least $4n/9$ sites.

Whenever $\Delta_i$ contains at most $2n/3$ sites, we are done. If not, we find $\Delta_{i+1}$ as follows. If $\Delta_i$ is a degenerate 3-separator, we simply select a subpath of the shortest path that contains $2n/3$ sites. Similarly, if at least $2n/9$ sites lie on any of the bounding shortest paths. If neither of these cases holds, then there are at least $2n/3 + 1 - 3(2n/9 -1) = 4$ sites in the interior of $\Delta_i$. In the following, we find a 3-separator $\Delta_{i+1}$ with either $|\Delta_{i+1} \cap S| < |\Delta_i \cap S|$, or $|\Int(\Delta_{i+1}) \cap S| < |\Int(\Delta_i) \cap S|$, or a valid 2-separator contained within $\Delta_i$. Because each $\Delta_i$ contains either fewer sites, or fewer sites in its interior, than its predecessor, we eventually find a 3-separator with the desired number of sites, or we end up in one of the easy degenerate cases. In the following description, we drop the subscript $i$ for ease of description.

We define a \emph{good} path to be a path from a point $p \in \Delta$ to a corner $u$ to be a shortest path $\pi(p,u)$ that is fully contained within $\Delta$. To make sure our definition is also correct when one of the corners lies inside $\Delta$, we do not allow the path to cross $\pi(v,w)$. See Figure~\ref{fig:good_path} for an illustration. Essentially, we see the coinciding part of the shortest paths as having an infinitesimally small separation between them. The following provides a formal definition of a good path.

\begin{definition}\label{def:good_path}
A shortest path $\pi(p,u)$ from a point $p$ that lies in a 3-separator $\Delta$ to a corner $u$ of $\Delta$, is a \emph{good path} if it is fully contained within $\Delta$, and $\pi(p,u)$ is a shortest path in the polygonal domain $\P \cap \Delta$, where the outer polygon is $\Delta$.
\end{definition}

\begin{figure}
    \centering
    \includegraphics{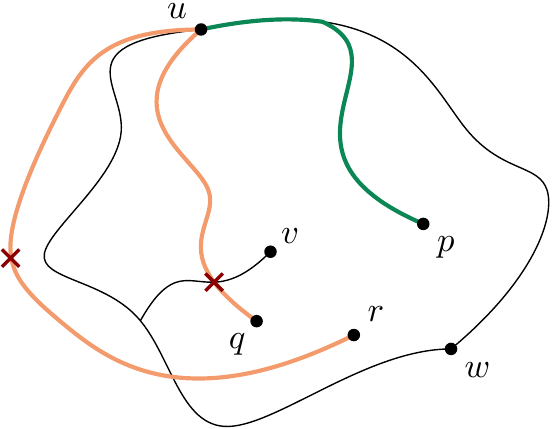}
    \caption{Of the shortest paths $\pi(p,u)$, $\pi(q,u)$, and $\pi(r,u)$ only $\pi(p,u)$ is a good path.}
    \label{fig:good_path}
\end{figure}

We consider the closed region $Z_u \subset \Delta$ such that for every $p \in Z_u$ there is a good path to $u$. Because for any $p'$ on a good path $\pi(p,u)$ the path $\pi(p',u)$ is also a good path, $Z_u$ is connected. $Z_u$ is bounded by $\pi(u,v)$, $\pi(u,w)$, $\partial H$, and a curve $B_u$ that connects $v$ to $w$, see Figure~\ref{fig:Good_path_v_w}. Because we do not consider $\partial H$ to be part of $B_u$, this is a possibly disconnected curve that consists of edges from the shortest path map of $u$ and $\Delta$. The shortest path map of a point $p$ in a polygonal domain $\P$ partitions the free space into maximal regions, such that for any two points in the same region the shortest paths from $u$ to both points use the same vertices of $\P$~\cite{NewSPMAlgo}. We call the curves of the shortest path map for which there are two topologically distinct paths from $p$ to any point on the curve \emph{walls}. We prove the following lemma of Abam \etal~\cite{SpannerPolyhedralTerrain} for our definition of a good path. 

\begin{lemma}[Lemma 3.2 of~\cite{SpannerPolyhedralTerrain}]
For any point $z \in B_u$, there are good paths $\pi(z, u)$, $\pi (z, v)$, and $\pi (z, w)$ to the three corners of $\Delta$.
\end{lemma}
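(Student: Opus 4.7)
The good path to $u$ is immediate: by construction $Z_u$ is closed (its boundary pieces are finite unions of SPM walls and portions of $\pi(u,v)$, $\pi(u,w)$, $\partial H$, all closed), so $z \in B_u \subseteq Z_u$ and the good path to $u$ exists by the definition of $Z_u$. The two nontrivial claims are the existence of good paths $\pi(z,v)$ and $\pi(z,w)$; by symmetry it suffices to handle~$v$.

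For the path to $v$, my plan is to construct it as the shortest path from $z$ to $v$ in the restricted domain $\P \cap \Delta$, and then show that this path has the same length as the unrestricted shortest path $\pi_\P(z,v)$, which proves it is a good path. Equivalently, I will argue that some shortest path from $z$ to $v$ in $\P$ is fully contained in $\Delta$. The approach is by a rerouting argument: pick any shortest path $\pi_\P(z,v)$; if it already lies in $\Delta$ we are done, otherwise let $y_1$ be the first point where it exits $\Delta$ and $y_2$ the next point where it re-enters. Both $y_1$ and $y_2$ lie on $\partial \Delta$, which is made of the three shortest paths $\pi(u,v)$, $\pi(u,w)$, $\pi(v,w)$ together with parts of $\partial \P$. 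In each combination of which boundary pieces contain $y_1$ and $y_2$, I will replace the outside excursion by a piece of $\partial \Delta$ of equal length, using subpath optimality of the shortest paths that bound $\Delta$: e.g.\ if $y_1, y_2$ both lie on $\pi(u,v)$, the portion of $\pi(u,v)$ between them is itself a shortest $y_1$--$y_2$ path, and substituting it gives an equally short $z$-to-$v$ path that avoids that excursion. Iterating eliminates all excursions.

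The step I expect to be the main obstacle is the case where $y_1$ and $y_2$ sit on two different bounding paths, say $y_1 \in \pi(u,v)$ and $y_2 \in \pi(u,w)$: the $\partial \Delta$-route from $y_1$ to $y_2$ via $u$ may be strictly longer than the direct subpath of $\pi_\P(z,v)$, so the naive substitution fails. To handle this, I will use precisely the hypothesis $z \in B_u$. Because $z$ admits a good path to $u$, the shortest distance from $u$ to $z$ inside $\Delta$ equals $d(u,z)$, and combining this with the triangle inequality forces any excursion of $\pi_\P(z,v)$ that would straddle both $\pi(u,v)$ and $\pi(u,w)$ to pass arbitrarily close to $u$; replacing the relevant pieces using the good path $\pi(z,u)$ together with the subpath of $\pi(u,v)$ from $u$ to $v$ then yields a $z$-to-$v$ path of equal length inside $\Delta$. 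The remaining cases (where $y_1$ or $y_2$ lies on $\pi(v,w)$, or on $\partial H$) are analogous, using subpath optimality of $\pi(v,w)$ and the fact that a shortest path never properly crosses a hole boundary it runs along.

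Once we have the good path $\pi(z,v)$, the same argument with the roles of $v$ and $w$ interchanged yields the good path $\pi(z,w)$, completing the proof. The core technical content is therefore the rerouting lemma together with the ``pass through $u$'' argument that exploits $z \in B_u$, while the rest is a case analysis driven by the structure of $\partial \Delta$.
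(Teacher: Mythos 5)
Your reduction to the case where a shortest path $\pi(z,v)$ first leaves $\Delta$ through one of the bounding shortest paths is sound, and the easy cases are handled correctly: if the first exit point $y_1$ lies on $\pi(u,v)$ or on $\pi(v,w)$, truncating there and following that bounding path to $v$ already yields a shortest path inside $\Delta$, so no iteration over excursions is even needed. The genuine gap is in the one case that matters, where the path first exits through $\pi(u,w)$. You claim that the existence of a good path from $z$ to $u$, combined with the triangle inequality, forces such an excursion to pass arbitrarily close to $u$, so that $d(z,v)=d(z,u)+d(u,v)$ and the route ``good path to $u$, then $\pi(u,v)$'' has the correct length. Neither step holds: the excursion is itself a shortest path between its endpoints and can cut across the exterior of $\Delta$ far from $u$ (the triangle inequality only gives $d(y_1,y_2)\le d(y_1,u)+d(u,y_2)$, which points the wrong way), and in general $u$ does not lie on or near any shortest $z$--$v$ path, so your replacement path is strictly longer than $d(z,v)$, hence not a shortest path and not a good path. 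A telling symptom is that your argument only ever uses the existence of a good path from $z$ to $u$ --- a property enjoyed by \emph{every} point of $Z_u$, not just of $B_u$ --- whereas the statement fails for general points of $Z_u$ (a point deep inside $Z_u$ near $\pi(u,w)$ need not admit a good path to $v$). Any correct proof must exploit that $z$ lies on the boundary curve $B_u$ in a stronger way.

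The paper closes precisely this case with a topological crossing argument that uses the \emph{second}, non-good shortest path from $z$ to $u$, whose existence is what characterizes points of $B_u$ (it is built from walls of the shortest path map of $u$). That non-good path must leave $\Delta$ through $\pi(v,w)$ and go around $v$ or $w$; any shortest $z$--$v$ path leaving $\Delta$ through $\pi(u,w)$ would then have to cross it at a point other than $z$, contradicting the fact that two shortest paths emanating from a common point cannot properly cross. If you wish to keep your rerouting framework, this crossing argument is the missing ingredient for the $y_1\in\pi(u,w)$ case; a purely length-based substitution cannot be made to work there.
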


\begin{proof}
\begin{figure}
    \centering
    \includegraphics{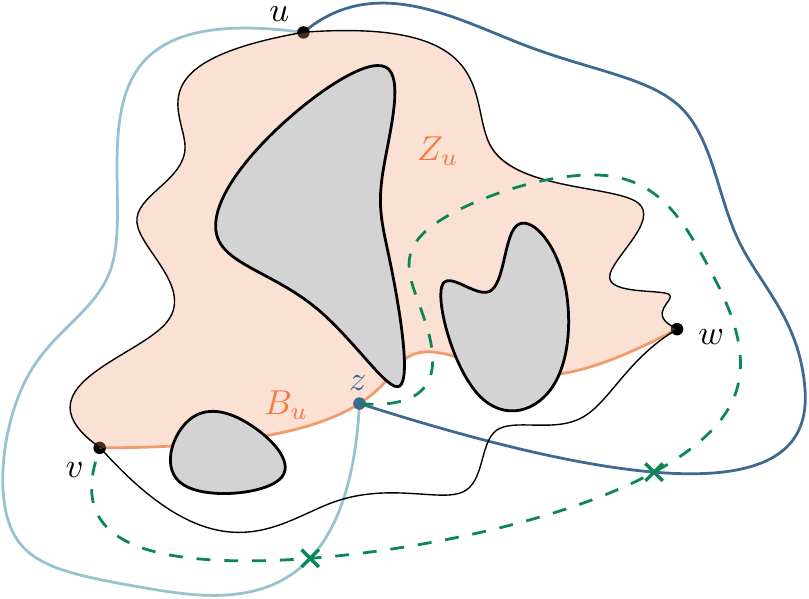}
    \caption{Any path from $z$ to $v$ that exits $\Delta$ through $\pi(u,w)$ must intersect a path from $z$ to $u$.}
    \label{fig:Good_path_v_w}
\end{figure}
By definition of $B_u$, there is a good path from $z$ to $u$. When $z$ lies on $\pi(v,w)$, the subpaths of $\pi(v,w)$ from $z$ to $v$ and to $w$ are good paths. So, assume $z \notin \pi(v,w)$. We will prove by contradiction that there is a good path from $z$ to $v$, and by symmetry from $z$ to $w$. Suppose there is no good path from $z$ to $v$. Because $z$ is on $B_u$, there is also a shortest path from $z$ to $u$ that is not a good path, so it is not contained within $\Delta$, or it crosses $\Delta$. This path $\pi(z,u)$ must exit (or cross) $\Delta$ through $\pi(v,w)$, because otherwise the path could simply continue along $\pi(v,u)$ or $\pi(w,u)$ and stay within $\Delta$. Similarly, $\pi(z,v)$ must exit $\Delta$ through $\pi(u,w)$. The path $\pi(z,u)$ either goes around $v$ or $w$, as shown in Figure~\ref{fig:Good_path_v_w}. In both cases, any path to $v$ that starts at $z$ and exits (or crosses) through $\pi(u,w)$ and does not intersect $\pi(u,w)$ again, must intersect the path $\pi(z,u)$. As these shortest paths start at the same point, this is a contradiction with the fact that two shortest paths can only cross once.
\end{proof}

\begin{figure}
    \centering
    \includegraphics{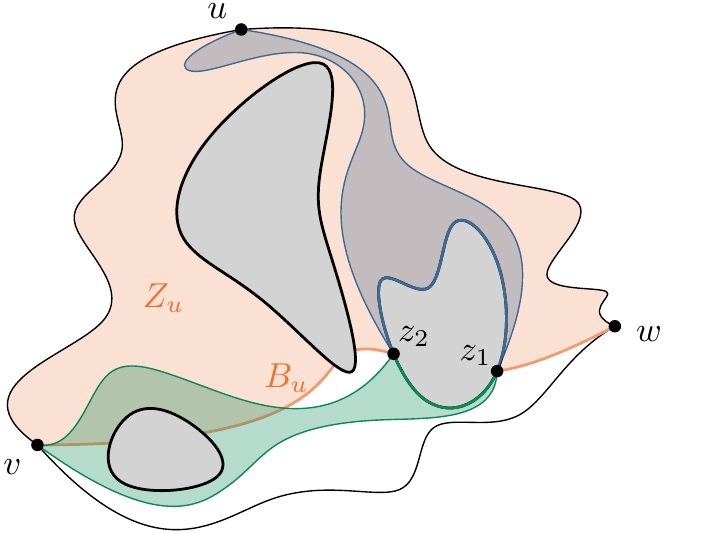}
    \caption{The region $Z_u$ where all points have good paths to $u$. In blue and green the 2-separator with corners $z_1,u,z_2$ and $z_1,v,z_2$, respectively.}
    \label{fig:Zu_region}
\end{figure}

This lemma implies that for any point $z$ on $B_u$, the 3-separator with corners $u$, $v$, and $z$ is contained within $\Delta$. We use this observation by moving a point $z$ along $B_u$ from $w$ to $v$. At the start, the 3-separator defined by $u$, $v$, and $z$, which we denote by $T_z$, is equal to $\Delta$. At the end, $T_z$ is equal to the degenerate 3-separator $\pi(u,v)$. Note that, in contrast to the situation of Abam \etal~\cite{SpannerPolyhedralTerrain} for a terrain, $B_u$ is not necessarily continuous, as it can be interrupted by holes. But, if $B_u$ intersects a hole, it intersects this hole exactly twice, because $Z_u$ is connected. A directed walk along $B_u$, jumping at holes, is thus still well-defined. We walk a point $z$ along $B_u$ until one of the following happens: (1) $|S \cap T_z|$ or $|S \cap \Int(T_z)|$ decreases, or (2) $z$ encounters a hole.

In case (1), either $T_z$ contains at least $2n/9$ sites, then we set $\Delta_{i+1} := T_{z}$, or we jump over at least $2n/3-2n/9 = 4n/9$ sites. This can happen because the shortest path to $u$, to $v$, or both jump over a hole. We assume the path to $u$ jumps, the approach for when the path to $v$ jumps is symmetric. The 3-separator $u,z,w'$, with $w'$ on one of the two $\pi(z,u)$, could contain the same number of sites, in the closure and in its interior, as $\Delta$. Therefore, we select an arbitrary site $s \in S$ that lies in the region bounded by the two shortest paths. We then consider the two 3-separators with $u$, $z$, and $s$ as corners. As $s$ now lies on the boundary of the 3-separators, both of these separators contain less sites in their interior than $\Delta$. And, as they partition a region that contains at least $4n/9$ sites, one of them contains at least $2n/9$ sites. We set $\Delta_{i+1}$ to be this 3-separator.

In case (2), let $z_1$ denote the point we encounter the hole $H_i$, and $z_2$ be the point where $B_u$ exits $H_i$, see Figure~\ref{fig:Zu_region}. Whenever $|T_{z_2} \cap S| \geq 2n/9$, we simply continue the walk at $z_2$, until we again end up in one of the two cases. If not, then there are at least $2n/3 - 2n/9 = 4n/9$ sites in $T_{z_1} \setminus T_{z_2}$, because $T_{z_2} \subset T_{z_1}$. Note that $T_{z_1} \setminus T_{z_2}$ is exactly the union of the two 2-separators  $z_1,u,z_2$ and $z_1,v,z_2$. This means that either the 2-separator $z_1,u,z_2$ or $z_1,v,z_2$ contains at least $2n/9$ sites. Suppose that the separator including $u$ contains at least $2n/9$ sites. We then try to find a balanced 2-separator $z_1,u,z^*$ with $z^* \in \partial H_i$ that is contained within the 2-separator $z_1,v,z_2$. For each point $z^* \in \partial H_i$ on the ``$u$-side'' of $H_i$ (blue in Figure~\ref{fig:Zu_region}), $\pi(z^*,u)$ is contained within the 2-separator $z_1,u,z_2$, as it cannot cross $\pi(z_1,u)$ and $\pi(z_2,u)$. As we did to find a 1-separator, we walk $z^*$ along $\partial H_i$ from $z_1$ to $z_2$ on the ``$u$-side''. As before, we now find a balanced 2-separator, and we are done, or we jump over at least $2n/3 - 2n/9 = 4n/9$ sites. Again, this region does not necessarily contain less sites than $\Delta$, thus we continue as before by selecting a site $s$ in this region as the third corner of $\Delta_{i+1}$. Similarly, when the 2-separator with $v$ as a corner contains more than $2n/9$ sites, we walk $z^*$ along the other side of $H_i$ and consider the 2-separator $z_1,v,z^*$.

\subparagraph{Correction of a technical issue in~\cite{SpannerPolyhedralTerrain}.}
Abam, de Berg, and Seraji~\cite{SpannerPolyhedralTerrain} present an approach to find a balanced sp-separator on a terrain $\T$. They define an sp-separator as either a 1-separator or a 3-separator, where the three shortest paths are disjoint except for their mutual endpoints. However, on a polyhedral terrain these paths might also not be disjoint. This can, for example, happen when choosing a site $s$ in the area bounded by two shortest paths as a new corner.  Figure~\ref{fig:not_disjoint_paths} illustrates this for a polygonal domain, and can be generalized to a terrain by making the holes into very high walls. Their subsequent definition of a good path, which is simply a shortest path contained within $\Delta$, would then imply that the entire interior of the box, including the blue region in Figure~\ref{fig:not_disjoint_paths_Zu}, is contained in $Z_u$. The question is then how we move $z$ from $w$ to $v$. If we would move $z$ along $\pi(w,v)$ at the start, the site $w$ will immediately enter the interior of the 3-separator $u,v,z$, see Figure~\ref{fig:not_disjoint_paths_Tz}. The number of sites in the interior has thus actually \emph{increased} instead of decreased. In a polyhedral terrain, we can define a good path similar to Definition~\ref{def:good_path}, but then consider the path $\pi(p,u)$ in the polyhedral terrain $\T \cap \Delta$. Using our renewed definition, the proof of Abam \etal~\cite{SpannerPolyhedralTerrain} also holds in the case of coinciding edges on a terrain.

\begin{figure}
    \centering
    \begin{subfigure}[b]{0.47\textwidth}
        \includegraphics[page=1]{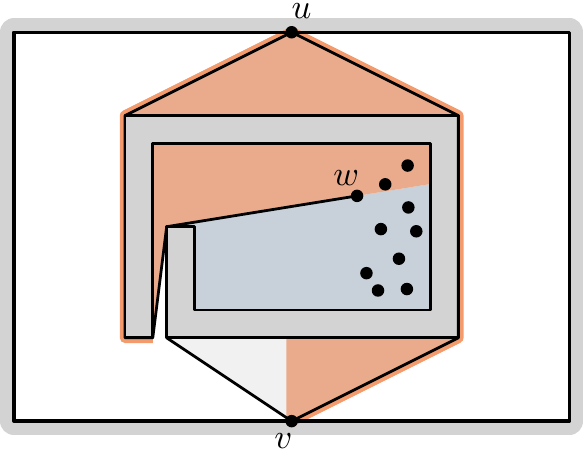}
    \caption{The region $Z_u$ according to our definition of a good path in orange. The region $Z_u$ according to~\cite{SpannerPolyhedralTerrain} includes the blue region.}
    \label{fig:not_disjoint_paths_Zu}
    \end{subfigure}
    \hspace{0.04\textwidth}
    \begin{subfigure}[b]{0.47\textwidth}
        \includegraphics[page=2]{Problem_Abam_disjoint_paths}
    \caption{The 3-separator $T_z$ has more sites in its interior than the original 3-separator $\Delta$.\\}
    \label{fig:not_disjoint_paths_Tz}
    \end{subfigure}
    \caption{A polygonal domain where the shortest paths in the separator $\Delta$ with corners $u,v,w$ are not disjoint.}
    \label{fig:not_disjoint_paths}
\end{figure}
\end{document}